\definecolor{C2}{RGB}{251, 77, 61}
\newtheorem{theorem}{Theorem}
\newtheorem{definition}{Definition}
\newtheorem{proposition}{Proposition}
\newtheorem{lemma}{Lemma}
\newtheorem{conjecture}{Conjecture}
\newcommand\myeqA{\stackrel{\mathclap{\normalfont\mbox{$\Lambda_{1}$}}}{\longrightarrow}}
\newcommand\myeqB{\stackrel{\mathclap{\normalfont\mbox{$\Lambda_{2}$}}}{\longrightarrow}}
\newcommand\myeqC{\stackrel{\mathclap{\normalfont\mbox{$\Lambda_{3}$}}}{\longrightarrow}}
\newcommand\myeqN{\stackrel{\mathclap{\normalfont\mbox{$\Lambda_{n-1}$}}}{\longrightarrow}}
\newcommand\myeqfive{\stackrel{\mathclap{\normalfont\mbox{$\Lambda_{5}$}}}{\longrightarrow}}
\newcommand\myeqseven{\stackrel{\mathclap{\normalfont\mbox{$\Lambda_{7}$}}}{\longrightarrow}}
\begin{document}

\title{Quantum Markov monogamy inequalities}

\author{Matheus Capela}
\email{matheus@qpequi.com}
\affiliation{Institute of Physics, Federal University of Goi\'{a}s, 74.690-900, Goi\^{a}nia, Brazil}

\author{Lucas C. C\'{e}leri}
\email{lucas@qpequi.com}
\affiliation{Institute of Physics, Federal University of Goi\'{a}s, 74.690-900, Goi\^{a}nia, Brazil}

\author{Rafael Chaves}
\email{rafael.chaves@ufrn.br}
\affiliation{International Institute of Physics, Federal University of Rio Grande do Norte, 59078-970, P. O. Box 1613, Natal, Brazil}
\affiliation{School of Science and Technology, Federal University of Rio Grande do Norte, Natal, Brazil}

\author{Kavan Modi}
\email{kavan.modi@monash.edu}
\affiliation{School of Physics and Astronomy, Monash University, Clayton, VIC 3800, Australia}

\date{\today}
\begin{abstract}
Markovianity lies at the heart of communication problems. This in turn makes the information-theoretic characterization of Markov processes worthwhile. Data processing inequalities are ubiquitous in this sense, assigning necessary conditions for all Markov processes. We address here the problem of the information-theoretic analysis of constraints on Markov processes in the quantum regime. We show the existence of a novel class of quantum data processing inequalities called here quantum Markov monogamy inequalities. This new class of necessary conditions on quantum Markov processes is inspired by its counterpart for classical Markov processes, and thus providing a strong link between classical and quantum constraints on Markovianity. We go on to construct a family of multitime quantum Markov monogamy inequalities, based on the process tensor formalism and that exploits multitime correlations. We then show, by means of an explicit example, that the Markov monogamy inequalities can be stronger than the usual quantum data processing inequalities. 
\end{abstract}
\maketitle


\section{Introduction}

Markovianity plays a central role in the theory of classical information. This is enforced by describing the asymptotic encoding-decoding scheme of a memoryless channel used with no feedback by a Markovian stochastic process~\cite{shannon1948mathematical}. This leads to the principal result in information theory, the channel-coding theorem, stating the maximum rate of classical bits reliably communicated by a noisy channel to be equal to the mutual information between the channel's input and output variables maximized over all input probability distributions. Furthermore, the mutual information between variables of any Markovian stochastic process is constrained according to the so-called classical data processing inequalities (CDPIs). Indeed, a CDPI is directly used in the proof of the converse statement of the channel-coding theorem, stating that any encoding-decoding of a noisy channel with communication rate superior to its capacity is not reliable~\cite{yeung2008information}. While the classical data processing theorems are widely studied~\cite{arnold2004data, merhav2012data, kang2010new, du2017strong, kamath2015strong, raginsky2013logarithmic, merhav2011data, anantharam2014hypercontractivity}, a wider range of constraints on classical Markovian processes, the so-called monogamy inequalities, were only discovered recently~\cite{capela2020monogamy} and conjectured to hold for quantum processes as well. 

With rising interest in quantum information theory, classical data processing inequalities were extended to the regime of quantum processes~\cite{schumacher1996quantum}, setting an appropriate approach to define new constraints on quantum Markov processes. Indeed, intense research has been undertaken in this direction since then~\cite{buscemi2014complete, buscemi2016equivalence, buscemi2017comparison, buscemi2015reverse, hayden2004structure, rivas2014quantum, ferrie2014data, sai, turkmen2017data}. Naturally, the development of further constraints in terms of information inequalities imposed by quantum Markovian processes is of great interest in information theory, and it is the main problem addressed in this paper.

Here, we prove the quantum Markov monogamy inequality (QMMI) constraining any four-time-steps Markov process. The QMMI is the quantum counterpart of the classical Markov monogamy inequality (CMMI)~\cite{capela2020monogamy}, and also is the main result of this study. The QMMIs valid for six- and eight-time-steps Markov processes are also provided. This leads to a conjecture on the Markov monogamy inequalities for arbitrarily long quantum Markov processes. Furthermore, the results presented here also further enforce the connection between classical and quantum conditions on Markovianity. Similarly to the case of classical stochastic processes studied in~\cite{capela2020monogamy}, we apply the information inequalities to the problem of witnessing non-Markovianity in a quantum process. By considering a concrete example, we show that there are quantum non-Markov processes that can be witnessed by a QMMI, while not violating any quantum data processing inequalities (QDPI) from~\cite{schumacher1996quantum}. Finally, we construct a larger set of QMMIs that account for multitime quantum correlations using the process tensor (or process matrix) formalism~\cite{pollock2018non, pollock2018operational, Costa2016, Milz2020kolmogorovextension, PRXQuantum.2.030201}.

The paper is organized as follows. In Sec.~\ref{cdp} we review the classical data processing theorems, and provide the generalized version of the classical Markov monogamy inequalities. In Sec.~\ref{secqdp} we first review the quantum data processing inequality from~\cite{schumacher1996quantum}. Then we present the main result of this paper, the Markov monogamy inequality of a four-time-step quantum Markov process. Furthermore, we provide the conjecture on the general form of quantum Markov monogamy inequalities. Sec.~\ref{processtensoraproach} deals with the extension of the quantum data processing theorems, in particular the Markov monogamy inequalities, to the process tensor formalism. Finally, in Sec.~\ref{conclusion} we make our final conclusions and discuss some open problems.

\section{Classical Markov processes and information inequalities} \label{cdp}

Classically, any discrete stochastic process $\left\{X_1, \dots, X_n\right\}$ is described by a probability distribution 
\begin{gather}\label{eq:csp}
    p(x_1,\dots,x_n). 
\end{gather}
A particularly important class of processes are those called Markovian, for which the probability that the random variable $X_{i}$ takes a value $x_{i}$ at time $t_{i}$ is uniquely determined, and not affected by the possible values of $X$ at previous times to $t_{i-1}$. Mathematically, any Markovian process must have conditional probability distributions satisfying
\begin{equation}
\label{markov}
p(x_{i}\vert x_{i-1},\dots,x_{1})=p(x_{i} \vert x_{i-1}) \quad \forall i.
\end{equation}

In practice, however, it is often the case that instead of analysing the probability distribution, one is rather interested in investigating its Shannon entropy, a fundamental building block in information theory~\cite{yeung2008information}, defined for a random variable (or set of variables) $X$ as
\begin{equation}
\label{eq:shannon}
H(X) \coloneqq -\sum_{x} p(x)\log p(x),
\end{equation}
where the sum is taken over the support of $X$. Entropically, the Markov condition~\eqref{markov} is expressed as
\begin{equation}
\label{markov_ent}
H(X_i \vert X_{i-1},\dots,X_1 )= H(X_i \vert X_{i-1})\quad \forall i,
\end{equation} 
which in turn implies the paradigmatic data processing inequalities
\begin{equation}
I(X_r:X_s) \geq I(X_i:X_j) \quad {\rm with} \quad i\leq r < s \leq j.
\end{equation}
Here, $I(X_i:X_j) \coloneqq H(X_i)+H(X_j) -H(X_i:X_j)$ is the mutual information between variables $X_i$ and $X_j$.

For the simplest Markov chain with $n=3$, the only entropic constraints implied by Markovianity are the data processing inequalities given by~\cite{capela2020monogamy}
\begin{eqnarray*}
I(X_1:X_2) \geq I(X_1:X_3), \ I(X_2:X_3) \geq I(X_1:X_3),
\end{eqnarray*}
that is, we recover the usual data processing inequalities that hold for a Markov chain.

For $n\geq 4$, however, a new class of inequalities appears, generalizations of data processing called Markov monogamy inequalities, that implies constraints on the mutual information between different pairs of variables along the Markov chain~\cite{capela2020monogamy}. We present below a generalized version of the conjecture in~\cite{capela2020monogamy}, proven to hold for particular cases of $n$.

\begin{conjecture}[Classical Markov monogamy inequalities (CMMI)] \label{conj1}
Consider the Markovian process $X_n \rightarrow \cdots \rightarrow X_1 \rightarrow Y_1 \rightarrow \cdots \rightarrow Y_n$. The variables $X_1$ and $Y_1$ are to be interpreted as input and output of a given channel, respectively. The variables  $X_i$ and $Y_i$, with $i=2,\dots,n$, are interpreted as pre-processed and post-processed variables, respectively. Then for any bijective function $f \colon \{1,\dots,n\} \rightarrow \{1,\dots,n\}$, it holds that
\begin{equation}
    \sum_{i=1}^{n} I(X_i:Y_i) \geq \sum_{i=1}^{n} I(X_i:Y_{f(i)}),
\end{equation}
where $I(X:Y)$ denotes the mutual information of random variables $X$ and $Y$.
\end{conjecture}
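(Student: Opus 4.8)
The plan is to recognize the claimed bound as an \emph{assignment problem}: writing $M_{ij} \coloneqq I(X_i:Y_j)$, the inequality asserts exactly that the identity permutation maximizes $\sum_{i} M_{i,f(i)}$ over all bijections $f$. A classical sufficient condition for the identity (the ``sorted'' assignment) to be optimal is that $M$ be supermodular, i.e.\ that every $2\times 2$ minor obey the anti-Monge inequality
\begin{equation}\label{eq:monge}
M_{ij} + M_{i'j'} \geq M_{ij'} + M_{i'j} \qquad (i<i',\ j<j').
\end{equation}
Granting \eqref{eq:monge}, I would close the argument with a standard exchange (bubble-sort) step: if $f\neq \mathrm{id}$ then it has an inversion $i<i'$ with $f(i)>f(i')$; swapping the two images does not decrease $\sum_i M_{i,f(i)}$ by \eqref{eq:monge} while strictly lowering the number of inversions, so iterating drives $f$ to the identity without ever decreasing the objective. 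Hence $\sum_i M_{ii}\geq \sum_i M_{i,f(i)}$ for every $f$, which is the assertion.

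The whole problem therefore reduces to establishing \eqref{eq:monge}. Here I would use that marginalizing a Markov chain over intermediate variables leaves a Markov chain on the retained variables, \emph{in the induced order}. Since along $X_n\to\cdots\to X_1\to Y_1\to\cdots\to Y_n$ the four variables appearing in \eqref{eq:monge} occur in the order $X_{i'},X_i,Y_j,Y_{j'}$ (because $i<i'$ places $X_{i'}$ earlier and $j<j'$ places $Y_j$ earlier), they form a Markov chain $A\to B\to C\to D$ with $A=X_{i'}$, $B=X_i$, $C=Y_j$, $D=Y_{j'}$. In this notation \eqref{eq:monge} becomes $I(B:C)+I(A:D)\geq I(B:D)+I(A:C)$.

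To prove this last inequality I would pass to conditional mutual informations. Using the Markov relations $B\perp D\mid C$ and $A\perp D\mid C$ one gets the chain-rule identities $I(B:C)-I(B:D)=I(B:C\mid D)$ and $I(A:C)-I(A:D)=I(A:C\mid D)$, so that the desired bound is equivalent to the \emph{conditional} data-processing inequality
\begin{equation}\label{eq:condDPI}
I(B:C\mid D)\geq I(A:C\mid D).
\end{equation}
This in turn follows once one checks that, conditioned on $D$, the triple still forms a Markov chain $A\to B\to C$, i.e.\ $A\perp C\mid (B,D)$; a short computation from the factorization $p(a,b,c,d)=p(a)\,p(b\vert a)\,p(c\vert b)\,p(d\vert c)$, together with $A\perp D\mid B$, yields $p(a,c\vert b,d)=p(a\vert b,d)\,p(c\vert b,d)$, and \eqref{eq:condDPI} is then the conditional DPI applied to this conditional chain.

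I expect the main obstacle to be exactly this last point: conditioning on $D$, which sits \emph{downstream} of $C$, could a priori destroy the Markov structure, and one must verify carefully that it does not. The combinatorial reduction and the chain-rule bookkeeping are routine; the substance is that \eqref{eq:monge} is required to hold for \emph{all} pairs $i<i'$, $j<j'$ (not merely adjacent ones), which is what makes the exchange argument legitimate. If the verification of the conditional Markov property goes through uniformly, this route would target arbitrary chain length rather than only special values of $n$.
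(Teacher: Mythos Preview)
Your argument is correct and, notably, goes well beyond what the paper itself establishes. In the paper this statement is left as a \emph{conjecture}: it is only verified for $n\le 4$ by a linear-programming check (ITIP), with no structural proof offered. Your proposal, by contrast, proves it for all $n$.

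The key observation you make---recasting the inequality as an assignment problem and reducing it to the $2\times 2$ supermodularity condition~\eqref{eq:monge}---is exactly the right move. A few remarks confirming the points you flagged as potential obstacles:
\begin{itemize}
\item The conditional Markov property $A\perp C\mid(B,D)$ does hold. From the factorization $p(a,b,c,d)=p(a)p(b\vert a)p(c\vert b)p(d\vert c)$ one computes $p(a,c\vert b,d)=p(a\vert b)\,p(c\vert b,d)$, and since $A\perp D\mid B$ in the chain, $p(a\vert b)=p(a\vert b,d)$; so conditioning on the downstream variable $D$ does not destroy the $A\to B\to C$ structure. Your conditional DPI step is therefore sound.
\item The exchange argument terminates: swapping the images at any inversion $i<i'$ with $f(i)>f(i')$ strictly decreases the total inversion count (pairs $(k,i)$ and $(k,i')$ with $k\notin[i,i']$ contribute zero net change, intermediate $k$ contribute nonpositively, and the pair $(i,i')$ itself drops by one). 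So iterating reaches the identity in finitely many steps.
\item Your inequality~\eqref{eq:monge}, once translated back, is precisely the $n=2$ monogamy inequality~\eqref{M4ineq}, which the paper notes already appears in Cover--Thomas. Thus your argument can be read as: the four-variable case implies the general case via the Monge/assignment structure.
\end{itemize}

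In short, the paper treats the classical statement as open beyond small $n$ and focuses its analytic effort on the quantum analogues (Theorems~\ref{M4thm},~\ref{Monogamy6},~\ref{Monogamy8}), each handled case-by-case via strong subadditivity on environment systems. Your route is more elementary, uniform in $n$, and actually settles the classical conjecture.
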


The conjecture above has been checked for $n$ up to 4, that is, Markov chains with 8 random variables \footnote{Conditions of this type can be tested with a linear program such as ITIP \cite{yeung2016itip}}. For the case of $n=2$, we have the Markov monogamy inequality
\begin{equation}\label{M4ineq}
I(X_1:Y_1)+I(X_2:Y_2)\geq I(X_1:Y_2)+I(X_2:Y_1),
\end{equation}
associated to the bijection $f \colon \{1,2\} \rightarrow \{1,2\}$ for which $f(1)=2$ and $f(2)=1$. The remaining bijection $g \colon \{1,2\} \rightarrow \{1,2\}$ for which $g(1)=1$ and $g(2)=2$ leads to a trivial inequality, nonetheless, a valid one. Note that, independently of the number of random variables of a stochastic process, the CDPIs and CMMI are only necessary conditions for Markovianity.

It is worth noting that the result presented in Eq.~\eqref{M4ineq} appeared in Ref.~\cite{cover2005elements}, while in Ref.~\cite{capela2020monogamy}, we found an extended class of such inequalities. There we further showed that while all of the above inequalities are derived for Markov processes, they also hold for divisible processes that are non-Markovian. We also showed that there are non-Markovian processes that will also satisfy all of the above inequalities, i.e., the inequalities are necessary for any Markov process but not sufficient. Below we will generalize the Markov monogamy inequalities to the quantum case and for that we will employ the coherent information~\cite{schumacher1996quantum}. Differently from the classical case, however, our approach will not rely on the Shannon cone construction, mainly because of the non-negativity of the coherent information and the fact that the classical proof stands on the marginalization of a joint probability distribution which is not available on the quantum scenario.

\section{Quantum Markov processes and information inequalities}\label{secqdp}
Central to our purpose are the quantum Markov processes.
\begin{definition}
A sequence of quantum states $\{\rho_{1}, \cdots, \rho_{n}\}$ is a quantum Markov process with respect to the sequence of quantum channels $\{\Lambda_{1}, \cdots, \Lambda_{n-1}\}$ if the conditions $\rho_{i+1}=\Lambda_{i}(\rho_{i})$, with $i=1,\cdots,n-1$ are satisfied. This situation is denoted by
\begin{gather}\label{eq:qMarkov}
    \rho_{1} \myeqA \rho_{2} \myeqB \rho_{3}\myeqC \dots \myeqN \rho_{n}.
\end{gather}
\end{definition}

A classical channel $p(Y|X)$ transforms the classical state of the system $X$ into the state of the joint input-output system $XY$, that is, we have $p(X,Y)=p(Y|X)p(X)$. On the other hand, quantum processes are defined in a very different way. That is, a quantum channel $\Lambda$ maps the state $\rho_{1}$ of the input quantum quantum system into the state $\rho_{2}=\Lambda(\rho_{1})$ of the output quantum system. Therefore, it is not entirely trivial how one should characterize the temporal correlation in a quantum process. The development of the quantum data processing inequality provided great understanding towards this direction. Importantly, it shows we cannot directly compare classical and quantum information inequalities on equal footing. In particular, as we have stressed above, the derivation of the inequalities follow a complete different route. The classical case being based on the existence of a joint probability distribution followed a marginalization executed via a Fourier-Motzkin elimination. In turn, the quantum case combines states and channels, and uses their properties to arrive at the non-trivial quantum analogue of the information inequalities.

With the goal of deriving conditions on quantum Markov process we define the coherent information~\cite{schumacher1996quantum}. Here, we use Latin letters to denote both a quantum system and its associated Hilbert space. The coherent information of the state $\rho$ of a quantum system $\mathsf{S}_{1}$ with respect to a quantum channel $\Lambda \colon \mathsf{L}(\mathsf{S}_{1}) \rightarrow \mathsf{L}(\mathsf{S}_{2})$ is defined by
\begin{equation} \label{coherentinf}
I_{c}(\rho;\Lambda) \coloneqq H(\Lambda(\rho))-H((\mathrm{id}_{\mathsf{R}}\otimes\Lambda)(\psi)),
\end{equation}
where $\psi \in \mathsf{L}(\mathsf{R} \otimes \mathsf{S}_{1})$ is a any purification of $\rho$, and $H(\rho)$ stands for the von Neumann entropy of the quantum state, defined as
\begin{equation}
H(\rho) \coloneqq -\mathrm{Tr} \left[ \rho \log{\rho} \right ],
\end{equation}
that reduces to the usual Shannon entropy \eqref{eq:shannon} if we employ the spectral decomposition $\rho= \sum_{x}\lambda_x\ket{x}\bra{x}$ of the density operator. We will often denote the von Neumann entropy $H(\rho)$ by $H(\mathsf{S})_{\rho}$, or even by $H(\mathsf{S})$ when it is implicitly known that the quantum system $\mathsf{S}$ is in the state $\rho$. Generally, $\mathbb{1}$ denotes the identity operator, and $\mathrm{id}$ denotes the identity channel.

\subsection{Quantum data processing inequalities}

Importantly, the coherent information replaces the mutual information in the transition from CDPIs to their quantum counterparts QDPIs~\cite{schumacher1996quantum}. For any quantum state $\rho$ of $\mathsf{S}_{1}$, and for any quantum channels $\Lambda_{1} \colon \mathsf{L}(\mathsf{S}_{1}) \rightarrow \mathsf{L}(\mathsf{S}_{2})$ and $\Lambda_{2} \colon \mathsf{L}(\mathsf{S}_{2}) \rightarrow \mathsf{L}(\mathsf{S}_{3})$, it holds that
\begin{equation} \label{qdp}
I_{c}(\rho_{1};\Lambda_{1}) \geq I_{c}(\rho_{1};\Lambda_{2} \circ \Lambda_{1}).
\end{equation}
For completeness, we refer the reader to our Appendix~\ref{CohInftoMutInf} for the proof originally presented in~\cite{schumacher1996quantum}. Similarly to the interpretation in the classical case, the quantum data processing theorem states that coherent information is monotonically decreasing under the action of noisy operations.

Equivalently, instead of coherent information, it is also possible to define the quantum data processing theorem in terms of the quantum mutual information. For that, let $I(\mathsf{A}:\mathsf{B})_{\rho} \coloneqq H(\mathsf{A})+H(\mathsf{B})-H(\mathsf{A},\mathsf{B})$ denotes the mutual information of a bipartite system $\mathsf{A}\otimes\mathsf{B}$ in the state $\rho$, with $\rho_A=\mathrm{Tr}_B[\rho]$ and $\rho_B=\mathrm{Tr}_A[\rho]$. Then, for any quantum state $\rho$ of the bipartite system $\mathsf{A} \otimes \mathsf{B}$ and for any quantum channel $\Lambda \colon \mathsf{L}(\mathsf{B}) \rightarrow \mathsf{L}(\mathsf{C})$, it holds that
\begin{equation}
I(\mathsf{A}:\mathsf{B})_{\rho} \geq I(\mathsf{A}:\mathsf{C})_{(\mathrm{id}_{\mathsf{A}} \otimes \Lambda)(\rho)}.
\end{equation}
So quantum data processing theorem equivalently states that correlations between a bipartite quantum system cannot increase under the action of a local noisy operation. See Ref.~\cite{buscemi2014complete} for a presentation and application of the QDPI taking this form. For the sake of completeness, we present a proof of the equivalence of this two forms of the quantum data processing theorem in Appendix~\ref{CohInftoMutInf}.

Now we move to derive the quantum version of the four-time-steps monogamy inequality presented in Eq.~\eqref{M4ineq}.

\subsection{Quantum Markov monogamy inequality} \label{Monogamy}
In the following we prove a quantum version of the Markov monogamy inequality~\eqref{M4ineq}. This is a crucial step towards extending Conjecture~\ref{conj1}, which is valid for classical variables, also to  quantum Markov processes. As such we consider the four-time-step quantum processes of the form shown in Eq.~\eqref{eq:qMarkov}.

\begin{figure}   
\begin{adjustbox}{width=0.4\textwidth}
\begin{tikzpicture}

\draw[ultra thick,red, fill=red!10]  (0,0) rectangle (4,2);
\draw[ultra thick, CadetBlue, fill=CadetBlue!10] (3,4) rectangle (7,6);
\draw[ultra thick, CadetBlue, fill=CadetBlue!10] (6,8) rectangle (10,10);
\draw[ultra thick, CadetBlue, fill=CadetBlue!10] (9,12) rectangle (13,14);

\draw[ultra thick, cyan, fill=cyan!10]  (5.5,0) rectangle (7.5,2);
\draw[ultra thick, OliveGreen, fill=OliveGreen!10] (8.5,0) rectangle (10.5,2);
\draw[ultra thick, brown, fill=brown!10] (11.5,0) rectangle (13.5,2);

\draw[ultra thick, red]  (3.5,2) -- (3.5,4);\draw[ultra thick, cyan] (3.5,6) -- (3.5,16);
\draw[ultra thick, cyan] (6.5,2) -- (6.5,4); \draw[ultra thick, red]  (6.5,6) -- (6.5,8);\draw[ultra thick, OliveGreen] (6.5,10) -- (6.5,16);
\draw[ultra thick, OliveGreen] (9.5,8) -- (9.5,2); \draw[ultra thick, red]  (9.5,10) -- (9.5,12);
\draw[ultra thick, black] (0.5,2) -- (0.5,16);
\draw[ultra thick, brown] (12.5,12) -- (12.5,2);
\draw[ultra thick, brown] (9.5,14) -- (9.5,16);
\draw[ultra thick, red]  (12.5,14) -- (12.5,16);

\node[scale=2.5] at (2,1) {$\psi$};
\node[scale=2.5] at (6.5,1) {$\varphi_{1}$};
\node[scale=2.5] at (9.5,1) {$\varphi_{2}$};
\node[scale=2.5] at (12.5,1) {$\varphi_{3}$};
\node[scale=2.5] at (5,5) {$U_{1}$};
\node[scale=2.5] at (8,9) {$U_{2}$};
\node[scale=2.5] at (11,13) {$U_{3}$};

\draw[dashed] (0,5) -- (3,5); \draw[dashed] (7,5) -- (13.5,5);
\draw[dashed] (0,9) -- (6,9); \draw[dashed] (10,9) -- (13.5,9);
\draw[dashed] (0,13) -- (9,13); \draw[dashed] (13,13) -- (13.5,13);

\node[scale=2.5] at (1.2,3) {$\mathsf{R}$}; \node[scale=2.5] at (1.2,7) {$\mathsf{R}$};
\node[scale=2.5] at (1.2,11) {$\mathsf{R}$};
\node[scale=2.5] at (1.2,15) {$\mathsf{R}$};
\node[scale=2.5] at (4.2,3) {$\mathsf{S}_{1}$};
\node[scale=2.5] at (7.2,7) {$\mathsf{S}_{2}$};
\node[scale=2.5] at (10.2,11) {$\mathsf{S}_{3}$};
\node[scale=2.5] at (7.2,3) {$\mathsf{F}_{1}$};
\node[scale=2.5] at (4.2,6.9) {$\mathsf{E}_{1}$};
\node[scale=2.5] at (4.2,10.9) {$\mathsf{E}_{1}$};
\node[scale=2.5] at (4.2,15) {$\mathsf{E}_{1}$};
\node[scale=2.5] at (10.2,3) {$\mathsf{F}_{2}$};
\node[scale=2.5] at (10.2,6.9) {$\mathsf{F}_{2}$};
\node[scale=2.5] at (7.2,10.9) {$\mathsf{E}_{2}$};
\node[scale=2.5] at (7.2,15) {$\mathsf{E}_{2}$};
\node[scale=2.5] at (13.2,3) {$\mathsf{F}_{3}$};
\node[scale=2.5] at (13.2,6.9) {$\mathsf{F}_{3}$};
\node[scale=2.5] at (13.2,10.9) {$\mathsf{F}_{3}$};
\node[scale=2.5] at (10.2,15) {$\mathsf{E}_{3}$};
\node[scale=2.5] at (13.2,15) {$\mathsf{S}_{4}$};

\end{tikzpicture}
\end{adjustbox}
    \caption{\textbf{Diagram representing the purified process $\rho_{1} \myeqA \rho_{2} \myeqB \rho_{3} \myeqC \rho_{4}$.} The quantum state $\psi$ is a purification of $\rho_{1}$. Thus, $\rho_{1}$ is obtained from $\psi$ by tracing out the $\mathsf{R}$ system. The unitary operator $U_{i}$ and the pure state $\varphi_{i}$ provides a dilation of the quantum channel $\Lambda_{i}$, with $i=1,2,3,4.$ The remaining quantum states $\rho_{2},\rho_{3},\rho_{4}$ are obtained by acting the unitary operations and tracing out the appropriate reference-environment systems.}
    \label{Mdiagram}
\end{figure}
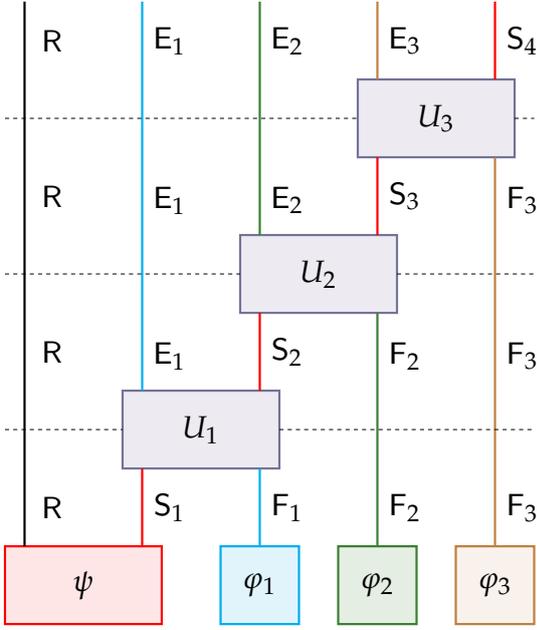

\begin{theorem}[Quantum Markov monogamy inequality (QMMI)] \label{M4thm}
For any quantum state $\rho_{1}$ of a system $\mathsf{S}_{1}$, and for any quantum channels $\Lambda_{1} \colon \mathsf{L}(\mathsf{S}_{1}) \rightarrow \mathsf{L}(\mathsf{S}_{2})$, $\Lambda_{2} \colon \mathsf{L}(\mathsf{S}_{2}) \rightarrow \mathsf{L}(\mathsf{S}_{3})$ and $\Lambda_{3} \colon \mathsf{L}(\mathsf{S}_{3}) \rightarrow \mathsf{L}(\mathsf{S}_{4})$, it holds that
\begin{gather} \label{monogamy}
\begin{split}
&I_{c}(\rho_{1},\Lambda_{3} \circ \Lambda_{2} \circ \Lambda_{1}) + I_{c}(\Lambda_{1}(\rho_{1}),\Lambda_{2}) \\  & \qquad \geq   I_{c}(\rho_{1},\Lambda_{2} \circ \Lambda_{1}) + I_{c}(\Lambda_{1}(\rho_{1}),\Lambda_{3} \circ \Lambda_{2}).
\end{split}
\end{gather}
\end{theorem}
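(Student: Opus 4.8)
The plan is to pass to a Stinespring dilation of the three channels, exactly as depicted in Fig.~\ref{Mdiagram}, and then reduce the claimed inequality to strong subadditivity of the von Neumann entropy. Concretely, I would fix a purification $\psi$ of $\rho_{1}$ on $\mathsf{R}\otimes\mathsf{S}_{1}$ and write each $\Lambda_{i}(\sigma)=\mathrm{Tr}_{\mathsf{E}_{i}}[U_{i}(\sigma\otimes\varphi_{i})U_{i}^{\dagger}]$, where $U_{i}$ acts on $\mathsf{S}_{i}\otimes\mathsf{F}_{i}$ and produces $\mathsf{S}_{i+1}\otimes\mathsf{E}_{i}$. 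Applying $U_{1}$ then $U_{2}$ to $\psi\otimes\varphi_{1}\otimes\varphi_{2}$ yields a pure state $\Phi$ on $\mathsf{R}\,\mathsf{E}_{1}\,\mathsf{E}_{2}\,\mathsf{S}_{3}$, and a further $U_{3}$ yields a pure state $\Psi$ on $\mathsf{R}\,\mathsf{E}_{1}\,\mathsf{E}_{2}\,\mathsf{S}_{4}\,\mathsf{E}_{3}$. Since $U_{3}$ leaves $\mathsf{R}\,\mathsf{E}_{1}\,\mathsf{E}_{2}$ untouched, the reduced states on any subset of $\{\mathsf{E}_{1},\mathsf{E}_{2}\}$ coincide in $\Phi$ and $\Psi$, a fact I will use below.

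Next I would evaluate the four coherent informations against these dilations. Using $\psi$ as the purification of $\rho_{1}$ (reference $\mathsf{R}$) gives $I_{c}(\rho_{1},\Lambda_{3}\circ\Lambda_{2}\circ\Lambda_{1})=H(\mathsf{S}_{4})-H(\mathsf{R},\mathsf{S}_{4})$ and $I_{c}(\rho_{1},\Lambda_{2}\circ\Lambda_{1})=H(\mathsf{S}_{3})-H(\mathsf{R},\mathsf{S}_{3})$. The key modeling choice is to purify $\Lambda_{1}(\rho_{1})$ by the reference–environment system $\mathsf{R}\,\mathsf{E}_{1}$, namely the state on $\mathsf{R}\,\mathsf{E}_{1}\,\mathsf{S}_{2}$ obtained after $U_{1}$; since $I_{c}$ is independent of the choice of purification, this is legitimate, and it gives $I_{c}(\Lambda_{1}(\rho_{1}),\Lambda_{2})=H(\mathsf{S}_{3})-H(\mathsf{R},\mathsf{E}_{1},\mathsf{S}_{3})$ and $I_{c}(\Lambda_{1}(\rho_{1}),\Lambda_{3}\circ\Lambda_{2})=H(\mathsf{S}_{4})-H(\mathsf{R},\mathsf{E}_{1},\mathsf{S}_{4})$. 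Substituting into \eqref{monogamy} and cancelling the common terms $H(\mathsf{S}_{3})$ and $H(\mathsf{S}_{4})$, the claim collapses to the single entropy inequality $H(\mathsf{R},\mathsf{E}_{1},\mathsf{S}_{4})+H(\mathsf{R},\mathsf{S}_{3})\ge H(\mathsf{R},\mathsf{S}_{4})+H(\mathsf{R},\mathsf{E}_{1},\mathsf{S}_{3})$.

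Finally I would invoke purity. Evaluating the $\mathsf{S}_{3}$-entropies in $\Phi$ and the $\mathsf{S}_{4}$-entropies in $\Psi$, complementation of a pure state gives $H(\mathsf{R},\mathsf{S}_{3})=H(\mathsf{E}_{1},\mathsf{E}_{2})$, $H(\mathsf{R},\mathsf{E}_{1},\mathsf{S}_{3})=H(\mathsf{E}_{2})$, $H(\mathsf{R},\mathsf{S}_{4})=H(\mathsf{E}_{1},\mathsf{E}_{2},\mathsf{E}_{3})$ and $H(\mathsf{R},\mathsf{E}_{1},\mathsf{S}_{4})=H(\mathsf{E}_{2},\mathsf{E}_{3})$, where the $\mathsf{E}$-marginals are unambiguous thanks to the matching of reduced states across $\Phi$ and $\Psi$ noted above. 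The target inequality then reads $H(\mathsf{E}_{2},\mathsf{E}_{3})+H(\mathsf{E}_{1},\mathsf{E}_{2})\ge H(\mathsf{E}_{1},\mathsf{E}_{2},\mathsf{E}_{3})+H(\mathsf{E}_{2})$, which is exactly strong subadditivity applied to the tripartition $(\mathsf{E}_{1},\mathsf{E}_{2},\mathsf{E}_{3})$.

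The genuinely substantive step is this last reduction: keeping careful track of which global pure state each reduced entropy lives in, and verifying that the purifying reference for the middle channel can be taken to be $\mathsf{R}\,\mathsf{E}_{1}$, so that the cross terms align into the strong-subadditivity pattern rather than cancelling trivially. Everything upstream is the standard dilation calculus, and once the environment form $H(\mathsf{E}_{2},\mathsf{E}_{3})+H(\mathsf{E}_{1},\mathsf{E}_{2})\ge H(\mathsf{E}_{1},\mathsf{E}_{2},\mathsf{E}_{3})+H(\mathsf{E}_{2})$ is reached there is nothing left to prove. I would also remark that this route is fundamentally different from the classical Shannon-cone derivation, as anticipated in the text, since it exploits purity and strong subadditivity rather than marginalization of a joint distribution.
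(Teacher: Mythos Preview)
Your proposal is correct and follows essentially the same route as the paper: dilate each $\Lambda_i$, purify $\rho_1$ with reference $\mathsf{R}$ and $\Lambda_1(\rho_1)$ with $\mathsf{R}\,\mathsf{E}_1$, express the four coherent informations as in Eqs.~(\ref{B1})--(\ref{B4}), use purity to rewrite the joint entropies in terms of the environments as in Eqs.~(\ref{C1})--(\ref{C4}), and conclude by strong subadditivity on $(\mathsf{E}_1,\mathsf{E}_2,\mathsf{E}_3)$. Your explicit remark that the $\mathsf{E}_1\mathsf{E}_2$-marginals agree before and after $U_3$ is a welcome clarification that the paper leaves implicit.
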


\begin{proof}
Define a dilation of each quantum channel $\Lambda_{i}$ ---with $i=1,2,3$--- by setting a unitary operator $U_{i} \colon \mathsf{S}_{i} \otimes \mathsf{F}_{i} \rightarrow \mathsf{S}_{i+1} \otimes \mathsf{E}_{i}$ and a pure quantum state $\varphi_{i}$ for which
\begin{equation} \label{dilation3}
\Lambda_{i}(\rho)=\Tr_{\mathsf{E}_i}\left[ U_{i} (\rho  \otimes \varphi_{i})U_{i}^{\dagger} \right],   
\end{equation} 
for any operator $\rho_{1}$ in $\mathsf{L}(\mathsf{S}_{i})$. Fig.~\ref{Mdiagram} provides a representation of the process for a purification $\psi$ in $\mathsf{L}(\mathsf{R} \otimes \mathsf{S}_{1})$ of the initial quantum state $\rho_{1}$.

Now, note the definition of the coherent information terms involved in~\eqref{monogamy}, and given by
\begin{align}
I_{c}(\rho_{1};\Lambda_3 \circ \Lambda_2 \circ \Lambda_1) = H(\mathsf{S}_{4}) - H(\mathsf{R},\mathsf{S}_{4});\label{B1}\\
I_{c}(\rho_{2};\Lambda_2) = H(\mathsf{S}_{3})-H(\mathsf{R},\mathsf{E}_{1},\mathsf{S}_{3}); \label{B2} \\
 I_{c}(\rho_{1};\Lambda_2 \circ \Lambda_1) = H(\mathsf{S}_{3})-H(\mathsf{R},\mathsf{S}_{3}); \label{B3} \\
I_{c}(\rho_2;\Lambda_3 \circ \Lambda_2) =H(\mathsf{S}_{4})-H(\mathsf{R},\mathsf{E}_{1},\mathsf{S}_{4}), \label{B4}
\end{align}
with $\rho_{2}=\Lambda_{1}(\rho_{1})$.

Then, consider the following equality of the entropy terms due to the purity of the correspondent quantum systems
\begin{eqnarray}
H(\mathsf{R},\mathsf{S}_{4}) = H(\mathsf{E}_{1},\mathsf{E}_{2},\mathsf{E}_{3});\label{C1}\\
H(\mathsf{R},\mathsf{E}_{1},\mathsf{S}_{3}) = H(\mathsf{E}_2);\label{C2} \\
H(\mathsf{R},\mathsf{S}_{3})=H(\mathsf{E}_{1},\mathsf{E}_{2});\label{C3}\\
H(\mathsf{R},\mathsf{E}_{1},\mathsf{S}_{4})=H(\mathsf{E}_{2},\mathsf{E}_{3}).\label{C4}
\end{eqnarray}

Summing Eqs.~(\ref{C1},\ref{C2},\ref{C3},\ref{C4}), using the strong subadditivity of quantum entropy~\cite{lieb1973proof},
\begin{equation}
H(\mathsf{E}_{1},\mathsf{E}_{2},\mathsf{E}_{3})+H(\mathsf{E}_{2}) \leq  H(\mathsf{E}_{1},\mathsf{E}_{2})+H(\mathsf{E}_{2},\mathsf{E}_{3}),
\end{equation}
and solving for~(\ref{B1},\ref{B2},\ref{B3},\ref{B4}), we have the desired inequality.
\end{proof}

As proven in Appendix~\ref{M4toCQMI}, the quantum Markov monogamy inequality (QMMI) can equivalently be cast as the
monotonicity of the quantum conditional mutual information. This is formalized in the theorem below.

\begin{theorem}[Monotonicity of the quantum conditional mutual information~\cite{wilde2011classical}] \label{QCtheorem}
For any tripartite quantum state $\rho$ in $\mathsf{L}(\mathsf{A}\otimes \mathsf{B} \otimes \mathsf{C})$ and for any quantum channel $\Lambda \colon \mathsf{L}(\mathsf{B}) \rightarrow \mathsf{L}(\mathsf{D})$, it holds that
\begin{equation}
I(\mathsf{A}:\mathsf{B}|\mathsf{C})_{\rho} \geq I(\mathsf{A}:\mathsf{D}|\mathsf{C})_{(\mathrm{id}_{\mathsf{A}}\otimes\Lambda\otimes\mathrm{id}_{\mathsf{C}})(\rho)}.
\end{equation}
\end{theorem}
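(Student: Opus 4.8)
The plan is to reduce the claim to the strong subadditivity of the von Neumann entropy, the very tool already used in the proof of Theorem~\ref{M4thm}. First I would introduce a Stinespring dilation of the channel $\Lambda \colon \mathsf{L}(\mathsf{B}) \rightarrow \mathsf{L}(\mathsf{D})$, that is, an isometry $V \colon \mathsf{B} \rightarrow \mathsf{D} \otimes \mathsf{E}$ into an enlarged output-plus-environment space satisfying $\Lambda(\cdot) = \Tr_{\mathsf{E}}[V(\cdot)V^{\dagger}]$. Acting with $V$ on the $\mathsf{B}$ factor of $\rho$ produces a state $\sigma \coloneqq (\mathbb{1}_{\mathsf{A}} \otimes V \otimes \mathbb{1}_{\mathsf{C}})\,\rho\,(\mathbb{1}_{\mathsf{A}} \otimes V \otimes \mathbb{1}_{\mathsf{C}})^{\dagger}$ on $\mathsf{A} \otimes \mathsf{D} \otimes \mathsf{E} \otimes \mathsf{C}$, whose partial trace over $\mathsf{E}$ is exactly the output state $(\mathrm{id}_{\mathsf{A}} \otimes \Lambda \otimes \mathrm{id}_{\mathsf{C}})(\rho)$ on the right-hand side of the inequality.

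Next I would record the entropy identities that express the isometric invariance of the conditional mutual information. Because $V$ acts only on $\mathsf{B}$ and $V^{\dagger}V = \mathbb{1}_{\mathsf{B}}$, every marginal of $\sigma$ not involving $\mathsf{D},\mathsf{E}$ coincides with the corresponding marginal of $\rho$, while every marginal containing the full image of $V$ has the same spectrum as its $\mathsf{B}$-counterpart. Concretely, $H(\mathsf{A},\mathsf{C})_{\sigma} = H(\mathsf{A},\mathsf{C})_{\rho}$, $H(\mathsf{C})_{\sigma} = H(\mathsf{C})_{\rho}$, $H(\mathsf{D},\mathsf{E},\mathsf{C})_{\sigma} = H(\mathsf{B},\mathsf{C})_{\rho}$ and $H(\mathsf{A},\mathsf{D},\mathsf{E},\mathsf{C})_{\sigma} = H(\mathsf{A},\mathsf{B},\mathsf{C})_{\rho}$. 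Substituting these into $I(\mathsf{A}:\mathsf{B}|\mathsf{C}) = H(\mathsf{A},\mathsf{C}) + H(\mathsf{B},\mathsf{C}) - H(\mathsf{C}) - H(\mathsf{A},\mathsf{B},\mathsf{C})$ yields $I(\mathsf{A}:\mathsf{B}|\mathsf{C})_{\rho} = I(\mathsf{A}:\mathsf{D},\mathsf{E}|\mathsf{C})_{\sigma}$, transferring the whole problem to the dilated picture.

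It then remains to compare $I(\mathsf{A}:\mathsf{D},\mathsf{E}|\mathsf{C})_{\sigma}$ with $I(\mathsf{A}:\mathsf{D}|\mathsf{C})_{\sigma}$, where the latter depends only on the four marginals on $\mathsf{A},\mathsf{D},\mathsf{C}$ and hence equals $I(\mathsf{A}:\mathsf{D}|\mathsf{C})_{(\mathrm{id}_{\mathsf{A}}\otimes\Lambda\otimes\mathrm{id}_{\mathsf{C}})(\rho)}$, the right-hand side of the claim. Expanding both quantities and cancelling common terms, the difference collapses to
\begin{equation*}
\begin{split}
&I(\mathsf{A}:\mathsf{D},\mathsf{E}|\mathsf{C})_{\sigma} - I(\mathsf{A}:\mathsf{D}|\mathsf{C})_{\sigma} \\
&\quad = H(\mathsf{A},\mathsf{D},\mathsf{C})_{\sigma} + H(\mathsf{D},\mathsf{E},\mathsf{C})_{\sigma} - H(\mathsf{A},\mathsf{D},\mathsf{E},\mathsf{C})_{\sigma} - H(\mathsf{D},\mathsf{C})_{\sigma},
\end{split}
\end{equation*}
which is non-negative by strong subadditivity~\cite{lieb1973proof} applied with the grouping $(\mathsf{A};\,\mathsf{D},\mathsf{C};\,\mathsf{E})$. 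Chaining this with the isometric invariance of the previous paragraph gives $I(\mathsf{A}:\mathsf{B}|\mathsf{C})_{\rho} \geq I(\mathsf{A}:\mathsf{D}|\mathsf{C})_{(\mathrm{id}_{\mathsf{A}}\otimes\Lambda\otimes\mathrm{id}_{\mathsf{C}})(\rho)}$, as desired.

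The only genuinely nontrivial ingredient is strong subadditivity itself, which I may assume since it is already invoked in Theorem~\ref{M4thm}; everything else is bookkeeping with isometries and reduced states, mirroring the route from coherent information to the QDPI. The step I expect to require the most care is establishing the isometric invariance $I(\mathsf{A}:\mathsf{B}|\mathsf{C})_{\rho} = I(\mathsf{A}:\mathsf{D},\mathsf{E}|\mathsf{C})_{\sigma}$: one must verify term by term which marginals are left fixed by $V$ and which merely keep their spectrum, since an error there would silently break the reduction to a single application of strong subadditivity.
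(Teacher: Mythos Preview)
Your argument is correct and is the standard direct proof: Stinespring-dilate $\Lambda$, use isometric invariance to rewrite $I(\mathsf{A}:\mathsf{B}|\mathsf{C})_{\rho}$ as $I(\mathsf{A}:\mathsf{D},\mathsf{E}|\mathsf{C})_{\sigma}$, and then observe that the difference $I(\mathsf{A}:\mathsf{D},\mathsf{E}|\mathsf{C})_{\sigma}-I(\mathsf{A}:\mathsf{D}|\mathsf{C})_{\sigma}=I(\mathsf{A}:\mathsf{E}|\mathsf{D},\mathsf{C})_{\sigma}\geq 0$ by a single strong-subadditivity inequality. All the entropy identities you list are valid.

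The paper, however, does not prove Theorem~\ref{QCtheorem} this way. It first establishes Theorem~\ref{M4thm} (the QMMI) directly from strong subadditivity on the environment systems, and then, in Proposition~\ref{propositionFour} of Appendix~\ref{M4toCQMI}, shows that Theorem~\ref{M4thm} and Theorem~\ref{QCtheorem} are \emph{equivalent}. The nontrivial implication there, $(A)\Rightarrow(B)$, proceeds by showing that an arbitrary tripartite state $\rho$ on $\mathsf{A}\otimes\mathsf{B}\otimes\mathsf{C}$ can always be realised as $(\mathrm{id}_{\mathsf{A}}\otimes\Lambda\otimes\mathrm{id}_{\mathsf{C}})(\mathrm{id}_{\mathsf{A}}\otimes\Phi)(\psi\otimes\varphi)$ for a pure $\psi$, a unitary channel $\Phi$ and some channel $\Lambda$, which is obtained by two applications of Lemma~\ref{Lemma1}. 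Your route is shorter and self-contained; the paper's detour is chosen precisely because the equivalence between the QMMI and the monotonicity of conditional mutual information is itself one of the messages of the paper, so deriving Theorem~\ref{QCtheorem} \emph{from} Theorem~\ref{M4thm} is the point.
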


We remark that our inequalities are completely general and hold for arbitrary quantum processes (noisy or noiseless). In particular, notice that a necessary condition for non-Markovianity is the that the process is non-unitary (thus, coupled to an enviroment). In this sense, the noiseless case is not interesting from a non-Markovianity point of view.

In any case, classical data processing inequalities are directly associated with the effect of noise on the processes. For instance, the classical data processing inequality $I(X_1:X_2) \geq I(X_1:X_3)$ is satisfied with equality whenever the postprocessing stage $X_2 \rightarrow X_3$ is defined by a deterministic bijective transformation.

A similar reasoning holds for quantum information inequalities. That is, the quantum data processing inequality $I_c(\rho;\Lambda_1) \geq I_c(\rho;\Lambda_2 \circ \Lambda_1)$ is satisfied with inequality whenever the channels $\Lambda_1,\Lambda_2$ are deterministic quantum unitary operations. Moreover, the quantum Markov monogamy inequality in Eq.~\eqref{monogamy} is also satisfied with equality whenever the quantum channels $\Lambda_1,\Lambda_2,\Lambda_3$ are deterministic unitary quantum operations. Once again, this enforces non-trivial links between classical and quantum information processing conditions.

\subsection{Violation of the quantum Markov monogamy inequality} \label{QMMViolation}
Our aim here is to show that the QMMI in Eq.~\eqref{monogamy} can be violated by non-Markovian processes even in situations where all QDPI~\eqref{qdp} continue to hold. That is, we will prove that the QMMI can witness quantum non-Markovianity beyond what is possible relying solely on quantum data processing inequalities.

For a four-time-step quantum Markov process of the form of Eq.~\eqref{eq:qMarkov} the following quantities are positive semi-definite,
\begin{align} 
\mathrm{DP}_{1} \coloneqq& I_{c}(\rho_{1};\Lambda_{1})-I_{c}(\rho_{1};\Lambda_{2}\circ\Lambda_{1});\label{witness1} \\
\mathrm{DP}_{2} \coloneqq& I_{c}(\rho_{1};\Lambda_{1})-I_{c}(\rho_{1};\Lambda_{3}\circ\Lambda_{2}\circ\Lambda_{1});\label{witness2} \\
\mathrm{DP}_{3} \coloneqq& I_{c}(\rho_{1};\Lambda_{2}\circ \Lambda_{1}) - I_{c}(\rho_{1};\Lambda_{3}\circ\Lambda_{2}\circ\Lambda_{1});\label{witness3} \\
\mathrm{DP}_{4} \coloneqq& I_{c}(\rho_{2};\Lambda_{2})-I_{c}(\rho_{2};\Lambda_{3}\circ\Lambda_{2}); \label{witness4}\\
\mathrm{M4} \coloneqq& I_{c}(\rho_{1};\Lambda_{3} \circ \Lambda_{2} \circ \Lambda_{1}) + I_{c}(\rho_{2};\Lambda_{2}) \nonumber \\    &-I_{c}(\rho_{1};\Lambda_{2} \circ \Lambda_{1}) - I_{c}(\rho_{2};\Lambda_{3} \circ \Lambda_{2}), \label{witness5}
\end{align}
the first four corresponding to QDPIs of the same form as in Eq.~\eqref{qdp} and the last one corresponding QMMI in Eq.~\eqref{monogamy}. 

We do not consider the quantum version of all possible CDPIs for a four-time-step process. For instance, the quantum version of inequalities of the type $I(X_{2}:X_{3}) \geq I(X_{1}:X_{3})$ have not been considered. The reason is twofold. Firstly, the toolkit related to the proof of the QDPI ---presented in Appendix~\ref{CohInftoMutInf}--- do not directly applies to this case. Thus we leave it for future studies. Secondly, considering this type of QDPIs do not add any new information to our example. That is because inequalities of this form are not violated for the non-Markovian process examined here. Appendix~\ref{DPIextraAppendix} provides a discussion on this two claims.

To generate non-Markovian correlations (one that cannot thus be represented by the process represented in Fig.~\ref{Mdiagram}), we exploit an initially correlated tripartite system $\mathsf{R} \otimes \mathsf{S} \otimes \mathsf{E}$, with each of its parts consisting of qubit systems, and in the the pure state
\begin{equation} \label{initialstate}
\ket{\psi}=\frac{1}{\sqrt{3}}(\ket{1,0,0}+\ket{0,1,0}+\ket{0,0,1}).
\end{equation}
The collection $\{\ket{0},\ket{1}\}$ corresponds to the computational basis of the local systems.

\begin{figure}[t!]
\begin{center}

\includegraphics[width=0.5\textwidth]{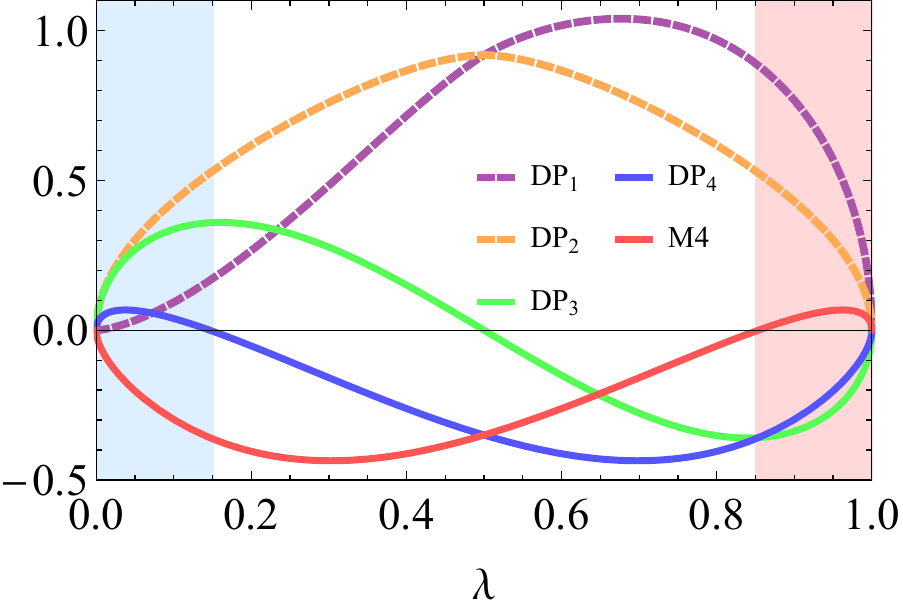}
    \caption{\textbf{Markov monogamy violation.} Markov monogamy is the only inequality being violated in the region $0 \leq \lambda \leq 0.15$ (shaded in blue). Nevertheless, the converse situation is also possible. In the region $0.85 \leq \lambda \leq 1$ (shaded in red), the monogamy inequality is not violated, while two data processing inequalities witness the non-Markovian behavior of the evolution. } \label{monogamy_violation}
\end{center}
\end{figure}

On the system-environment part of this initial state, we apply in sequence the unitary operation $U_{\lambda}$, with $0 \leq \lambda \leq 1$, that in the computational basis is given by
\begin{equation} \label{UnitaryOperation}
U_{\lambda}=\begin{bmatrix}
0 & -\sqrt{1 - \lambda} & \sqrt{\lambda} & 0 \\
1 & 0 & 0 & 0 \\
0 & 0 & 0 & 1 \\
0 & \sqrt{\lambda} & \sqrt{1 - \lambda} & 0
\end{bmatrix}.
\end{equation}
Thus, we have the well-defined sequence of states of the tripartite system $\mathsf{R} \otimes \mathsf{S} \otimes \mathsf{E}$ given by
\begin{eqnarray}
\gamma_{1}&=&\ket{\psi}\bra{\psi};\label{gamma1}\\
\gamma_{2}&=&(\mathbb{1}_{\mathsf{R}}\otimes U_{\lambda}) \gamma_{1} (\mathbb{1}_{\mathsf{R}}\otimes U_{\lambda})^{\dagger};\label{gamma2}\\
\gamma_{3}&=&(\mathbb{1}_{\mathsf{R}}\otimes U_{\lambda}) \gamma_{2} (\mathbb{1}_{\mathsf{R}}\otimes U_{\lambda})^{\dagger};\label{gamma3}\\
\gamma_{4}&=&(\mathbb{1}_{\mathsf{R}}\otimes U_{\lambda}) \gamma_{3} (\mathbb{1}_{\mathsf{R}}\otimes U_{\lambda})^{\dagger}.\label{gamma4}
\end{eqnarray}

From the above sequence of states, diagrammatically represented in Fig.~\ref{NMdiagram}, we can compute the QDPIs (denoted as $\mathrm{DP}_i$ with $i=1,\cdots,4$) and QMMI (denoted as $\mathrm{M4}$) witnessing the non-Markovianity of the local evolution of the system $\mathsf{S}$. 

We now follow Equations~(\ref{B1}-\ref{B4}), along with Equations~(\ref{witness1}-\ref{witness5}), to compute the DPIs and QMMI. For the inequality in~\eqref{witness1}, we need to compute entropies of states $\Tr_{\mathsf{R}, \mathsf{E}}[\gamma_{i}]$ and $\Tr_{ \mathsf{E}}[\gamma_{i}]$ for $i=2,3$.
The entropies of the first terms, in Equations~(\ref{B1}-\ref{B4}), are simply the entropies of states $\Tr_{\mathsf{R}, \mathsf{E}}[\gamma_{3}]$ and $\Tr_{\mathsf{R}, \mathsf{E}}[\gamma_{4}]$.  The entropies of the second terms in Equations~(\ref{B1},\ref{B3}) are with respect to $\Tr_{\mathsf{R}}[\gamma_{4}]$ and $\Tr_{\mathsf{R}}[\gamma_{3}]$, respectively. The fact that $\mathsf{R}$ is not a purification of the initial state of $\mathsf{S}$, will play a crucial role in violating the inequalities. In other words, the initial correlations with $\mathsf{E}$ are a non-Markovian feature. On the other hand, the entropies of the second terms in Equations~(\ref{B2},\ref{B4}) are with respect to the total states $\gamma_{3}$ and $\gamma_{4}$, respectively. This is because, unlike in a Markov process, $\mathsf{E}_1$ is the same as $\mathsf{E}$, i.e., the total environment. This leads coherent interference of $\mathsf{SE}$ correlations throughout the process, which too will play a central role in violating the inequalities below. This leads to 
\begin{align} 
\mathrm{DP}_{1} \coloneqq& [H(\mathsf{S})-H(\mathsf{R},\mathsf{S})]_{\gamma_{2}}-[H(\mathsf{S})-H(\mathsf{R},\mathsf{S})]_{\gamma_{3}};\label{witness1B} \\
\mathrm{DP}_{2} \coloneqq& [H(\mathsf{S})-H(\mathsf{R},\mathsf{S})]_{\gamma_{2}}-[H(\mathsf{S})-H(\mathsf{R},\mathsf{S})]_{\gamma_{4}};\label{witness2B} \\
\mathrm{DP}_{3} \coloneqq& [H(\mathsf{S})-H(\mathsf{R},\mathsf{S})]_{\gamma_{3}} - [H(\mathsf{S})-H(\mathsf{R},\mathsf{S})]_{\gamma_{4}};\label{witness3B} \\
\mathrm{DP}_{4} \coloneqq& [H(\mathsf{S})-H(\mathsf{R},\mathsf{S},\mathsf{E})]_{\gamma_{3}} \nonumber \\& - [H(\mathsf{S})-H(\mathsf{R},\mathsf{S},\mathsf{E})]_{\gamma_{4}}; \label{witness4B}\\
\mathrm{M4} \coloneqq& [H(\mathsf{R},\mathsf{S},\mathsf{E})-H(\mathsf{R},\mathsf{S})]_{\gamma_{4}} \nonumber \\
&- [H(\mathsf{R},\mathsf{S})-H(\mathsf{R},\mathsf{S},\mathsf{E})]_{\gamma_{3}}. \label{witness5B}
\end{align}

The quantities above are presented in Fig.~\ref{monogamy_violation} for the process represented by Equations~(\ref{gamma1},\ref{gamma2},\ref{gamma3},\ref{gamma4}). We show there are processes for which the Markov monogamy inequality is violated, and thus witnessing non-Markovianity, while none of the QDPIs being efficient in this task. We notice that the converse behavior is also possible.

\begin{figure}   
\begin{adjustbox}{width=0.4\textwidth}
\begin{tikzpicture}[rotate=-90]

\draw[ultra thick,brown, fill=brown!10]  (0,0) rectangle (7,2);
\draw[ultra thick, CadetBlue, fill=CadetBlue!10] (3,4) rectangle (7,6);
\draw[ultra thick, CadetBlue, fill=CadetBlue!10] (3,8) rectangle (7,10);
\draw[ultra thick, CadetBlue, fill=CadetBlue!10] (3,12) rectangle (7,14);

\draw[ultra thick, red]  (3.5,2) -- (3.5,4);
\draw[ultra thick, red] (3.5,6) -- (3.5,8);
\draw[ultra thick, red] (3.5,10) -- (3.5,12);
\draw[ultra thick, red] (3.5,14) -- (3.5,16);
\draw[ultra thick, cyan] (6.5,2) -- (6.5,4); \draw[ultra thick, cyan]  (6.5,6) -- (6.5,8);\draw[ultra thick, cyan] (6.5,10) -- (6.5,12);\draw[ultra thick, cyan] (6.5,14) -- (6.5,16);
\draw[ultra thick, black] (0.5,2) -- (0.5,16);

\node[scale=2.5] at (3.5,1) {$\psi$};
\node[scale=2.5] at (5,5) {$U_{1}$};
\node[scale=2.5] at (5,9) {$U_{2}$};
\node[scale=2.5] at (5,13) {$U_{3}$};

\draw[dashed] (0,5) -- (3,5); \draw[dashed] (7,5) -- (7.5,5);
\draw[dashed] (0,9) -- (3,9); \draw[dashed] (7,9) -- (7.5,9);
\draw[dashed] (0,13) -- (3,13); \draw[dashed] (7,13) -- (7.5,13);

\node[scale=2.5] at (1.2,3) {$\mathsf{R}$}; \node[scale=2.5] at (1.2,7) {$\mathsf{R}$};
\node[scale=2.5] at (1.2,11) {$\mathsf{R}$};
\node[scale=2.5] at (1.2,15) {$\mathsf{R}$};
\node[scale=2.5] at (4.2,3) {$\mathsf{S}$};
\node[scale=2.5] at (7.2,7) {$\mathsf{E}$};
\node[scale=2.5] at (7.2,3) {$\mathsf{E}$};
\node[scale=2.5] at (4.2,6.9) {$\mathsf{S}$};
\node[scale=2.5] at (4.2,10.9) {$\mathsf{S}$};
\node[scale=2.5] at (4.2,15) {$\mathsf{S}$};
\node[scale=2.5] at (7.2,10.9) {$\mathsf{E}$};
\node[scale=2.5] at (7.2,15) {$\mathsf{E}$};

\end{tikzpicture}
\end{adjustbox}
    \caption{\textbf{Non-Markov process.} The non-Markovian behavior considered here consists of initial system-environment correlations, and environmental quantum memory through the unitary system-environment evolution.}
      \label{NMdiagram}
\end{figure}
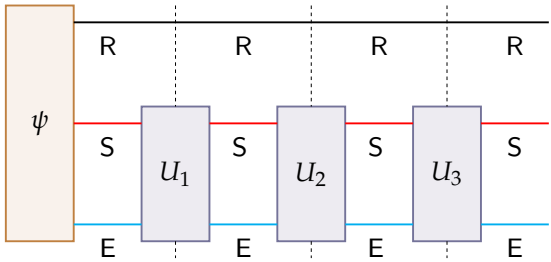

\subsection{A conjecture on quantum Markov processes}\label{conjecture}
At this stage we are ready to extend Conjecture~\ref{conj1} to the quantum realm. We take the following notation to simplify its statement. For any quantum Markov process 
\begin{equation*}
    \rho_1 \myeqA \rho_2 \myeqB \cdots \myeqN \rho_n 
\end{equation*}
we define $ I_{c}(\rho_{r}:\rho_{s})= I_{c}(\rho_{s}:\rho_{r})$, with
\begin{equation}
    I_{c}(\rho_{r}:\rho_{s}) \coloneqq I_{c}(\rho_{r};\bigcirc_{i=r}^{s-1} \Lambda_{i}),
\end{equation}
for $r<s$, where $\bigcirc_{i=r}^{s-1} \Lambda_{i} \coloneqq \Lambda_{s-1} \circ \cdots \circ \Lambda_{r}$.

\begin{conjecture}[Quantum Markov monogamy inequalities] \label{conj2}
For any quantum Markov process
\begin{equation*}
    \rho_n \rightarrow \cdots \rightarrow \rho_1 \rightarrow \sigma_1 \rightarrow \cdots \rightarrow \sigma_n,
\end{equation*}
and for any bijective function $f:\{1,\cdots,n\}\rightarrow\{1,\cdots,n\}$, it holds that
\begin{equation}
   \sum_{i=1}^{n} I_{c}(\rho_{i}:\sigma_{i}) \geq \sum_{i=1}^{n} I_{c}(\rho_{i}:\sigma_{f(i)}).
\end{equation}
\end{conjecture}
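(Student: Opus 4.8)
The plan is to deduce the general inequality from the already-proven four-time case (Theorem~\ref{M4thm}) by recognizing that Conjecture~\ref{conj2} is exactly an assignment/rearrangement statement about the matrix of pairwise correlations. Introduce the $n\times n$ matrix $K(i,j):=I_{c}(\rho_{i}:\sigma_{j})$. Since every $\rho_i$ precedes every $\sigma_j$ in the process $\rho_n\to\cdots\to\rho_1\to\sigma_1\to\cdots\to\sigma_n$, each entry $K(i,j)$ is a genuine forward coherent information, and in particular is well defined independently of any choice of purification. In this language the claim $\sum_i I_{c}(\rho_i:\sigma_i)\ge\sum_i I_{c}(\rho_i:\sigma_{f(i)})$ is precisely the statement that the identity permutation maximizes $\sum_i K(i,f(i))$ over all bijections $f$.

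The first step is to establish that $K$ obeys a supermodularity (inverse-Monge) condition: for all $i<i'$ and $j<j'$,
\[
K(i,j)+K(i',j')\ \ge\ K(i,j')+K(i',j).
\]
I would obtain this directly from Theorem~\ref{M4thm} by \emph{coarse-graining}. Fix $i<i'$ and $j<j'$; the four states $\rho_{i'}\to\rho_{i}\to\sigma_{j}\to\sigma_{j'}$ appear in this time order along the chain, and composing the intervening channels makes the segment between them a genuine four-time quantum Markov process with three effective CPTP maps. Applying Theorem~\ref{M4thm} to that coarse-grained process gives
\[
I_{c}(\rho_{i'}:\sigma_{j'})+I_{c}(\rho_{i}:\sigma_{j})\ \ge\ I_{c}(\rho_{i'}:\sigma_{j})+I_{c}(\rho_{i}:\sigma_{j'}),
\]
which is the displayed supermodularity. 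Crucially, and unlike the classical Shannon-cone route, this step never marginalizes a joint state: it uses only the dilation-plus-strong-subadditivity mechanism already internal to Theorem~\ref{M4thm}, and is therefore legitimate in the quantum setting.

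The second step is a purely combinatorial rearrangement argument. Given supermodularity, I would show the identity is optimal by an exchange (bubble-sort) argument on inversions: if $f\neq\mathrm{id}$ then $f$ has an adjacent inversion, i.e.\ a position $a$ with $f(a)>f(a+1)$; swapping the two images produces a bijection $f'$ with one fewer inversion, and the resulting change in the objective is $K(a,f(a+1))+K(a+1,f(a))-K(a,f(a))-K(a+1,f(a+1))\ge 0$ by the supermodularity inequality applied with rows $a<a+1$ and columns $f(a+1)<f(a)$. Iterating drives $f$ to the identity while never decreasing $\sum_i K(i,f(i))$, which yields $\sum_i K(i,i)\ge\sum_i K(i,f(i))$ and hence Conjecture~\ref{conj2}.

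The step I expect to be the main obstacle is the bookkeeping in the coarse-graining, rather than the combinatorics. One must verify that each application of Theorem~\ref{M4thm} to a different four-tuple is internally consistent: that the composed maps are CPTP, and that the coherent informations produced by the local dilation inside Theorem~\ref{M4thm} coincide with the globally defined $K(i,j)$. The latter is guaranteed by purification-independence of coherent information, but it deserves a careful statement. A secondary conceptual point worth stressing is that the whole reduction shows every instance of the conjecture collapses to iterated strong subadditivity, so that no quantum entropy inequality beyond those already used for the four-time case is ever required; this is exactly what makes arbitrary $n$ tractable and explains why the $n=2$ result is the only analytic input needed.
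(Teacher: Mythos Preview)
Your proposal is correct and, in fact, goes strictly beyond what the paper does. In the paper this statement is left as a \emph{conjecture}: the authors prove only the cases $n=2$ (Theorem~\ref{M4thm}), $n=3$ (Appendix~\ref{M6}), and $n=4$ (Appendix~\ref{M8}), in each case by writing the relevant coherent informations in terms of the environment entropies and then exhibiting, by hand, a specific finite list of strong-subadditivity inequalities on the $\mathsf{E}_i$ whose sum collapses to the desired monogamy inequality for that particular permutation. No general argument is given, and the enumeration visibly grows with $n$.

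Your route is genuinely different and more powerful. By coarse-graining any four-tuple $\rho_{i'}\to\rho_i\to\sigma_j\to\sigma_{j'}$ ($i<i'$, $j<j'$) into an effective three-channel process and invoking Theorem~\ref{M4thm}, you obtain the inverse-Monge (supermodularity) condition $K(i,j)+K(i',j')\ge K(i,j')+K(i',j)$ for the matrix $K(i,j)=I_c(\rho_i:\sigma_j)$; the adjacent-transposition (bubble-sort) rearrangement then shows that the identity permutation maximises $\sum_i K(i,f(i))$, which is exactly the conjectured inequality for every bijection $f$ and every $n$. The bookkeeping worries you flag are harmless: compositions of CPTP maps are CPTP, and coherent information is purification-independent, so the $K(i,j)$ produced inside each local application of Theorem~\ref{M4thm} agree with the globally defined ones. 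What this buys over the paper's approach is uniformity --- the only analytic input is the $n=2$ case, and no permutation-by-permutation search for SSA combinations is needed --- whereas the paper's method yields explicit SSA decompositions for each small instance but does not close the general case.
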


In Subsection~\ref{Monogamy} we have shown the case $n=2$ to be true. In Appendices~\ref{M6} and \ref{M8} we show that the validity of Conjecture~\ref{conj2} holds for $n=3,4$ as well.

\section{Comparison to previous results}

The witnessing of classical and quantum non-Markovianity is a key application of the classical and quantum information inequalities presented in the manuscript. Therefore, this perspective further enforces the relevance and applicability of the novel quantum Markov monogamy inequalities developed. In fact, there are many indicators for non-Markovian quantum phenomenon as discussed in Refs.~\cite{rivas2014quantum,Li2018,PRXQuantum.2.030201}. However, all of these indicators, in one manner or another, look for departures from the divisibility of the process, see~Ref.~\cite{PRXQuantum.2.030201} for details. This includes the indicators in Refs.~\cite{breuer2016colloquium,rivas2014quantum}, as well as the quantum data processing inequalities (QDPIs).

However, for a concrete comparison we note that it is well-known that the non-Markovian indicator by Breuer et al. \cite{breuer2016colloquium} is strictly weaker than that due to Rivas et al. \cite{rivas2014quantum}. However, our QDPIs already include the non-Markovian indicator of Rivas et al. To see this, we can rewrite Eq.~(11) as
\begin{equation}
    I_c(\rho;\Lambda_t) \ge I_c(\rho; \Lambda_{t+dt}).
\end{equation} \label{eq:qdpi_time}
This inequality may be violated for some non-Markovian processes, i.e.,
\begin{multline}
    I_c(\rho; \Lambda_{t+dt}) - I_c(\rho;\Lambda_t) > 0 \quad  \\ \Rightarrow \quad  \mbox{non-Markovianity.}
\end{multline}
Dividing the above equation by $dt$ tells us that when the derivative of the coherent information $\partial_t{I_c}(\rho;\Lambda_t)$ is positive we have non-Markovianity.
Thus, we can define a measure for non-Markovianity as
\begin{equation}
    \mathcal{N}_c := \max_{\rho} \int_0^T n_c(t) \ dt
\end{equation}
where the maximization is over all initial input states $\rho$, and
\begin{equation}
    n_c(t) = \max\{0, \partial_t{I_c}(\rho;\Lambda_t)\}.
\end{equation}
This is precisely the definition of non-Markovianity by Rivas et al. As there may be subtle differences in the choice of metric, a detailed comparison requires a careful study, which is beyond the scope of the present paper. See Ref.~\cite{chruscinski2011measures} for a detailed comparison between the works of Bruer et al. and Rivas et al.

The above results are obtained from only one of our QDPIs. Thus, the QMMIs may be able to see non-Markovianity where all indicators that are reliant on divisibility are blind.

We note that a similar reasoning can be applied to the QMMI in order to define a new measure of non-Markovianity. Similarly, our QMMI implies
\begin{multline}
 [I_c(\rho; \Lambda_{t+dt} \circ \Lambda_s) - I_c(\rho;\Lambda_t \circ \Lambda_s) ] \\ - [I_c(\Lambda_s(\rho); \Lambda_{t+dt} ) - I_c(\Lambda_s(\rho);\Lambda_t ) ]  > 0 \quad  \\ \Rightarrow \quad  \mbox{non-Markovianity for } t>s.
\end{multline}
Therefore, we can define the new measure for non-Markovianity
\begin{equation}
    \mathcal{M}_{c} := \max_{\rho} \int_{t>s} \int_{0}^T  m_c(t) ds dt,
\end{equation}
with 
\begin{multline}
    m_c(s,t) \\ := \max \{0,\partial_t [ I_c(\rho;\Lambda_t \circ \Lambda_s) - I_c(\Lambda_s(\rho);\Lambda_t ] \}.
\end{multline}

The conjecture on QMMIs in Sec.~\ref{conjecture} suggests how we can build novel non-Markovianity  measures analogously as above. In fact, there is a necessary and sufficient condition on divisible processes \cite{buscemi2016equivalence}.

The understanding on the limits on processing information in communication systems is of primal concern in information theory. In this respect, data processing inequalities have been shown to be of fundamental relevance in the development of main results in classical information theory \cite{cover2005elements}. Precisely, the data processing inequality $I(X_1:X_4) \leq I(X_2:X_3)$ held by four-time-step classical processes, is the mathematical result supporting the derivation of the converse part of the channel-coding theorem: there is no reliable asymptotic encoding-decoding scheme with communication rate larger than the channel capacity. Clearly, the channel-coding theorem stems as a fundamental result in information science, and therefore, makes sure the relevance on the development of further information inequalities. In turn, the Markov monogamy inequalities appears as constraints on information processing complementary to data processing inequalities. 

On the other hand, quantum processes differ fundamentally from their classical counterpart, and thus, demand further understanding and analysis. In this sense, the quantum data processing inequality is regarded as a highly non-trivial result, and also as one of the pillars in quantum information theory \cite{nielsen2002quantum}. Therefore, it is not clear beforehand that a given classical information inequality has a quantum counterpart. The same is true for the Markov monogamy inequalities. The techniques we had to employ to prove their quantum analogue are completely different from what is used in the classical case. 

To see why this is the case, we notice that in the classical case the proof of inequalities rely on the existence of a joint probability distribution $p(x_1,\dots,x_n)$ that marginalizes (via a quantifier elimination implemented by the Fourier-Motzkin algorithm) to the pairwise distribution $p(x_i,x_j)$ used in the definition of the mutual information. In turn, in the quantum case, since we are dealing with quantum states at different time steps, there is a priori no joint description. Instead, we have to rely on a ''mixed'' description in terms of channels and states, precisely the reason why we employ the coherent information.
Thus, even though our work is motivated by the classical monogamy inequality, it is not a trivial or natural extension of it and thus, develops a fundamental link between classical and quantum information inequalities. 

Furthermore, the quantum Markov monogamy inequalities display novel constraints on the processing of quantum systems, being capable of witnessing non-Markovianity in a regime where the paradigmatic data processing inequalities (widely used in the literature) would simply fail to do so.

Now, we consider multitime correlations indicators based on the process tensor, which are able to also see non-Markovian features in divisible processes~\cite{milz-kim}. In what follows, we show how our quantum Markov monogamy inequalities (QMMIs) are written in terms of multitime correlations and thus are stronger indicators for quantum non-Markovian phenomena than the ones due to Breuer et al. \cite{breuer2016colloquium} and Rivas et al. \cite{rivas2014quantum}.

\section{Quantum stochastic processes} 
\label{processtensoraproach}

The CDPI and CMMI stem from a well-defined notion of stochastic process, namely Eq.~\eqref{eq:csp}. The quantum inequalities, in contrast, are derived for family of quantum channels. This raises the question if there is a quantum equivalent of Eq.~\eqref{eq:csp}? If so, can we derive a larger family of inequalities than the ones given in the last section?

In this Section, we will work with the process tensor framework, which is a natural generalisation of Eq.~\eqref{eq:csp} for quantum processes. With this we will derive another family of QDPIs and QMMIs. Importantly, the set of inequalities in the previous section will be satisfied by divisible processes~\cite{rivas2014quantum}, even when the process is non-Markovian. This is because they only account for two-time correlations, and neglect higher-order correlations in the process~\cite{milz-kim}. In contrast, the forthcoming family of inequalities account for multitime correlations and will be capable of identifying the non-Markovian features in such processes. We begin by first reviewing the fundamental elements of this framework.

\subsection{Process Tensor}

We now discuss the structure of multitime correlations in the quantum case by considering an initial reference-system-environment state $\psi$. Before any dynamical evolution, an intervention with a control operation $\mathrm{A}_{1}$ can be made on the system alone: $\mathrm{A}_{1} \colon \mathsf{L}(\mathsf{S}_{1}) \rightarrow \mathsf{L}(\mathsf{S}_{1}')$. Next, as before, the system-environment state undergoes an evolution $\mathrm{U}_{1}:\mathsf{L}(\mathsf{S}_1'\otimes\mathsf{F}_1) \to \mathsf{L}(\mathsf{S}_2\otimes\mathsf{E}_2)$ and an intervention $\mathrm{A}_{2}$ is then made on the system $\mathsf{S}$ alone. The process repeats and the total state once again evolves due to $\mathrm{U}_{2}$, followed by a third intervention $\mathrm{A}_{3}$ on $\mathsf{S}$ alone, and so on up to a final intervention $\mathrm{A}_{4}$ is performed following $\mathrm{U}_{3}$, see Fig.~\ref{fig:processtensor}(a). Here, $\mathsf{S}_i$ is isomorphic to $\mathsf{S}_i'$ for any interventional time-step $i=1,2,3$. 

The interventions $\{\mathrm{A}_{j}\}$ are any physically implementable operation, which can be thought of a generalised measurement with possible corresponding outcomes $\{x_j\}$. Mathematically, these are known as \textit{instruments}~\cite{davies} and represented by a collection of completely positive  maps $\mathcal{J} := \{\mathrm{A}_{x_j}\}$ such that $\sum_{x_j} \mathrm{A}_{x_j}$ is trace preserving.

\begin{figure}[t!]
\begin{center}
\includegraphics[width=0.49\textwidth] {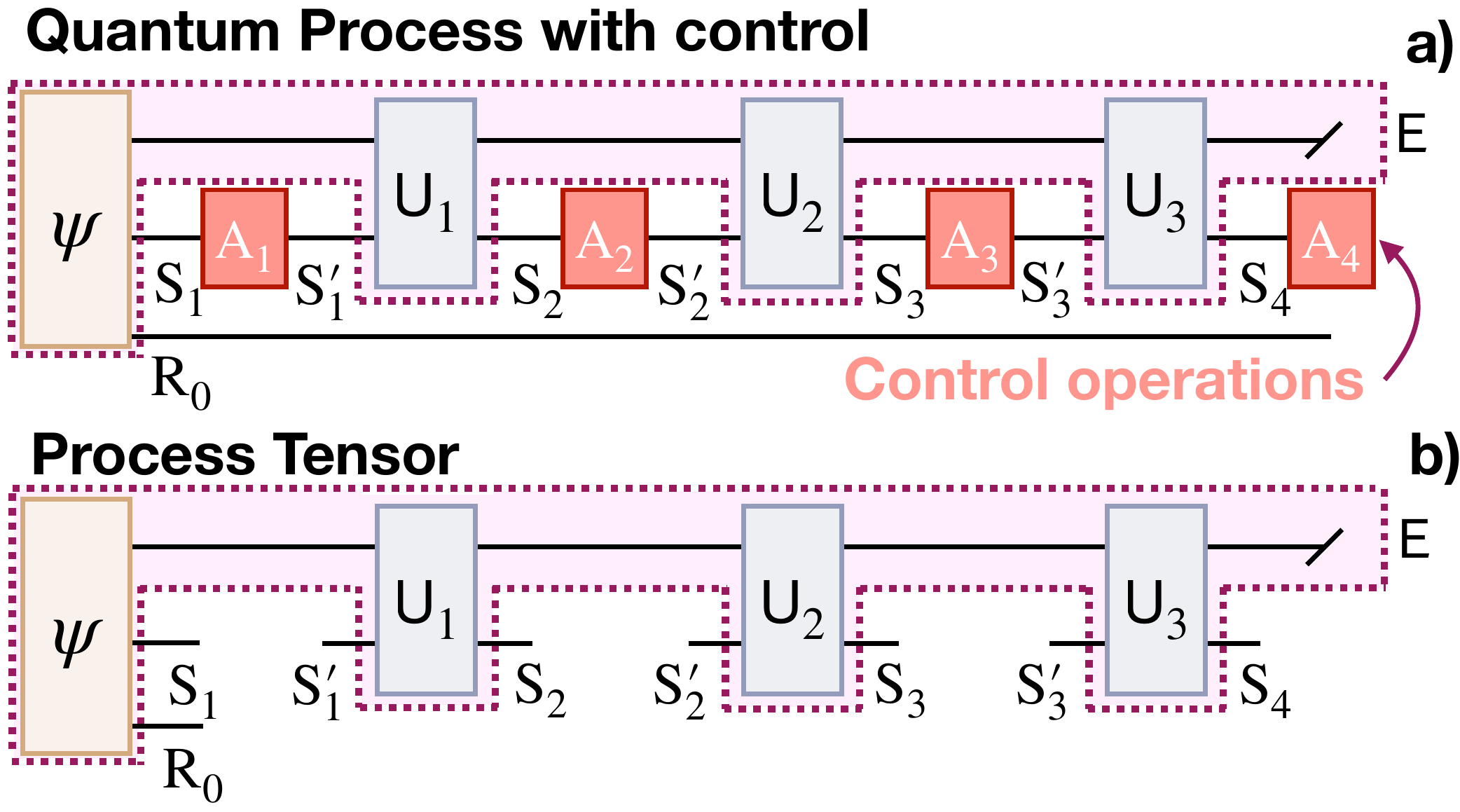}
    \caption{\textbf{Process Tensor.} The top panel shows a quantum circuit that accounts for multitime correlations for observables $\mathrm{A}_1,\cdots,\mathrm{A}_4$, i.e. Eq.~\eqref{eq:qtocl}. The object in the dotted line is called the process tensor, defined in Eq.~\eqref{eq:proctens}. It is drawn in isolation in the bottom panel, uniquely characterised the process.} \label{fig:processtensor}
\end{center}
\end{figure}

The above machinery straightforwardly allows for the calculation of the probability to observe a sequence of quantum events $(x_k, \dots x_1)$, corresponding to a choice of instruments $\{\mathcal{J}_k, \dots, \mathcal{J}_1\}$, as
\begin{gather}
\begin{split}\label{eq:qtocl}
&p(x_k,\dots,x_1\,|\,\mathcal{J}_k,\dots, \mathcal{J}_1)=\\ 
& \qquad\mbox{tr}[\mathrm{A}_{x_k} \mathrm{U}_{k-1} \!\cdots \mathrm{U}_{1} \mathrm{A}_{x_1} (\psi)].
\end{split}
\end{gather}
This is depicted in Fig.~\ref{fig:processtensor}(a). Here, the LHS is akin to a classical joint probability distribution. We can identify the quantum stochastic process by rewriting the RHS as
\begin{align}
&\mbox{tr}[\mathrm{A}_{x_k} \mathrm{U}_{k-1} \!\cdots \mathrm{U}_{1} \mathrm{A}_{x_1} (\psi)] = \mbox{tr}[\Upsilon_{k:1} \mathbf{A}^{\rm T}_{k:1}] \label{eq:process}\\ \label{eq:control}
\mbox{with} \quad &\mathbf{A}_{k:1} \!:=\! \mathrm{A}_{x_k} \!\otimes\! \dots \otimes \mathrm{A}_{x_1}\\ \label{eq:proctens}
\mbox{and} \quad &\Upsilon_{k:1} \!:=\! \mbox{tr}_B [\mathrm{U}_k \star \dots  \mathrm{U}_1 \star \psi]
\end{align}
where ${\rm T}$ denotes transposition and $\star$ denotes the link product, defined as a matrix product on the space $E$ and a tensor product on space $S$~\cite{PhysRevA.80.022339}. Here, $\mathrm{U}$ and $\mathrm{A}$ are the Choi operators of the corresponding transformations. The important feature here is the clear separation of the interventions $\mathbf{A}_{k:1}$ from the influences due to the bath, which are packaged in the \textit{process tensor} $\Upsilon_{k:1}$~\cite{pollock2018non, pollock2018operational, Costa2016}.

The process tensor is depicted inside the red-dotted line in Fig.~\ref{fig:processtensor}(b), and usually denoted by its \emph{Choi state} $\Upsilon_{k:1}$~\cite{PRXQuantum.2.030201}. It is the quantum generalisation of the joint classical probability distribution and unambiguously represents a quantum stochastic process~\cite{Milz2020kolmogorovextension}, and reduces to the classical case in the right limits~\cite{PhysRevA.100.022120, arXiv:1907.05807}. It contains all accessible multitime correlations~\cite{PhysRevLett.114.090402, White2020, White2021a}.

\begin{figure}[t!]
\begin{center}
\includegraphics[width=0.49\textwidth] {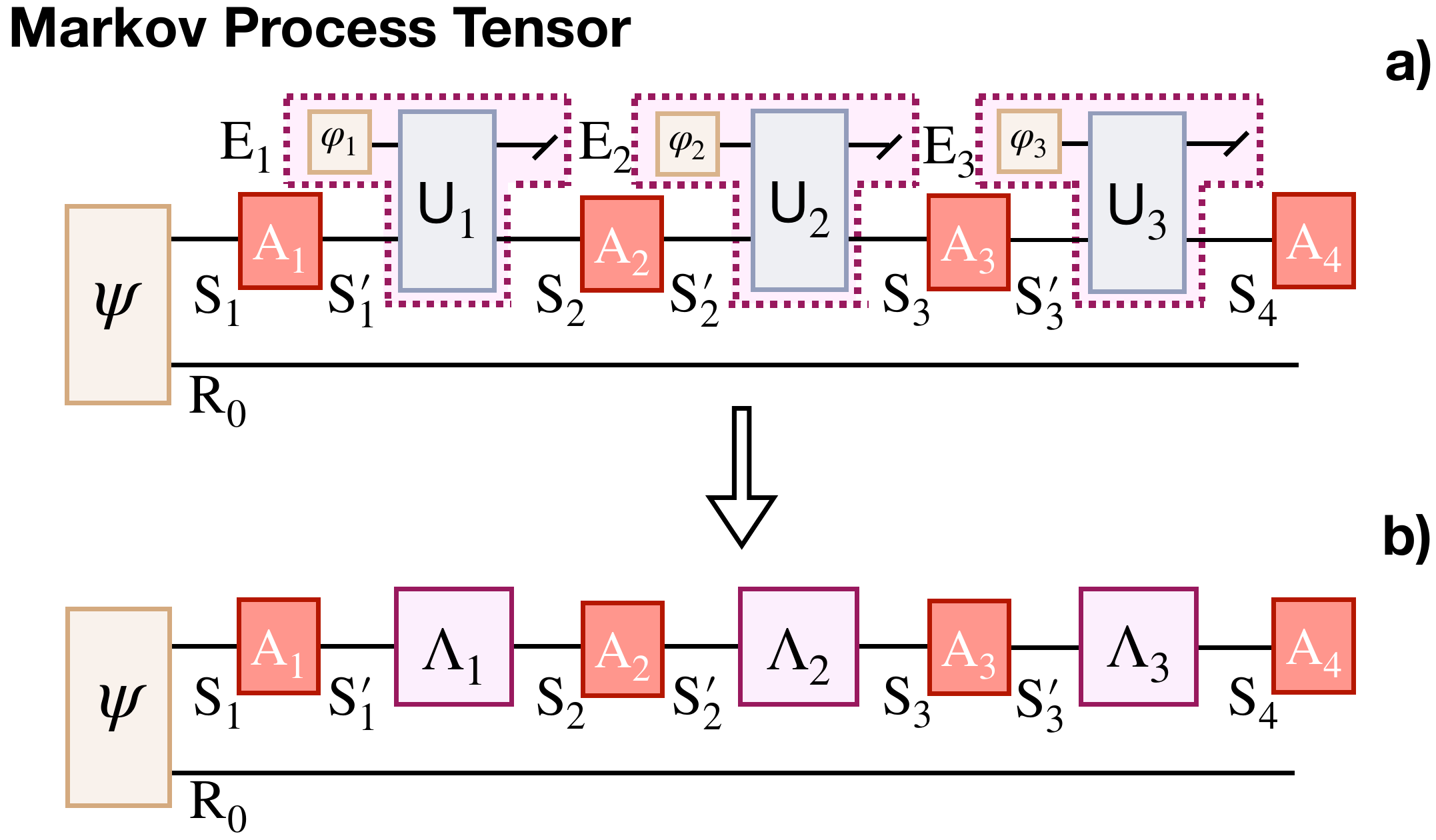}
    \caption{\textbf{Markov process tensor.} The top panel shows a quantum circuit for a Markovian process, in presence of interventions. This clearly reduces to what we posit a Markov process in the previosu section.} \label{fig:processMarkov}
\end{center}
\end{figure}

An important result that stems from the process tensor formalism is a necessary and sufficient condition for quantum Markov processes~\cite{pollock2018operational, Costa2016}. Namely, Markovian processes are those satisfying the following property: any $k$-time process tensor factorises as
\begin{gather} \label{eq:markovchoi}
\Upsilon_{k:1} = \mathrm{L}_{k-1} \otimes \cdots \otimes \mathrm{L}_{2} \otimes  \mathrm{L}_{1} \otimes \rho,
\end{gather}
where $\mathrm{L}_j$ are the Choi operators of the CPTP maps $\Lambda_j$ as before. See Fig.~\ref{fig:processMarkov} for a graphical depiction. 

It is possible to show that when a quantum process is Markovian, and the interventions are rank-one projective measurements, then the resulting distribution in Eq.~\eqref{eq:qtocl} will be a Markovian distribution~\footnote{A coarse (higher-rank projection) measurement can hide memory in the degenerate subspace, even if the process is Markovian, see~\cite{taranto2019structure}.}. We readily get an infinite family of CDPI and CMMI. We, of course, also get the QDPIs and QMMI from the last section. 

The above Markov condition means that we can deduce a process to be non-Markovian by looking for correlations. In the last section, the QDPIs and QMMI are constructed by considering two-time correlations; but in general, we can certainly look at higher-order correlations. With this in mind, we aim to find a family of more general inequalities by exploiting the structure of quantum stochastic processes. In particular, we will show that the inequalities of the last section cannot differentiate divisible processes from Markovian ones. The forthcoming inequalities will not have this limitation.

\subsection{Choi state DPIs}
\label{idpi}
We first consider a family of QDPIs for a four step quantum process as before. However, we now label the space of the process tensor as $(\mathsf{R}_0, \mathsf{S}_1, \mathsf{R}_1; \mathsf{S}_2, \mathsf{R}_2; \mathsf{S}_3, \mathsf{R}_3; \mathsf{S}_4)$, see Fig.~\ref{fig:processtensor} for an illustration. The advantage we now have is that we can intervene in a physically allowed form. For instance, at a given time we may throw away the system, say $\mathsf{S}_1$, and replace it with a new system with its own reference $\mathsf{R}_1$. This allows us to construct DPIs on the future dynamics without worrying about the output states of the past dynamics.

This is clearly desirable in several situations. For instance, consider a process where $\Lambda_1$ is a fully depolarising process, but the subsequent process is rich in structure. In such cases, why should we limit ourselves to only the output of $\Lambda_1$? In such instances we would simply swap out the output of $\Lambda_1$ with a more coherent state.

We now construct a family of QDPIs based on the Choi states of a Markov process. To do so, first note that the mutual information of a bipartite state $\rho$ can be expressed as a quantum relative entropy
\begin{gather}\begin{split}
    I(\mathsf{A}:\mathsf{B})_{\rho_{AB}} &= S[\rho_{AB}\|\rho_{A}\otimes \rho_{B}]\\
    &= \Tr[\rho_{AB}\{\log(\rho_{AB}) - \log(\rho_{A}\otimes \rho_{B})\}].
\end{split}    
\end{gather}
Next, note that quantum relative entropy is contractive under the action of CPTP maps. We can thus derive the following QDPIs when the process is Markov: 
\begin{align}
& I(\mathsf{R}_1:\mathsf{S}_2) \ge I(\mathsf{R}_1:\mathsf{S}_3) \ge I(\mathsf{R}_1:\mathsf{S}_4) \\
&I(\mathsf{R}_2:\mathsf{S}_3) \ge I(\mathsf{R}_2:\mathsf{S}_4).
\end{align}
Here, we have allowed for interventions $\mathrm{A}_k$ at the intermediate times.

The mutual information above is defined 
on any input state. We now restrict ourselves to inserting one half of a maximally entangled state in the first port and computing the quantum mutual information between the other half and what comes out at the final port. In other words, we are restricting ourselves to the Choi states of the process. Notice that this reduces to the usual classical case when we dephase in a local bases and the intermediate operators $\mathrm{A}_k$ become the classical identity channels.

This allows us to derive another set of QDPIs by using the fact that the quantum relative entropy is contractive under CP maps~\cite{petz}. Consider the process from $\mathsf{R}_1$ to $\mathsf{S}_3$ and compare that to the process from $\mathsf{R}_2$ to $\mathsf{S}_3$. The Choi state for the former is $\Lambda_{2} \circ \mathrm{A}_{2} \circ \Lambda_{1} (\Psi^+)$. Graphically we can represent this as top panel in Fig.~\ref{fig:processDPI}, which we can transform into the bottom panel by simply sliding the boxes, where we have added an identity channel $\mathsf{id}_1$. Let us compare that to the process in the third panel, which indeed the Choi state for the process from $\mathsf{R}_2$ to $\mathsf{S}_3$. Since the two processes differ by actions of CP maps on the the bottom leg we have 
\begin{gather}
I(\mathsf{R}_2:\mathsf{S}_3) \ge I(\mathsf{R}_1:\mathsf{S}_3).
\end{gather}
We have given a graphical proof in Fig.~\ref{fig:processDPI}, and also see App.~\ref{cpslide}. By same argument we can also derive
\begin{gather}
I(\mathsf{R}_3:\mathsf{S}_4) \ge I(\mathsf{R}_2:\mathsf{S}_4) \ge I(\mathsf{R}_1:\mathsf{S}_4).
\end{gather}
By choosing the $\mathrm{A}$'s to be rank-one projections we recover classical processes. For all such processes we can derive the Markov monogamy condition $\mathrm{M4}$.

Note that for the quantum case we have twice the number of legs than for classical case. In the classical case we just have $\{1,2,3,4\}$, which means that we have total of six mutual informations and fifteen pairwise relations between these mutual information. Most of these are not independent, thus we wind up with four QDPIs and QMMI. In the quantum case we have $\{\mathsf{S}_1, \mathsf{R}_1; \mathsf{S}_2, \mathsf{R}_2; \mathsf{S}_3, \mathsf{R}_3; \mathsf{S}_4\}$. Therefore we have 21 mutual informations. Of these six are vanishing because of causality, i.e., $I(\mathsf{R}_y:\mathsf{S}_x)=0$ for all $y \ge x$~\cite{PRXQuantum.2.030201}. For a Markov process we can require the $\mathsf{R}$ spaces to be independent, leading to three more vanishing constraints, $I(\mathsf{R}_x:\mathsf{R}_y)=0$. This requirement also means $I(\mathsf{S}_x:\mathsf{S}_y)=0$, that means six more vanishing mutual information. We are then left with exactly the same six non-trivial mutual informations as in the classical case.

\begin{figure}[t!]
\begin{center}
\includegraphics[width=0.45\textwidth] {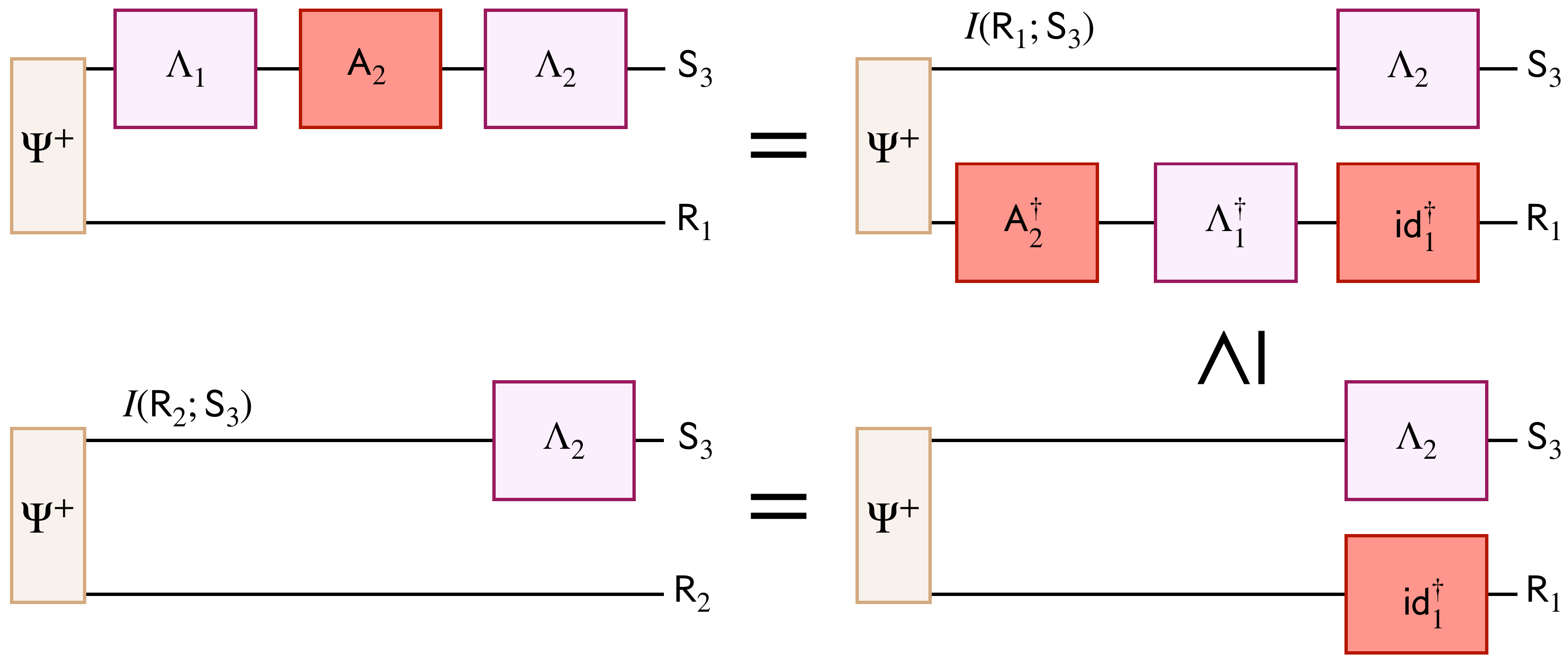}
    \caption{\textbf{Choi state data processing inequalities.} We can derive an anaologue of DPIs from the previous section by using contractivity of quantum relative entropy and some identities of actions of CPTP maps on the maximally entangled state $\Psi^+$.} \label{fig:processDPI}
\end{center}
\end{figure}

\subsection{Multitime quantum Markov monogamy inequalities}
\label{im4}

Here, we will present several families of quantum Markov monogamy inequalities. Their importance is highlighted by the fact that the QDPIs and the QMMI presented in the last section would all be satisfied for divisible non-Markovian processes. Here, we allow for interventions onto the process, which enables the detection temporal correlations that lie in divisible processes, see Ref.~\cite{milz-kim} for explicit examples. To construct the family of QMMI we will follow the circuits in Fig.~\ref{fig:processM4}.

The first family will be defined in terms of the following definition of coherent information
\begin{gather}\label{eq:qchi-v1}
    I_{q_{1}}(j;k) := H(\mathsf{S}_j,\mathsf{R}_j) - H(\mathsf{S}_j,\mathsf{R}_j,\mathsf{S}_k).
\end{gather}
Again, the space relevant for entropies in the last equation are labelled in Fig.~\ref{fig:processM4}.

\begin{theorem}[Multitime quantum Markov monogamy in equality (MQMMI-1)]
\label{iqM4thm}
For any Markov process tensor $\Upsilon_{4:1}$, it holds that
\begin{gather} \label{iqmonogamy}
I_{q_{1}}(1;4) + I_{q_{1}}(2;3) \geq   I_{q_{1}}(1;3) + I_{q_{1}}(2;4),
\end{gather}
with the intervention $\mathrm{A_j}$ in $I_{q_{1}}(j;k)$ defined by the purification of the system $\mathsf{S}_{j}$.
\end{theorem}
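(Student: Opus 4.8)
The plan is to recycle the proof of Theorem~\ref{M4thm}, since the combination in~\eqref{iqmonogamy} is formally identical to the two-time QMMI~\eqref{monogamy}: it is again of the shape $(1{;}4)+(2{;}3)\geq(1{;}3)+(2{;}4)$, only with the coherent informations replaced by the interventional quantities $I_{q_1}$. The first concrete step is to substitute the definition~\eqref{eq:qchi-v1} into the four terms. The two ``diagonal'' entropies $H(\mathsf{S}_1,\mathsf{R}_1)$ and $H(\mathsf{S}_2,\mathsf{R}_2)$ then each occur once with a plus and once with a minus sign, so they cancel and the theorem is equivalent to the single four-term inequality
\begin{equation*}
H(\mathsf{S}_1,\mathsf{R}_1,\mathsf{S}_3)+H(\mathsf{S}_2,\mathsf{R}_2,\mathsf{S}_4)\;\geq\;H(\mathsf{S}_1,\mathsf{R}_1,\mathsf{S}_4)+H(\mathsf{S}_2,\mathsf{R}_2,\mathsf{S}_3).
\end{equation*}
This is what I would actually establish; pleasantly, the terms $H(\mathsf{S}_j,\mathsf{R}_j)$ never have to be computed.

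Next I would build the purified picture, adapting Fig.~\ref{Mdiagram} to the interventional setting. Invoking the Markov factorisation~\eqref{eq:markovchoi}, I replace each $\Lambda_i$ by a Stinespring dilation $U_i$ with a fresh pure environment, producing independent environments $\mathsf{E}_1,\mathsf{E}_2,\mathsf{E}_3$, while the interventions $\mathrm{A}_j$ ``defined by the purification of $\mathsf{S}_j$'' supply the references $\mathsf{R}_j$ so that the total reference--system--environment state is globally pure. The one genuinely new feature relative to Theorem~\ref{M4thm} is that a new reference is injected at \emph{each} intervention time instead of a single $\mathsf{R}$ purifying only the input; it is exactly the factorisation~\eqref{eq:markovchoi} that guarantees the blocks $(\mathsf{S}_1,\mathsf{R}_1)$ and $(\mathsf{S}_2,\mathsf{R}_2,\mathsf{S}_3,\mathsf{S}_4)$ decouple.

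To close the reduced inequality I see two equivalent routes. The cleanest is to read it as monotonicity of the quantum conditional mutual information (Theorem~\ref{QCtheorem}) under the last channel $\Lambda_3\colon\mathsf{S}_3\to\mathsf{S}_4$: taking $I(\mathsf{S}_2\mathsf{R}_2:\mathsf{S}_3\,|\,\mathsf{S}_1\mathsf{R}_1)\geq I(\mathsf{S}_2\mathsf{R}_2:\mathsf{S}_4\,|\,\mathsf{S}_1\mathsf{R}_1)$ and expanding gives $H(\mathsf{S}_1\mathsf{R}_1\mathsf{S}_3)+H(\mathsf{S}_1\mathsf{R}_1\mathsf{S}_2\mathsf{R}_2\mathsf{S}_4)\geq H(\mathsf{S}_1\mathsf{R}_1\mathsf{S}_4)+H(\mathsf{S}_1\mathsf{R}_1\mathsf{S}_2\mathsf{R}_2\mathsf{S}_3)$, and the Markov independence of $(\mathsf{S}_1,\mathsf{R}_1)$ from the downstream block lets me split the five-party entropies so that the factor $H(\mathsf{S}_1,\mathsf{R}_1)$ cancels, leaving precisely the displayed inequality. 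Alternatively, I would use purity of the total state to send each joint entropy to the entropy of its environment complement; the product structure nests these complements so that the four-term inequality collapses to a strong-subadditivity statement $H(ABC)+H(B)\leq H(AB)+H(BC)$, exactly as in Theorem~\ref{M4thm}.

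The hard part will be the independence bookkeeping in this final step. With several references present one must verify which Markov independences hold (here the decoupling of $\mathsf{S}_1\mathsf{R}_1$ from $\mathsf{S}_2\mathsf{R}_2\mathsf{S}_3\mathsf{S}_4$), confirm that the purifying/reset semantics of the interventions $\mathrm{A}_j$ produce exactly the product structure the reduction requires, and check that the higher-party entropies split in the way needed to recover the three-party inequality. Once the correct complementary assignment is pinned down, the conclusion follows immediately from the Lieb--Ruskai inequality, just as for the two-time QMMI.
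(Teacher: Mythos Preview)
Your route (b) --- passing to environmental complements via purity and closing with a single strong-subadditivity inequality on $\rho_{\mathsf{E}_1\mathsf{E}_2\mathsf{E}_3}$ --- is precisely the paper's argument. The paper first takes identity interventions at the intermediate steps, writes $I_{q_1}(1;4)=H(\mathsf{R}_1)-H(\mathsf{E}_1\mathsf{E}_2\mathsf{E}_3)$, $I_{q_1}(2;3)=H(\mathsf{R}_2)-H(\mathsf{E}_2)$, etc., applies Lieb--Ruskai, and then absorbs the Stinespring ancillas $\alpha_j$ of arbitrary intermediate interventions into redefined environments $\tilde{\mathsf{E}}_i$. Your observation that the $H(\mathsf{S}_j,\mathsf{R}_j)$ terms cancel outright is a tidy shortcut; the paper instead splits them via the Markov independence $H(\mathsf{S}_j,\mathsf{R}_j)=H(\mathsf{S}_j)+H(\mathsf{R}_j)$ and cancels the pieces separately, but the effect is identical.

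Route (a), however, has a real gap. For the conditional-mutual-information monotonicity argument you need a single state on $\mathsf{S}_1\mathsf{R}_1\mathsf{S}_2\mathsf{R}_2\mathsf{S}_3$ whose marginals reproduce all four $I_{q_1}$ terms \emph{and} in which $(\mathsf{S}_1,\mathsf{R}_1)$ decouples from $(\mathsf{S}_2,\mathsf{R}_2,\mathsf{S}_3)$. No such state exists. The quantity $I_{q_1}(1;3)$ is only nontrivial because $\mathsf{R}_1$ is correlated with $\mathsf{S}_3$: one half of $\psi_1$ is fed through $\Lambda_2\circ\Lambda_1$ (with \emph{identity} at the intermediate slot) and $\mathsf{R}_1$ remains entangled with the output. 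If instead you intervene at both times $1$ and $2$ with the purifying swaps so as to have $\mathsf{R}_1$ and $\mathsf{R}_2$ present simultaneously, the swap at time $2$ severs $\mathsf{R}_1$ from $\mathsf{S}_3$, and the marginal on $\mathsf{S}_1\mathsf{R}_1\mathsf{S}_3$ is then a product --- giving the wrong value for $I_{q_1}(1;3)$. Conversely, in the experiment that \emph{does} give the right $I_{q_1}(1;3)$, the reference $\mathsf{R}_1$ is correlated with $\mathsf{S}_2$ (and with $\mathsf{S}_3$), so your split $H(\mathsf{S}_1\mathsf{R}_1\mathsf{S}_2\mathsf{R}_2\mathsf{S}_3)=H(\mathsf{S}_1\mathsf{R}_1)+H(\mathsf{S}_2\mathsf{R}_2\mathsf{S}_3)$ fails. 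The Markov factorisation~\eqref{eq:markovchoi} decouples the channel legs, not a reference that has already been propagated through them. The four $I_{q_1}$ terms live on different (but environment-compatible) states, and it is exactly this environment compatibility that makes route (b) work where route (a) cannot.
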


\begin{proof}
The above theorem considers a setup that allows for interventions $\mathrm{A}_{j}$ along the way. For simplicity, let us first consider the interventions to be identity operations.

For Markov process we have $ H(\mathsf{S}_j,\mathsf{R}_j) = H(\mathsf{S}_j)+H(\mathsf{R}_j)$ and 
$H(\mathsf{S}_j,\mathsf{R}_j,\mathsf{S}_k)=H(\mathsf{S}_j)+H(\mathsf{R}_j,\mathsf{S}_k)$. Next, we define a dilation of each quantum channel $\Lambda_{i}$ as done in Eq.~\eqref{dilation3}. Now, note the definition of $I_q$ in in Eq.~\eqref{eq:qchi-v1} becomes
\begin{align}
I_{q_{1}}(1;4) = H(\mathsf{R}_1) - H(\mathsf{E}_1,\mathsf{E}_2,\mathsf{E}_3); \label{iqB1}\\
I_{q_{1}}(2;3) = H(\mathsf{R}_2) - H(\mathsf{E}_2); \label{iqB2} \\
I_{q_{1}}(1;3) = H(\mathsf{R}_1) - H(\mathsf{E}_1,\mathsf{E}_2); \label{iqB3} \\
I_{q_{1}}(2;4) = H(\mathsf{R}_2) - H(\mathsf{E}_2,\mathsf{E}_3). \label{iqB4}
\end{align}
The rest of the proof then simply follows as before as long as the entropies in the second terms in each of the above equations comes from an environment state $\rho_{E_1,E_2,E_3}$. To ensure this we also require that $\psi_i$ must be purification of $\rho_i$, for $i\in\{1,2\}$.

Now, let us consider the case where interventions $\mathrm{A}_{j}$ are arbitrary (CPTP maps). Let $\alpha_{j}$ be the Hilbert space associated with the environmental system related to the isometric extension of $\mathrm{A}_{j}$. Then the second terms in the last set of equations becomes 
\begin{align}
H(\mathsf{E}_{1},\alpha_2,\mathsf{E}_{2},\alpha_3,\mathsf{E}_{3},\alpha_{4}) \to H(\tilde{\mathsf{E}}_{1},\tilde{\mathsf{E}}_{2},\tilde{\mathsf{E}}_{3}); \label{d1}\\
H(\mathsf{E}_2,\alpha_3) \to H(\tilde{\mathsf{E}}_2); \label{d2}\\
H(\mathsf{E}_{1},\alpha_2, \mathsf{E}_{2}, \alpha_3) \to H(\tilde{\mathsf{E}}_{1}, \tilde{\mathsf{E}}_{2}); \label{d3}\\
H(\mathsf{E}_{2}, \alpha_3, \mathsf{E}_{3},\alpha_{4}) \to H(\tilde{\mathsf{E}}_{2}, \tilde{\mathsf{E}}_{3}). \label{d4}
\end{align}
Here, we have redefine $\tilde{\mathsf{E}}_{1} :=\mathsf{E}_{1} \otimes \alpha_2$, $\tilde{\mathsf{E}}_{2} :=\mathsf{E}_{2} \otimes \alpha_3$, and $\tilde{\mathsf{E}}_{3} :=\mathsf{E}_{3} \otimes \alpha_4$. Once we absorbed $\alpha_1$ into the initial state, the rest of the proof then simply follows as before, using the strong subadditivity of quantum entropy.
\end{proof}

\begin{figure}[t!]
\begin{center}
\includegraphics[width=0.45\textwidth] {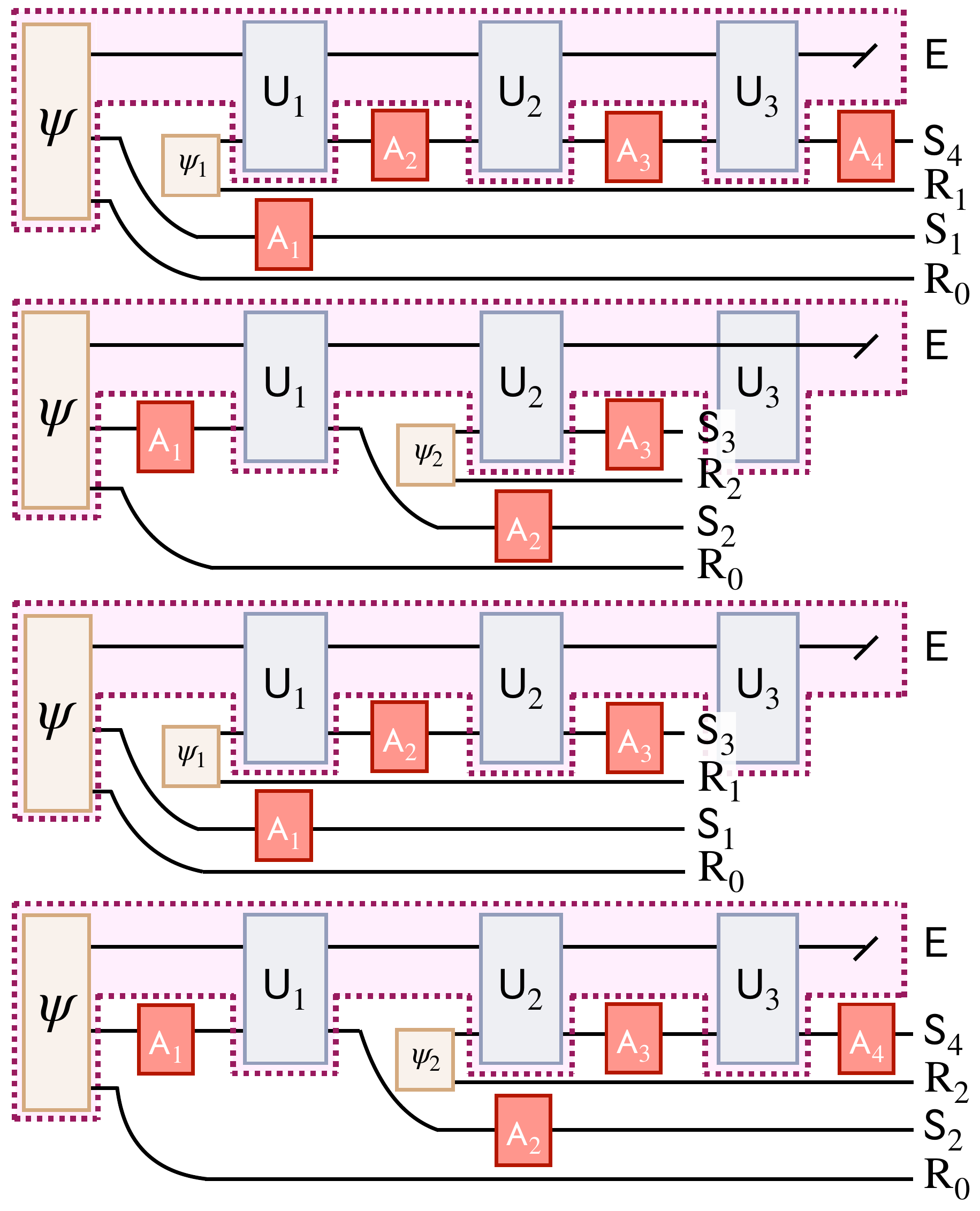}
    \caption{\textbf{Quantum Markov monogamy information with memory.} The variations in coherent information given in Eqs.~(\ref{eq:qchi-v1},\ref{eq:qchi-v2},\ref{eq:qchi-v3}) account for the memory due to the past. This figure displays how the spaces that one needs to account for.} \label{fig:processM4}
\end{center}
\end{figure}

One might now wonder, why we had to redefine coherent information in Eq.~\eqref{eq:qchi-v1} to accommodate interventions. The reason is not new. If one wants to operationally witness QMMI given in Eq.~\eqref{monogamy}, then the state of the system must be swapped with an identical copy whose purification remains in our possession. We can of course do the same here to redefine coherent information as
\begin{gather}\label{eq:qchi-v2}
I_{q_{2}}(j;k) := H(\mathsf{S}_k) - H(\mathsf{S}_j,\mathsf{R}_j,\mathsf{S}_k).
\end{gather}
Once again, $\psi_i$ must be purification of $\rho_i$, for $i\in\{1,2\}$. The big difference is that we now also account for the entropy of space $\mathsf{S}_j$, which, for non-Markovian processes, via initial correlations plays a non-trivial role. In this sense, the last equation accounts for not just tripartite correlations~\cite{Modiscirep, sai}, as before we only dealt with bipartite correlations. Yet, another possibility is the following 
\begin{gather}\label{eq:qchi-v3}
I_{q_{3}}(j;k) := H(\mathsf{S}_j,\mathsf{S}_k) - H(\mathsf{S}_j,\mathsf{R}_j,\mathsf{S}_k).
\end{gather}
Here, too we must have that $\psi_i$ must be purification of $\rho_i$, for $i\in\{1,2\}$. And here, again, we account for tripartite correlations thus this a more powerful version of MQMMI. The proof for the MQMMI for the above two equations follows the same path as the last two proofs and we omit the details. Note that we could have kept $\mathsf{R}_0$ in the above definition, and there are many other alternatives.

A couple of remarks are in order at this stage. Firstly, all of these MQMMI yield the same value for a Markov processes, including the QMMI inequality from the previous section. All three version of the MQMMI above require that a state fed into the process at an intermediate stage must be a purification of the previous output state. This is required so that the strong-subaditivity inequalities can be applied. While this is the same requirement as the QMMI inequality in the previous section, the last three version should be able to account for non-Markovianity even in divisible processes~\cite{milz-kim}. This is because, they are designed to account for multitime correlations.

We can just as well construct multitime QDPIs with the three definitions of coherent information above. In fact, one can also use the above definitions in the classical case. The key point is that these constructions account for multitime correlations by allowing for multitime entropies. In contrast to the classical case, multitime entropies will be stronger indicators of non-Markovianity when quantum entanglement in time~\cite{arXiv:1811.03722, 10.21468/SciPostPhys.10.6.141, White2021b} is present in the process, which may serve as an important diagnostic tool.

\subsection{Violation of the Multitime Quantum Markov monogamy inequalities}

Now we consider the interventional approach present in the MQMMIs in order to witness the non-Markovian beheviour of the process represented in Fig.~\ref{NMdiagram}. 

Allowing for interventions, the non-Markovian process represented in Fig.~\ref{NMdiagram} is then described by the process tensor represented in Fig.~\ref{fig:processtensor}. Furthermore, we set here a process tensor according to the bottom panel of Fig.~\ref{fig:processtensor} by defining the initial state $\psi$ as in Eq.~(\ref{initialstate}) and joint system-environment operations $\mathrm{U}_i(\bullet)=U_{\lambda}(\bullet)U_{\lambda}^{\dagger}$, for $i=1,2,3$, with $U_{\lambda}$ defined in Eq.~(\ref{UnitaryOperation}).

The MQMMIs defined with respect to the quantities in Equations~(\ref{eq:qchi-v1},\ref{eq:qchi-v2},\ref{eq:qchi-v3}) lead to the definition of the interventional witnesses of non-Markovianity:
\begin{equation} 
\mathrm{M4}_{q_i} \coloneqq I_{q_{i}}(1;4) + I_{q_{i}}(2;3) - I_{q_{i}}(1;3) -I_{q_{i}}(2;4),     \label{intwitness} 
\end{equation}
with $i=1,2,3$. The quantities defined above are positive semi-definite for any Markov process tensor, independently of the  interventional scheme adopted. Thus, finding a negative value for any $\mathrm{M4}_{q_i}$ (i=1,2,3) implies the process is non-Markovian. In Fig.~\ref{fig:processintM4} we present the plot for the witnesses in Eq.~(\ref{intwitness}) for the process tensor mentioned above.

\begin{figure}[t!]
\begin{center}
\includegraphics[width=0.5\textwidth] {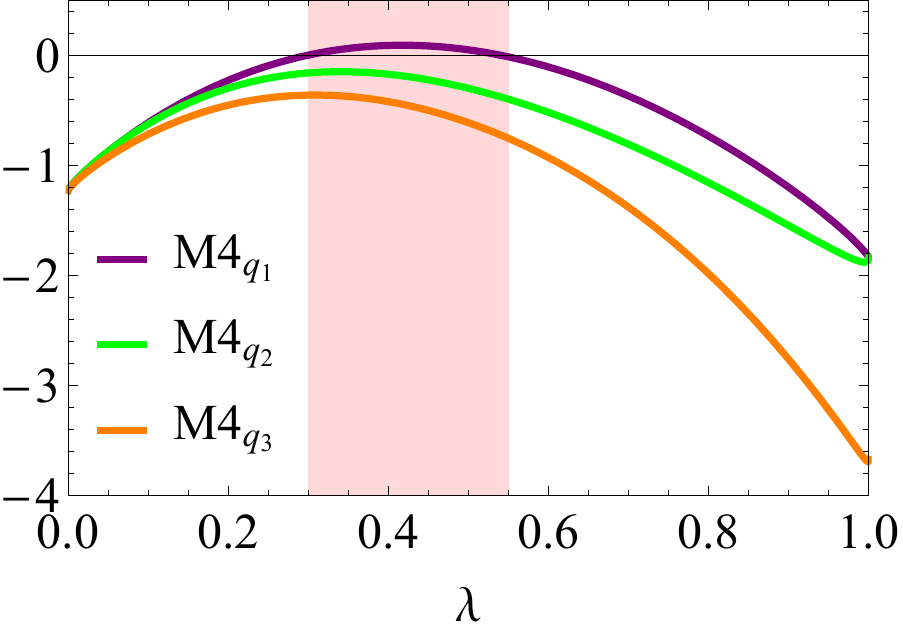}
    \caption{\textbf{Violation of the MQMMIs.} Multitime quantum Markov monogamy with respect to $I_{q_{1}}$ is not violated only for the region $0.30 \leq \lambda \leq 0.55$. The MQMMIs with respect to $I_{q_{2}}$ and $I_{q_{3}}$ are violated for any value of $\lambda$, thus perfectly witnessing the non-Markovianity of the process considered.} \label{fig:processintM4}
\end{center}
\end{figure}

\section{Conclusion} \label{conclusion}

Firstly, using the approach introduced in~\cite{schumacher1996quantum} we have extended the conjecture on classical Markov monogamy inequalities to the quantum case. We also proved that the Markov monogamy inequalities may be violated by a non-Markovian quantum process satisfying all the quantum data processing inequalities. This is done by considering concrete examples.

Secondly, we have also considered how to apply the novel quantum Markov monogamy inequalities to the process tensor formalism, which accounts for multitime quantum correlations. This provides an interesting interventional approach to quantum data processing, and thus adding extra relevance to the results.

The resources involved within witnessing non-Markovianity with QMMIs and MQMMIs are not equivalent. Generally, characterizing non-Markov phenomena with QMMIs involves assessing reference-environment systems. Distinctly, the MQMMIs do not requires this strong requirement, and only depend upon the system's properties. Nevertheless, it requires feeding intermediate steps of the process with the purification of the previous state of the system. Thus, we claim it is not fair to compare the results relying this two approaches on an equal footing.

Quantum data processing theorems have been widely studied in the theory of quantum information. In particular, the QDPI in Eq.~\eqref{qdp} has been shown to have a relevant interpretation: the local evolution of a quantum system can be cast as a completely positive and trace preserving operation if and only if the QDPI is satisfied~\cite{buscemi2014complete}. A stronger operation interpretation for these new inequalities is still missing so far, and is left for future studies.

Finally, it is worthy mentioning that the quantum information inequalities we discovered have the potential to be related with novel recovery operations. This is supported by the existence of an operational interpretation of the quantum data processing inequality. In the original formulation in Ref.~\cite{schumacher1996quantum}, the authors used the quantum data processing inequality to show that, given an initial state $\rho$ and a quantum channel $\Lambda$, there is a recovery operation $\mathrm{R}$ such that $\mathrm{R} ( \Lambda(\rho) ) = \rho$ if and only if $I_c(\rho;\mathrm{id})=I_c(\rho;\Lambda)$. Therefore, this suggests the novel QMMIs could be also related to limits on the quantum processing of information. One possible direction would be providing similar operational meaning to the quantum Markov monogamy inequalities in terms of memory strength and quantum recovery procedure~\cite{arXiv:1907.05807}. We leave the study of operational properties of the Markov monogamy inequalities to future work.

\begin{acknowledgments}
MC and LCC acknowledge the financial support from funding agencies CNPq, FAPEG and the Brazilian National Institute of Science and Technology of Quantum Information (INCT- IQ). This study was financed in part by the Coordenação de Aperfeiçoamento de Pessoal de Nível Superior – Brasil (CAPES) – Finance Code 001. MC also acknowledges the warm hospitality of the School of Mathematics and Physics at the University of Queensland, where initial part of this study was developed. Special thanks are expressed to Kaumudibikash Goswami, with whom MC had inspiring discussions. RC acknowledge the support of the John Templeton Foundation (grant Q-CAUSAL No 61084), the Serrapilheira Institute (grant number Serra – 1708-15763), CNPq via the INCT-IQ and Grants No. 406574/2018-9 and 307295/2020-6, the Brazilian agencies MCTIC and MEC.
KM is supported through Australian Research Council Future Fellowship FT160100073, Discovery Projects DP210100597 \& DP220101793, and the International Quantum U Tech Accelerator award by the US Air Force Research Laboratory.
\end{acknowledgments}

\appendix

\section*{Appendix}

\subsection{Proof of the data processing theorem and its equivalence with the monotonicity condition of quantum mutual information} \label{CohInftoMutInf}

We start by considering the proof of the quantum data processing theorem in \cite{schumacher1996quantum}.

\begin{theorem}[Quantum data processing inequality~\cite{schumacher1996quantum}] \label{qdpthm}
Let $\rho$ be a quantum state of $\mathsf{S}_{1}$, and $\Lambda_{1} \colon \mathsf{L}(\mathsf{S}_{1})\rightarrow \mathsf{L}(\mathsf{S}_{2}),\Lambda_{2} \colon \mathsf{L}(\mathsf{S}_{2})\rightarrow \mathsf{L}(\mathsf{S}_{3})$ be quantum channels. It holds that
\begin{equation} \label{qdpineq}
I_{c}(\rho;\Lambda_{1}) \geq I_{c}(\rho;\Lambda_{2}\circ \Lambda_{1}). 
\end{equation}
\end{theorem}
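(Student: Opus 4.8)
The plan is to follow the Stinespring dilation strategy, exactly parallel to the proof of Theorem~\ref{M4thm} given in the main text. First I would purify the input state $\rho$ of $\mathsf{S}_{1}$ by introducing a reference $\mathsf{R}$, obtaining a pure state $\psi$ on $\mathsf{R} \otimes \mathsf{S}_{1}$, and dilate each channel as in Eq.~\eqref{dilation3}: choose unitaries $U_{1} \colon \mathsf{S}_{1}\otimes\mathsf{F}_{1} \to \mathsf{S}_{2}\otimes\mathsf{E}_{1}$ and $U_{2} \colon \mathsf{S}_{2}\otimes\mathsf{F}_{2} \to \mathsf{S}_{3}\otimes\mathsf{E}_{2}$ together with pure ancilla states $\varphi_{1},\varphi_{2}$ such that $\Lambda_{i}(\cdot) = \Tr_{\mathsf{E}_{i}}[U_{i}(\cdot\otimes\varphi_{i})U_{i}^{\dagger}]$. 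Applying $U_{1}$ and then $U_{2}$ to the purified input produces a global pure state supported on $\mathsf{R}\otimes\mathsf{S}_{3}\otimes\mathsf{E}_{1}\otimes\mathsf{E}_{2}$.

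Next I would rewrite both coherent informations as differences of von Neumann entropies of this global state. Directly from the definition in Eq.~\eqref{coherentinf}, one has $I_{c}(\rho;\Lambda_{1}) = H(\mathsf{S}_{2}) - H(\mathsf{R},\mathsf{S}_{2})$ and $I_{c}(\rho;\Lambda_{2}\circ\Lambda_{1}) = H(\mathsf{S}_{3}) - H(\mathsf{R},\mathsf{S}_{3})$. Since the global state is pure at each stage, the purity relations give $H(\mathsf{S}_{2}) = H(\mathsf{R},\mathsf{E}_{1})$, $H(\mathsf{R},\mathsf{S}_{2}) = H(\mathsf{E}_{1})$, $H(\mathsf{S}_{3}) = H(\mathsf{R},\mathsf{E}_{1},\mathsf{E}_{2})$, and $H(\mathsf{R},\mathsf{S}_{3}) = H(\mathsf{E}_{1},\mathsf{E}_{2})$. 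The key observation is that $U_{2}$ acts only on $\mathsf{S}_{2}\otimes\mathsf{F}_{2}$, so the marginal on $\mathsf{R}\otimes\mathsf{E}_{1}$ is untouched by the second step, and the term $H(\mathsf{R},\mathsf{E}_{1})$ is common to both expressions.

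Substituting these identities, the claimed inequality collapses to $H(\mathsf{R},\mathsf{E}_{1}) + H(\mathsf{E}_{1},\mathsf{E}_{2}) \geq H(\mathsf{R},\mathsf{E}_{1},\mathsf{E}_{2}) + H(\mathsf{E}_{1})$, which is precisely strong subadditivity of the von Neumann entropy~\cite{lieb1973proof} under the identification $(A,B,C) = (\mathsf{R},\mathsf{E}_{1},\mathsf{E}_{2})$. This closes the argument.

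I expect the main obstacle to be purely the bookkeeping: correctly tracking which systems are present after each dilation and invoking the purity relations consistently, so that the final inequality lands exactly in the standard $H(AB)+H(BC) \geq H(ABC)+H(B)$ form. The genuine analytic content is carried entirely by strong subadditivity; once the coherent informations are re-expressed through the complementary environment systems, no further estimates are required.
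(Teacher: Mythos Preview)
Your proposal is correct and essentially identical to the paper's own proof in Appendix~\ref{CohInftoMutInf}: the same purification of $\rho$, the same Stinespring dilations of $\Lambda_{1},\Lambda_{2}$, the same four purity identities (your relations match Eqs.~(\ref{A1})--(\ref{A4}) exactly), and the same reduction to the strong subadditivity inequality $H(\mathsf{R},\mathsf{E}_{1}) + H(\mathsf{E}_{1},\mathsf{E}_{2}) \geq H(\mathsf{R},\mathsf{E}_{1},\mathsf{E}_{2}) + H(\mathsf{E}_{1})$. Your remark that $U_{2}$ leaves the $\mathsf{R}\otimes\mathsf{E}_{1}$ marginal untouched is the only point the paper leaves implicit, and it is indeed the bookkeeping step that makes the purity relations consistent across the two stages.
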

\begin{proof}
Let the state $\rho$ be purified to $\psi$ in $\mathsf{L}(\mathsf{R} \otimes \mathsf{S}_{1})$. Let the quantum channels be dilated according to 
\begin{equation} \label{dilation1}
\Lambda_{1}(\sigma_{1})= \Tr_{\mathsf{E}_{1}}\left[ U_{1}(\sigma_{1} \otimes \varphi_{1})U_{1}^{\dagger} \right]    
\end{equation}
and 
\begin{equation} \label{dilation2}
\Lambda_{2}(\sigma_{2})= \Tr_{\mathsf{E}_{2}}\left[ U_{2}(\sigma_{2} \otimes \varphi_{2})U_{2}^{\dagger} \right],    
\end{equation}
for any operators $\sigma_{1}$ and $\sigma_{2}$ in $\mathsf{L}(\mathsf{S}_{1})$ and $\mathsf{L}(\mathsf{S}_{2})$, respectively. The linear transformations $U_{1} \colon \mathsf{S}_{1} \otimes \mathsf{F}_1 \rightarrow \mathsf{S}_{2} \otimes \mathsf{E}_1$ and $U_{2} \colon \mathsf{S}_{2} \otimes \mathsf{F}_2 \rightarrow \mathsf{S}_{3} \otimes \mathsf{E}_2$ are unitary operators, and the pure quantum states $\varphi_{1}$ and $\varphi_{2}$ are in $\mathsf{L}(\mathsf{F}_{1})$ and $\mathsf{L}(\mathsf{F}_{2})$, respectively. Consider the following mathematical assertions with respect to the process represented in Fig.~\ref{DPIdiagram}.

\begin{equation} \label{A1}
\mathsf{R} \otimes \mathsf{E}_{1} \otimes \mathsf{E}_{2} \otimes \mathsf{S}_{3} \, \text{is pure} \Rightarrow H(\mathsf{R} , \mathsf{E}_{1} , \mathsf{E}_{2} )=H(\mathsf{S}_{3});
\end{equation}
\begin{equation} \label{A2}
\mathsf{R} \otimes \mathsf{E}_{1} \otimes \mathsf{S}_{2} \, \text{is pure} \Rightarrow \; H(\mathsf{R} , \mathsf{E}_{1})=H(\mathsf{S}_{2});
\end{equation}
\begin{equation} \label{A3}
H(\mathsf{R},\mathsf{S}_{2})=H(\mathsf{E}_{1});
\end{equation}
\begin{equation} \label{A4}
H(\mathsf{R},\mathsf{S}_{3})=H(\mathsf{E}_{1},\mathsf{E}_{2}).
\end{equation}

\begin{figure}   
\begin{adjustbox}{width=0.4\textwidth}
\begin{tikzpicture}

\draw[ultra thick,red, fill=red!10] (0,0) rectangle (4,2);
\draw[ultra thick, CadetBlue, fill=CadetBlue!10] (3,4) rectangle (7,6);
\draw[ultra thick, CadetBlue, fill=CadetBlue!10] (6,8) rectangle (10,10);
\draw[ultra thick, OliveGreen, fill=OliveGreen!10] (5.5,0) rectangle (7.5,2);
\draw[ultra thick, brown, fill=brown!10] (8.5,0) rectangle (10.5,2);
\draw[ultra thick, red] (3.5,2) -- (3.5,4);\draw[ultra thick, OliveGreen] (3.5,6) -- (3.5,12);
\draw[ultra thick, OliveGreen] (6.5,2) -- (6.5,4); \draw[ultra thick, red] (6.5,6) -- (6.5,8);\draw[ultra thick, brown] (6.5,10) -- (6.5,12);
\draw[ultra thick, brown] (9.5,8) -- (9.5,2); \draw[ultra thick, red] (9.5,10) -- (9.5,12);
\draw[ultra thick, black] (0.5,2) -- (0.5,12);

\node[scale=2.5] at (2,1) {$\psi$};
\node[scale=2.5] at (6.5,1) {$\varphi_{1}$};
\node[scale=2.5] at (9.5,1) {$\varphi_{2}$};
\node[scale=2.5] at (5,5) {$U_{1}$};
\node[scale=2.5] at (8,9) {$U_{2}$};

\draw[dashed] (0,5) -- (3,5); \draw[dashed] (7,5) -- (10.5,5);
\draw[dashed] (0,9) -- (6,9); \draw[dashed] (10,9) -- (10.5,9);

\node[scale=2.5] at (1.2,3) {$\mathsf{R}$}; \node[scale=2.5] at (1.2,7) {$\mathsf{R}$};
\node[scale=2.5] at (1.2,10.9) {$\mathsf{R}$};
\node[scale=2.5] at (4.2,3) {$\mathsf{S}_{1}$};
\node[scale=2.5] at (7.2,7) {$\mathsf{S}_{2}$};
\node[scale=2.5] at (10.2,11) {$\mathsf{S}_{3}$};
\node[scale=2.5] at (7.2,3) {$\mathsf{F}_{1}$};
\node[scale=2.5] at (4.2,6.9) {$\mathsf{E}_{1}$};
\node[scale=2.5] at (4.2,10.9) {$\mathsf{E}_{1}$};
\node[scale=2.5] at (10.2,3.4) {$\mathsf{F}_{2}$};
\node[scale=2.5] at (10.2,6.9) {$\mathsf{F}_{2}$};
\node[scale=2.5] at (7.2,10.9) {$\mathsf{E}_{2}$};

\end{tikzpicture}
\end{adjustbox}
    \caption{\textbf{Diagram representing the purified process $\rho_{1} \myeqA \rho_{2} \myeqB \rho_{3}$.} The diagram displays the pure final state obtained by acting successively the isometric representations $U_{i}(\bullet \otimes \ket{\varphi_{i}})$ of $\Lambda_{i}$ (with $i=1,2$) on the purification $\psi$ of $\rho$.}
      \label{DPIdiagram}
\end{figure}
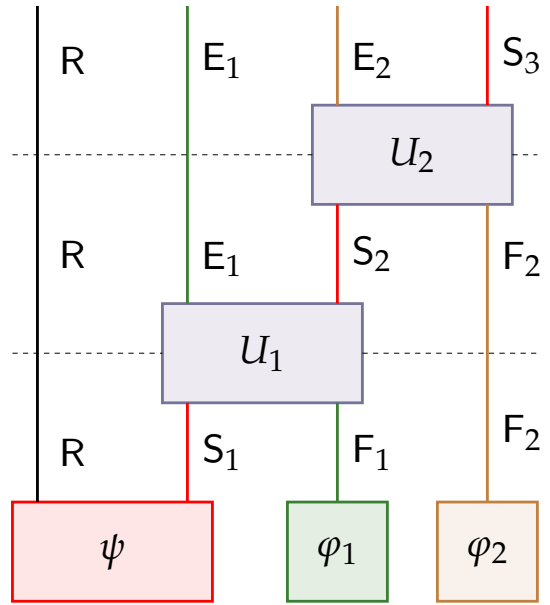

Then, we have that the strong subadditivity inequality 
\begin{equation} 
	H(\mathsf{R} , \mathsf{E}_{1} , \mathsf{E}_{2} )+H(\mathsf{E}_{1}) \leq H(\mathsf{R} , \mathsf{E}_{1})+H(\mathsf{E}_{1} , \mathsf{E}_{2})
\end{equation}
and Equations (\ref{A1},\ref{A2},\ref{A3},\ref{A4}) imply
the desired data processing inequality.
\end{proof}

Now we are ready to prove the equivalence of Theorem \ref{qdpthm} with the monotonicity of quantum mutual information under local operations. Nevertheless, we state first a simple result considered in this derivation.
\begin{lemma}[Proposition 2.29 in~\cite{watrous2018theory}] \label{Lemma1}
Let $\psi$ be a pure state of $\mathsf{R} \otimes \mathsf{S}_{1}$, and let $\rho$ be any state  of $\mathsf{R} \otimes \mathsf{S}_{2}$ for which $\Tr_{\mathsf{S}_{1}}[\psi]=\Tr_{\mathsf{S}_{2}}[\rho]$. Then there is a quantum channel $\Lambda \colon \mathsf{L}(\mathsf{S}_{1}) \rightarrow \mathsf{L}(\mathsf{S}_{2})$ such that $\rho=(\mathrm{id}_{\mathsf{R}}\otimes\Lambda)(\psi)$.
\end{lemma}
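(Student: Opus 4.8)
The plan is to reduce the statement to the unitary (isometric) freedom in the choice of purifications. Write $\sigma_{\mathsf{R}} := \Tr_{\mathsf{S}_1}[\psi] = \Tr_{\mathsf{S}_2}[\rho]$ for the common reduced state on $\mathsf{R}$. The pure state $\psi$ is then, by construction, a purification of $\sigma_{\mathsf{R}}$ with purifying system $\mathsf{S}_1$. The target state $\rho$, being a mixed state on $\mathsf{R}\otimes\mathsf{S}_2$, is generally not pure, so the first step I would take is to purify it: introduce an auxiliary space $\mathsf{S}_2'$ and a pure state $\phi$ on $\mathsf{R}\otimes\mathsf{S}_2\otimes\mathsf{S}_2'$ such that $\Tr_{\mathsf{S}_2'}[\phi]=\rho$. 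Since tracing out both $\mathsf{S}_2$ and $\mathsf{S}_2'$ from $\phi$ returns $\sigma_{\mathsf{R}}$, the state $\phi$ is likewise a purification of $\sigma_{\mathsf{R}}$, now with purifying system $\mathsf{A}:=\mathsf{S}_2\otimes\mathsf{S}_2'$.

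Next I would invoke the standard fact that any two purifications of the same state are related by an isometry on the purifying system (a direct consequence of the Schmidt decomposition). Choosing $\mathsf{S}_2'$ large enough that $\dim\mathsf{A}\ge\dim\mathsf{S}_1$, there exists an isometry $V\colon\mathsf{S}_1\to\mathsf{A}$ with $\ket{\phi}=(\mathbb{1}_{\mathsf{R}}\otimes V)\ket{\psi}$, equivalently $\phi=(\mathrm{id}_{\mathsf{R}}\otimes\mathcal{V})(\psi)$ where $\mathcal{V}(\cdot)=V(\cdot)V^{\dagger}$.

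With this isometry in hand, I would define the candidate channel by completing the Stinespring picture: set
\begin{equation*}
\Lambda(X):=\Tr_{\mathsf{S}_2'}\!\left[V X V^{\dagger}\right],\qquad X\in\mathsf{L}(\mathsf{S}_1).
\end{equation*}
Because $V$ is an isometry, $V^{\dagger}V=\mathbb{1}_{\mathsf{S}_1}$, so $\Lambda$ is completely positive and trace preserving, hence a legitimate quantum channel $\mathsf{L}(\mathsf{S}_1)\to\mathsf{L}(\mathsf{S}_2)$. The verification is then immediate:
\begin{equation*}
(\mathrm{id}_{\mathsf{R}}\otimes\Lambda)(\psi)=\Tr_{\mathsf{S}_2'}\!\left[(\mathbb{1}_{\mathsf{R}}\otimes V)\,\psi\,(\mathbb{1}_{\mathsf{R}}\otimes V^{\dagger})\right]=\Tr_{\mathsf{S}_2'}[\phi]=\rho,
\end{equation*}
which is the desired conclusion.

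The only delicate point I anticipate is the existence and correct formulation of the isometry $V$: one must be careful about the dimensions of the purifying spaces (padding $\mathsf{S}_2'$ so that $\dim(\mathsf{S}_2\otimes\mathsf{S}_2')\ge\dim\mathsf{S}_1$) and about the fact that the purification-equivalence theorem produces an isometry \emph{into} the larger purifying system. Everything else --- purifying $\rho$, checking that $\Lambda$ is trace preserving, and the final partial-trace identity --- is routine.
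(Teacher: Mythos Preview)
Your argument is correct and is precisely the standard proof of this fact: purify $\rho$, invoke the isometric equivalence of purifications to obtain $V\colon\mathsf{S}_1\to\mathsf{S}_2\otimes\mathsf{S}_2'$, and read off $\Lambda$ from the resulting Stinespring dilation. The paper itself does not supply a proof of this lemma --- it is quoted as Proposition~2.29 of Watrous's textbook --- so there is nothing to compare against beyond noting that your approach coincides with the textbook one (Watrous uses essentially the same purify-then-relate-by-isometry strategy, phrased via the vectorization map).
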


\begin{proposition} \label{Prop1}
The following sentences are equivalent:
\begin{enumerate}[(A)]
\item The data processing inequality (\ref{qdpineq}) holds for any quantum state $\rho$ of $\mathsf{S}_{1}$, and for any quantum channels $\Lambda_{1} \colon \mathsf{L}(\mathsf{S}_{1})\rightarrow \mathsf{L}(\mathsf{S}_{2})$ and $\Lambda_{2} \colon \mathsf{L}(\mathsf{S}_{2})\rightarrow \mathsf{L}(\mathsf{S}_{3})$;
\item For any quantum state $\sigma$ of a bipartite system $\mathsf{A} \otimes \mathsf{B}$, and for any quantum operation $\Lambda \colon \mathsf{L}(\mathsf{B})\rightarrow \mathsf{L}(\mathsf{C})$, it holds that quantum mutual information is monotonically decreasing under the action of the local operation $\Lambda$. That is,
\begin{equation}
I(\mathsf{A}:\mathsf{B})_{\sigma} \geq I(\mathsf{A}:\mathsf{C})_{(\mathrm{id}\otimes\Lambda)(\sigma)}.
\end{equation}
\end{enumerate}
\end{proposition}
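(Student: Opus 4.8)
The plan is to prove the equivalence via a single algebraic bridge that rewrites coherent information as a mutual information, so that both implications reduce to identical bookkeeping. First I would establish this bridge. Let $\psi$ be any purification of $\rho$ on $\mathsf{R}\otimes\mathsf{S}_{1}$ and put $\omega := (\mathrm{id}_{\mathsf{R}}\otimes\Lambda)(\psi)$ for a channel $\Lambda\colon\mathsf{L}(\mathsf{S}_{1})\to\mathsf{L}(\mathsf{S}_{2})$. Since $\Lambda$ acts only on $\mathsf{S}_{1}$, the reduced state on $\mathsf{R}$ is untouched, so $H(\mathsf{R})_{\omega}=H(\mathsf{R})_{\psi}=H(\rho)$, the last equality because $\psi$ is a purification of $\rho$. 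Unfolding the definition of mutual information then gives
\begin{equation*}
I(\mathsf{R}:\mathsf{S}_{2})_{\omega}=H(\mathsf{R})_{\omega}+\big[H(\mathsf{S}_{2})_{\omega}-H(\mathsf{R},\mathsf{S}_{2})_{\omega}\big]=H(\rho)+I_{c}(\rho;\Lambda),
\end{equation*}
that is, $I_{c}(\rho;\Lambda)=I(\mathsf{R}:\mathsf{S}_{2})_{(\mathrm{id}\otimes\Lambda)\psi}-H(\rho)$. This identity drives both directions.

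For (B)$\Rightarrow$(A) I would purify $\rho$ to $\psi$ on $\mathsf{R}\otimes\mathsf{S}_{1}$ and set $\omega_{2}:=(\mathrm{id}_{\mathsf{R}}\otimes\Lambda_{1})(\psi)$. Applying hypothesis (B) with $\mathsf{A}=\mathsf{R}$, $\mathsf{B}=\mathsf{S}_{2}$, $\mathsf{C}=\mathsf{S}_{3}$ and channel $\Lambda_{2}$ yields $I(\mathsf{R}:\mathsf{S}_{2})_{\omega_{2}}\ge I(\mathsf{R}:\mathsf{S}_{3})_{(\mathrm{id}\otimes\Lambda_{2})\omega_{2}}$; since $(\mathrm{id}\otimes\Lambda_{2})\omega_{2}=(\mathrm{id}\otimes(\Lambda_{2}\circ\Lambda_{1}))\psi$, subtracting $H(\rho)$ from both sides and applying the bridge identity turns this directly into $I_{c}(\rho;\Lambda_{1})\ge I_{c}(\rho;\Lambda_{2}\circ\Lambda_{1})$.

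The reverse implication (A)$\Rightarrow$(B) is the delicate one, and its crux is to realize an arbitrary bipartite state as a local channel acting on a purification---exactly the content of Lemma~\ref{Lemma1}. Given any $\sigma$ on $\mathsf{A}\otimes\mathsf{B}$ and $\Lambda\colon\mathsf{L}(\mathsf{B})\to\mathsf{L}(\mathsf{C})$, I would take a purification $\psi$ of the marginal $\sigma_{\mathsf{A}}$ on $\mathsf{A}\otimes\mathsf{S}_{1}$ with $\dim\mathsf{S}_{1}$ chosen large enough. Because $\psi_{\mathsf{A}}=\sigma_{\mathsf{A}}$, Lemma~\ref{Lemma1} (with $\mathsf{A}$ as the reference) furnishes a channel $\Lambda_{1}\colon\mathsf{L}(\mathsf{S}_{1})\to\mathsf{L}(\mathsf{B})$ with $\sigma=(\mathrm{id}_{\mathsf{A}}\otimes\Lambda_{1})(\psi)$. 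Writing $\rho:=\psi_{\mathsf{S}_{1}}$, so that $\psi$ purifies $\rho$, I apply hypothesis (A) to $\rho$ with channels $\Lambda_{1}$ and $\Lambda$; the bridge identity (now with $\mathsf{A}$ in the reference role) converts $I_{c}(\rho;\Lambda_{1})\ge I_{c}(\rho;\Lambda\circ\Lambda_{1})$ into $I(\mathsf{A}:\mathsf{B})_{\sigma}\ge I(\mathsf{A}:\mathsf{C})_{(\mathrm{id}\otimes\Lambda)\sigma}$, as desired.

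I expect the main obstacle to be precisely this last structural point: hypothesis (A) only constrains states of the restricted form ``channel applied to a purification,'' while (B) must hold for every bipartite $\sigma$. Lemma~\ref{Lemma1} bridges the gap by certifying that every bipartite state does arise in that way, so no generality is lost in passing from (A) to (B). Once that is secured, both directions collapse to the same one-line manipulation with the bridge identity, the additive constant $H(\rho)$ cancelling uniformly.
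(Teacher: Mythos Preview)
Your proposal is correct and follows essentially the same route as the paper: both directions hinge on the identity $I_{c}(\rho;\Lambda)=I(\mathsf{R}:\mathsf{S}_{2})_{(\mathrm{id}\otimes\Lambda)\psi}-H(\rho)$, with $(B)\Rightarrow(A)$ obtained by purifying $\rho$ and subtracting $H(\rho)$, and $(A)\Rightarrow(B)$ obtained by invoking Lemma~\ref{Lemma1} to realize an arbitrary bipartite $\sigma$ as $(\mathrm{id}\otimes\Lambda_{1})(\psi)$ before adding $H(\rho)$ back. Your write-up is in fact somewhat more explicit about the bridge identity and the role of Lemma~\ref{Lemma1} than the paper's own proof.
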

\begin{proof}
The assertion $(B) \Rightarrow (A)$ is clearly true. Suppose $(A)$ is true. Then let $\rho$ be the state of an arbitrary quantum system $\mathsf{S}_{1}$, and $\psi$ be a purification with respect to a bipartite system $\mathsf{R} \otimes \mathsf{S}_{1}$. Consider also arbitrary quantum channels $\Lambda_{1} \colon \mathsf{L}(\mathsf{S}_{1}) \rightarrow \mathsf{L}(\mathsf{S}_{2})$ and $\Lambda_{2} \colon \mathsf{L}(\mathsf{S}_{2})\rightarrow \mathsf{L}(\mathsf{S}_{3})$. Since $(A)$ is true by hypothesis, and $\sigma \coloneqq (\mathrm{id}_{\mathsf{R}} \otimes \Lambda_{1}) (\psi)$ is a state of the bipartite system $\mathsf{R} \otimes \mathsf{S}_{2}$, we have
\begin{equation} \label{MImon1}
I(\mathsf{R} : \mathsf{S}_{2})_{\sigma} \geq I(\mathsf{R} : \mathsf{S}_{2})_{(\mathrm{id}_{\mathsf{R}} \otimes \Lambda_{2}) (\sigma)}.
\end{equation}
Thus, subtracting $H(\mathsf{R})$ from both sides of Eq.~(\ref{MImon1}) we have the desired inequality
\begin{equation}\label{MImon2}
I_{c}(\rho;\Lambda_{1}) \geq I_{c}(\rho;\Lambda_{2} \circ \Lambda_{1}),
\end{equation}
for arbitrary $\rho$, $\Lambda_{1}$ and $\Lambda_{2}$.

Now, let us prove $(A) \Rightarrow (B)$ is true. So suppose $(A)$ is true. Following the derivation of $(B) \Rightarrow (A)$ we see that in order to prove its converse statement all we need to do is to prove that all bipartite quantum states can be written as $\sigma \coloneqq (\mathrm{id}_{\mathsf{R}} \otimes \Lambda_{2}) (\psi)$ for some pure state $\psi$ of a bipartite system, and quantum channel $\Lambda_{1}$. Then, we can add $H(\mathsf{R})$ to both sides of Eq.~(\ref{MImon2}) and we are done. So let $\rho$ be an arbitrary quantum state of any bipartite quantum system $\mathsf{R} \otimes \mathsf{S}_{2}$. Take its marginal with respect to the system $\mathsf{R}$, that is, $\tau \coloneqq \Tr_{\mathsf{R}}[\rho]$. Now let $\psi$ be a purification of $\tau$ with respect to a purification system $\mathsf{S}_{1}$ such that $\psi \in \mathsf{R} \otimes \mathsf{S}_{1}$. Thus we have proved the existence of a pure quantum state $\psi$ for which $\Tr_{\mathsf{S}_{1}}[\psi]=\Tr_{\mathsf{S}_{2}}[\rho]$. Now, using Lemma~\ref{Lemma1} we make sure the existence of a quantum channel $\Lambda_{1} \colon \mathsf{L}(\mathsf{S}_{1})\rightarrow \mathsf{L}(\mathsf{S}_{2})$ fulfilling the desired property. 
\end{proof}

\subsection{Quantum data processing inequalities for four-time-step Markovian processes} \label{DPIextraAppendix}
We argue here on the possible validity of the data processing inequalities not considered in Subsection~\ref{QMMViolation}. We also show that the DPIs not appearing in Fig.~\ref{monogamy_violation} do not witness the non-Marovianity of our example, and thus are irrelevant in this case.

One would expect for each DPI to have a valid quantum version in terms of coherent information. For instance, it is expected that the inequality
\begin{equation} \label{possibleDPI}
I_{c}(\Lambda_{1}(\rho);\Lambda_{2}) \geq I_{c}(\rho;\Lambda_{2} \circ \Lambda_{1})    
\end{equation}
constrains three-time-step quantum processes. This would be the quantum version of the CDPI given by $I(X_{2}:X_{3}) \geq I(X_{1}:X_{3})$, holding for any three-time-step classical process. 
Equation~(\ref{possibleDPI}) is clearly equivalent to the condition
\begin{equation} \label{NonTrivialDPi}
   H(\mathsf{E}_{1}|\mathsf{E}_{2}) \geq 0,
\end{equation}
where conditional entropy is computed on any quantum state of the form presented in Fig.~\ref{DPIdiagram}.

It is well known that the conditional quantum entropy may be negative, in contrast to its classical counterpart. This is indeed the case when the quantum state considered is maximally entangled, for instance. Although, it is not clear that this behavior may appear from the quantum state arising from a Markov process. The plot of Fig.~\ref{DPtest} shows that taking $U_{1},U_{2}$ to be the operation of Eq.~(\ref{UnitaryOperation}), and quantum states $\psi=\Psi^{+}$, $\varphi_{1}=\varphi_{2}=\ket{0}$, the condition in Eq.~(\ref{possibleDPI}) is satisfied for any value of $0 \leq \lambda \leq 1$. That is, the quantity 
\begin{equation} \label{witness6}
    \mathrm{DP}_{5} \coloneqq I_{c}(\Lambda_{1}(\rho);\Lambda_{2}) - I_{c}(\rho;\Lambda_{2} \circ \Lambda_{1})
\end{equation}
is positive for any value of $\lambda$.

We note there is no strong subaditivity solely implying Eq.~(\ref{NonTrivialDPi}), and we leave its proof as a future study.

\begin{figure}[t!]
\begin{center}

\includegraphics[width=0.5\textwidth]{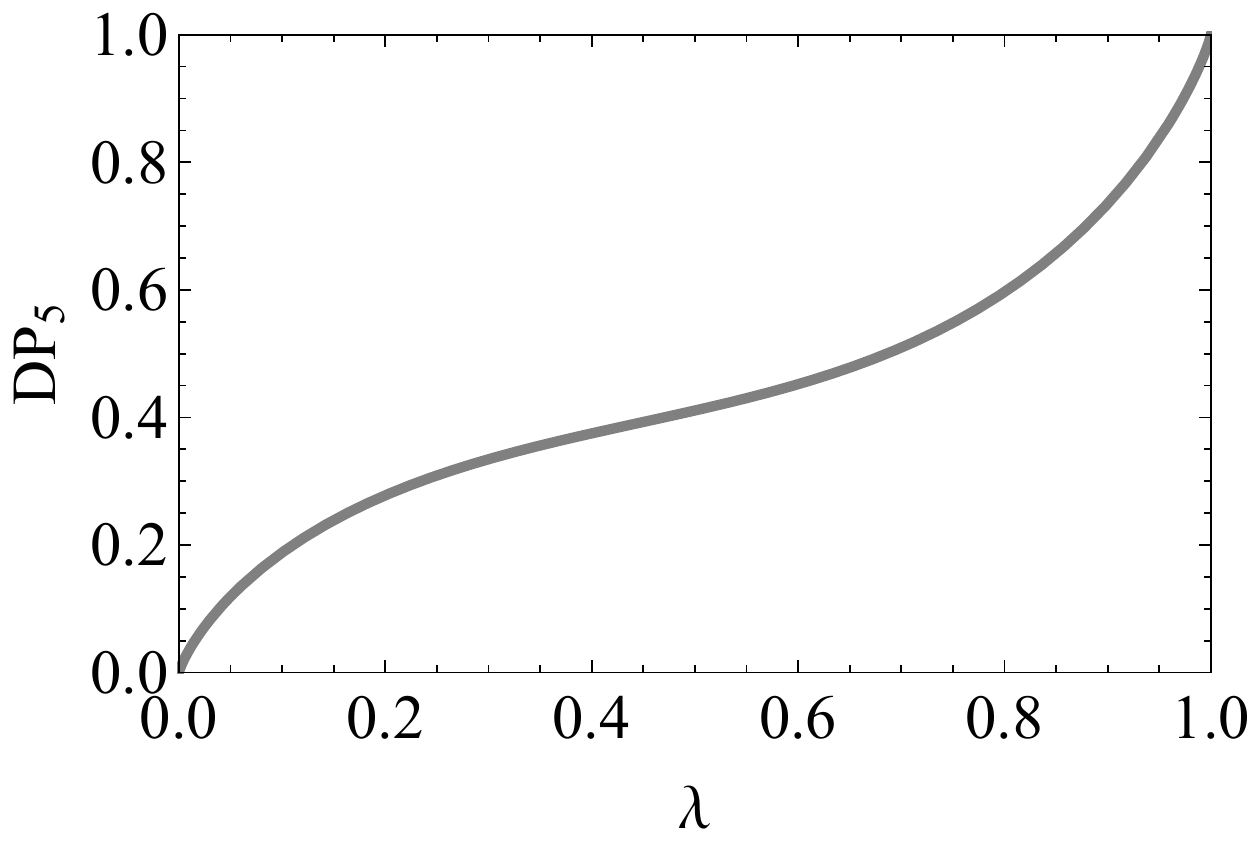}
    \caption{\textbf{Non-violation of $\mathrm{DP}_5$.} The quantity defined in Eq.~(\ref{witness6}) is non-negative for the Markov process in Fig.~\ref{DPIdiagram} with $\psi$ being the maximally entangled state, and $U_1,U_2$ the operator defined in Eq.~(\ref{UnitaryOperation}). } \label{DPtest}
\end{center}
\end{figure}

In Appendix~\ref{CohInftoMutInf}, we have proved the quantities defined in Equations~(\ref{witness1}-\ref{witness5}) are non-negative for all four-time-step quantum Markov processes. Now, supposing every classical DPI has a valid quantum version, we would have the following extra constraints:
\begin{align} 
\mathrm{DP}_{6} \coloneqq& I_{c}(\Lambda_{1}(\rho);\Lambda_{2}) - I_{c}(\rho;\Lambda_{3} \circ \Lambda_{2} \circ \Lambda_{1});\label{witness7} \\
\mathrm{DP}_{7} \coloneqq& I_{c}(\Lambda_{1}(\rho);\Lambda_{3} \circ \Lambda_{2}) - I_{c}(\rho;\Lambda_{3} \circ \Lambda_{2} \circ \Lambda_{1}) ;\label{witness8} \\
\mathrm{DP}_{8} \coloneqq& I_{c}(\Lambda_{2} \circ \Lambda_{1}(\rho);\Lambda_{3}) - I_{c}(\rho;\Lambda_{3} \circ \Lambda_{2} \circ \Lambda_{1}) ;\label{witness9}\\
\mathrm{DP}_{9} \coloneqq& I_{c}(\Lambda_{2} \circ \Lambda_{1}(\rho);\Lambda_{3}) - I_{c}(\Lambda_{1}(\rho);\Lambda_{3} \circ \Lambda_{2}).\label{witness10}
\end{align}

Moreover, for the example considered in Subsection~\ref{QMMViolation}, the DPIs in Equations~(\ref{witness6},\ref{witness7},\ref{witness8}) do not witness the non-Markovian behaviour of the process. See Fig.~\ref{DPIextra}. The remaining inequalities in Equations~(\ref{witness9},\ref{witness10}) involve terms related to a bipartite environmental system, and thus cannot be applied to our example.

\begin{figure}[t!]
\begin{center}

\includegraphics[width=0.5\textwidth]{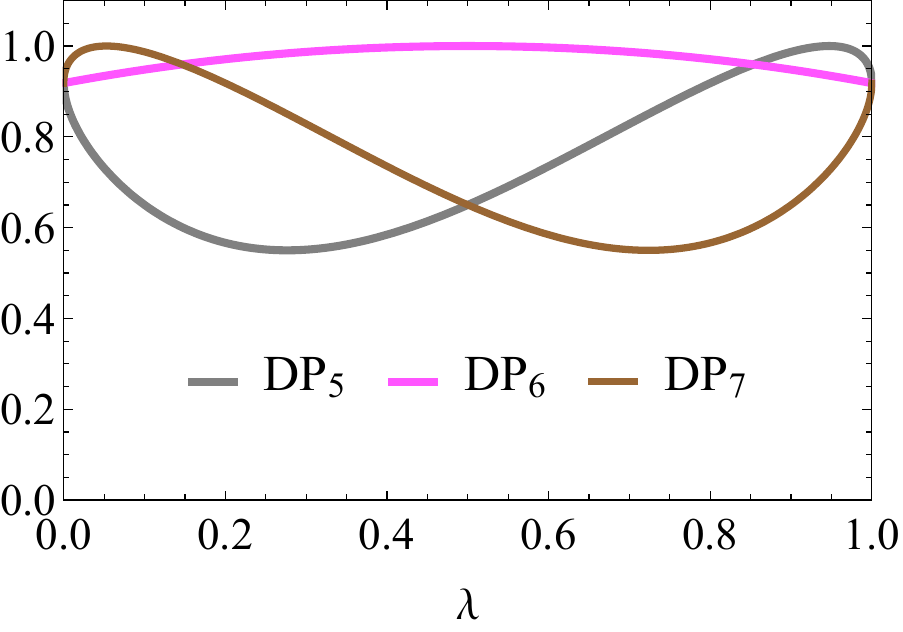}
    \caption{\textbf{Non-violation of $\mathrm{DP_{5}}$, $\mathrm{DP_{6}}$ and $\mathrm{DP_{7}}$ for the non-Markov process considered in Subsec.~\ref{QMMViolation}.} The quantities defined in Eq.s~(\ref{witness6},\ref{witness7},\ref{witness8}) are positive for the processes described by Eq.s~(\ref{gamma1},\ref{gamma2},\ref{gamma3},\ref{gamma4}) and represented in Fig.~\ref{NMdiagram}.} \label{DPIextra}
\end{center}
\end{figure}

\subsection{Quantum Markov monogamy theorem as the monotonicity of conditional quantum mutual information} \label{M4toCQMI}

Now the Markov monogamy inequalities are subjected to close scrutiny. This section deals with the monogamy inequality of four-time-step Markovian processes.

\begin{proposition} \label{propositionFour}
The following sentences are equivalent:
\begin{enumerate}[(A)]
\item The Monogamy inequality (\ref{monogamy}) holds for any quantum state $\rho$ of $\mathsf{S}_{1}$, and for any quantum channels $\Lambda_{1} \colon \mathsf{L}(\mathsf{S}_{1})\rightarrow \mathsf{L}(\mathsf{S}_{2})$, $\Lambda_{2} \colon \mathsf{L}(\mathsf{S}_{2})\rightarrow \mathsf{L}(\mathsf{S}_{3})$ and $\Lambda_{3} \colon \mathsf{L}(\mathsf{S}_{3})\rightarrow \mathsf{L}(\mathsf{S}_{4})$;
\item For any quantum state $\sigma$ of a tripartite system $\mathsf{A} \otimes \mathsf{B} \otimes \mathsf{C}$, and for any quantum operation $\Lambda \colon \mathsf{L}(\mathsf{B})\rightarrow \mathsf{L}(\mathsf{D})$, it holds that conditional quantum mutual information is monotonically decreasing under the action of a local operation $\Lambda$. That is,
\begin{equation}
I(\mathsf{A}:\mathsf{B}|\mathsf{C})_{\sigma} \geq I(\mathsf{A}:\mathsf{D}|\mathsf{C})_{(\mathrm{id}\otimes\Lambda\otimes\mathrm{id})(\sigma)}.
\end{equation}
\end{enumerate}
\end{proposition}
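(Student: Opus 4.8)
The plan is to mirror the proof of Proposition~\ref{Prop1}, replacing the purification argument for bipartite mutual information with a dilation argument that exposes the environment of the first channel as the conditioning system. The starting point is an algebraic identity. Using the dilations of Eq.~\eqref{dilation3} and the entropic expressions in Eqs.~(\ref{B1}-\ref{B4}), I would first substitute into the monogamy inequality~\eqref{monogamy}. Cancelling the common terms $H(\mathsf{S}_3)$ and $H(\mathsf{S}_4)$, the inequality collapses to
\begin{equation*}
H(\mathsf{R},\mathsf{S}_3) + H(\mathsf{R},\mathsf{E}_1,\mathsf{S}_4) \geq H(\mathsf{R},\mathsf{E}_1,\mathsf{S}_3) + H(\mathsf{R},\mathsf{S}_4),
\end{equation*}
which, upon reintroducing the mutually cancelling terms $H(\mathsf{R},\mathsf{E}_1)$ and $H(\mathsf{R})$, is precisely
\begin{equation*}
I(\mathsf{E}_1:\mathsf{S}_3\,|\,\mathsf{R}) \geq I(\mathsf{E}_1:\mathsf{S}_4\,|\,\mathsf{R}).
\end{equation*}
Here $\mathsf{S}_4$ is the output of $\Lambda_3$ acting on $\mathsf{S}_3$; since $\Lambda_3$ touches only $\mathsf{S}_3$ while $\mathsf{R}$ and $\mathsf{E}_1$ are spectators, the state on $\mathsf{R}\otimes\mathsf{E}_1\otimes\mathsf{S}_4$ is obtained from that on $\mathsf{R}\otimes\mathsf{E}_1\otimes\mathsf{S}_3$ by $(\mathrm{id}\otimes\Lambda_3\otimes\mathrm{id})$. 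This identity is the bridge between the two statements, and I would establish it first.

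With the identity in hand, the direction $(B)\Rightarrow(A)$ is immediate. Given an arbitrary process $\rho_1,\Lambda_1,\Lambda_2,\Lambda_3$, I dilate the channels, form the tripartite state on $\mathsf{E}_1\otimes\mathsf{S}_3\otimes\mathsf{R}$, and apply the monotonicity of conditional mutual information with $\mathsf{A}=\mathsf{E}_1$, $\mathsf{B}=\mathsf{S}_3$, $\mathsf{C}=\mathsf{R}$ and $\Lambda=\Lambda_3$. This yields $I(\mathsf{E}_1:\mathsf{S}_3\,|\,\mathsf{R})\geq I(\mathsf{E}_1:\mathsf{S}_4\,|\,\mathsf{R})$, which by the identity is exactly the monogamy inequality.

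For $(A)\Rightarrow(B)$ I would argue that any tripartite state together with a local channel can be realized in the monogamy configuration. Given $\sigma$ on $\mathsf{A}\otimes\mathsf{B}\otimes\mathsf{C}$ and $\Lambda\colon\mathsf{L}(\mathsf{B})\to\mathsf{L}(\mathsf{D})$, I relabel $\mathsf{A}=\mathsf{E}_1$, $\mathsf{B}=\mathsf{S}_3$, $\mathsf{C}=\mathsf{R}$, and purify $\sigma$ into an auxiliary system that I identify with the second environment $\mathsf{E}_2$, obtaining a pure state $\Phi$ on $\mathsf{R}\otimes\mathsf{E}_1\otimes\mathsf{E}_2\otimes\mathsf{S}_3$. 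Choosing $\psi$ to be any purification of the marginal $\sigma_\mathsf{R}$ into a space $\mathsf{S}_1$ of matching dimension, the states $\psi$ and $\Phi$ both purify $\sigma_\mathsf{R}$, so they are related by an isometry $W\colon\mathsf{S}_1\to\mathsf{E}_1\otimes\mathsf{E}_2\otimes\mathsf{S}_3$ on the purifying side (the analogue of Lemma~\ref{Lemma1}). I then factor $W$ through the intermediate system by taking $\mathsf{S}_2:=\mathsf{S}_3\otimes\mathsf{E}_2$, letting $W$ serve as the Stinespring isometry of $\Lambda_1$ with the $\mathsf{E}_1$ factor split off, and setting $\Lambda_2:=\Tr_{\mathsf{E}_2}$. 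Putting $\Lambda_3:=\Lambda$ and applying statement $(A)$ to this process reproduces, via the identity, the desired monotonicity $I(\mathsf{A}:\mathsf{B}|\mathsf{C})_\sigma\geq I(\mathsf{A}:\mathsf{D}|\mathsf{C})$.

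The main obstacle is this realizability step in $(A)\Rightarrow(B)$: unlike the bipartite case of Proposition~\ref{Prop1}, one of the three parties must be produced as the environment of a dilation rather than as a freely chosen reference, so one must verify that an arbitrary purification can be threaded through a two-channel process with a well-defined intermediate system. The factorization $\mathsf{S}_2=\mathsf{S}_3\otimes\mathsf{E}_2$ with $\Lambda_2$ a partial trace is what makes this work without imposing any constraint on $\sigma$, and confirming that the induced marginals on $\mathsf{R}$ and $\mathsf{E}_1$ agree with the target is the one routine check I would carry out explicitly.
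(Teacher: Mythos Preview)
Your proposal is correct and follows essentially the same route as the paper: both establish the algebraic bridge between the monogamy inequality and conditional mutual information monotonicity, and both prove $(A)\Rightarrow(B)$ by showing that an arbitrary tripartite state can be realized as the $\mathsf{R}\otimes\mathsf{E}_1\otimes\mathsf{S}_3$ state of some three-step process with $\Lambda_3$ then taken to be the given channel. The only cosmetic difference is that the paper invokes Lemma~\ref{Lemma1} twice---once to manufacture $\Lambda_1$ from a purification of the $\mathsf{A}\otimes\mathsf{C}$ marginal and once more to manufacture $\Lambda_2$---whereas you purify $\sigma$ in a single step and read both channels off the resulting isometry, with $\Lambda_2$ simply a partial trace.
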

\begin{proof}
The proof is similar to the one of Proposition~\ref{Prop1}. The sentence $(B) \Rightarrow (A)$ is trivially true. To prove $(A) \Rightarrow (B)$ we need to show that any tripartite quantum state $\rho$ can be written as $\rho=(\mathrm{id}_{\mathsf{A}}\otimes\Lambda\otimes\mathrm{id}_{\mathsf{C}})(\sigma)$, with $\sigma=(\mathrm{id}_{\mathsf{A}} \otimes \Phi) (\psi \otimes \varphi)$, where $\psi$ is a state of a bipartite system $\mathsf{A} \otimes \mathsf{D}$, $\varphi$ is a pure bipartite state of a system $\mathsf{E}$, $\Phi \colon \mathsf{L}(\mathsf{D} \otimes \mathsf{E}) \rightarrow \mathsf{L}(\mathsf{F} \otimes \mathsf{C})$ is a unitary quantum channel, and $\Lambda \colon \mathsf{L}(\mathsf{F})\rightarrow \mathsf{L}(\mathsf{B})$ is a quantum channel. It can be done by using Lemma~\ref{Lemma1} twice as follows.

Let $\tau$ be the tripartite quantum state obtained by swapping the systems $\mathsf{B}$ and $\mathsf{C}$ of $\rho$. Then define the marginal state with respect to the bipartite system $\mathsf{A \otimes C}$, $\omega\coloneqq\Tr_{\mathsf{B}}[\tau]$. By the same reasoning as in the proof of Proposition~\ref{Prop1}, we know there are a pure bipartite state $\psi \in \mathsf{L}(\mathsf{A} \otimes \mathsf{D})$ and a quantum channel $\Omega \coloneqq \mathsf{L}(\mathsf{D})\rightarrow \mathsf{L}(\mathsf{C})$ for which $\omega = (\mathrm{id}_{\mathsf{A}}\otimes\Omega) (\psi)$. Let $U \colon \mathsf{D} \otimes \mathsf{E} \rightarrow \mathsf{C} \otimes \mathsf{F}$ be a dilation of $\Omega$, such that for some pure state $\varphi$ of the system $\mathsf{E}$ we have $\Omega(\bullet)=\Tr_{\mathsf{F}}[U(\bullet \otimes \varphi)U^{\dagger}]$. Then define the unitary quantum channel $\widetilde{\Phi}(\bullet)=V(\bullet \otimes \varphi)V^{\dagger}$, where $V$ is the unitary operator obtained by the action of $U$ followed by the swapping operation. Define the pure tripartite quantum state $\eta= (\mathrm{id}_{\mathsf{A}}\otimes\widetilde{\Phi}) (\psi \otimes \varphi)$. The state $\eta$ is a purification of $\omega$. Then we have $\Tr_{\mathsf{F}}[\eta]=\Tr_{\mathsf{B}}[\tau]$. Thus, by Lemma~\ref{Lemma1}, there is a quantum channel $\Lambda \colon \mathsf{L}(\mathsf{F})\rightarrow \mathsf{L}(\mathsf{B})$ for which $\tau=(\mathrm{id}_{\mathsf{A}} \otimes \mathrm{id}_{\mathsf{C}}\otimes\Lambda) (\eta)$. Moreover, take the swapping of systems $\mathsf{C}$ and $\mathsf{B}$ of $\tau$ to recover 
\begin{equation}
\rho=(\mathrm{id}_{\mathsf{A}} \otimes \Lambda \otimes \mathrm{id}_{\mathsf{C}})(\mathrm{id}_{\mathsf{A}} \otimes \Phi) (\psi \otimes \varphi),
\end{equation}
with $\Phi(\bullet)=U (\bullet) U^{\dagger}$.
\end{proof}
\subsection{Quantum Markov monogamy inequalities for $n=3$}\label{M6}

The following results deal with the Markov monogamy conditions for six-time-step Markov processes.

\begin{theorem} \label{Monogamy6} For any Markov process 
\begin{equation*}
    \rho_1 \myeqA \rho_2 \myeqB \cdots \myeqfive \rho_6, 
\end{equation*}
it holds the following inequalities:
\begin{widetext}
\begin{eqnarray}
		  I_{c}(\rho_{1}:\rho_{6})+I_{c}(\rho_{2}:\rho_{5})+I_{c}(\rho_{3}:\rho_{4}) 
  		  \geq
  		  I_{c}(\rho_{1}:\rho_{4})+I_{c}(\rho_{2}:\rho_{6})+I_{c}(\rho_{3}:\rho_{5});\label{M6a}\\
  		  I_{c}(\rho_{1}:\rho_{6})+I_{c}(\rho_{2}:\rho_{5})+I_{c}(\rho_{3}:\rho_{4})    
  		  \geq
		  I_{c}(\rho_{1}:\rho_{5})+I_{c}(\rho_{2}:\rho_{4})+I_{c}(\rho_{3}:\rho_{6}).\label{M6b}  		  
\end{eqnarray}
\end{widetext}
\end{theorem}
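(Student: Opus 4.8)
The plan is to reproduce the strategy of Theorem~\ref{M4thm}, now with five channels and a correspondingly larger environment. First I would dilate each channel $\Lambda_{i}$ (for $i=1,\dots,5$) exactly as in Eq.~\eqref{dilation3}, introducing unitaries $U_{i}\colon \mathsf{S}_{i}\otimes\mathsf{F}_{i}\to\mathsf{S}_{i+1}\otimes\mathsf{E}_{i}$ and pure ancillas $\varphi_{i}$, and purify $\rho_{1}$ with a reference $\mathsf{R}$. This produces a staircase purification analogous to Fig.~\ref{Mdiagram} in which, after the first $s-1$ unitaries have acted, the global state on $\mathsf{R}\otimes\mathsf{E}_{1}\otimes\cdots\otimes\mathsf{E}_{s-1}\otimes\mathsf{S}_{s}$ is pure, and the reduced state on $\mathsf{E}_{1},\dots,\mathsf{E}_{s-1}$ is untouched by later unitaries. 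The key observation, exactly in the spirit of Eqs.~\eqref{B1}--\eqref{C4}, is that every coherent-information term reduces to the single closed form
\begin{equation*}
I_{c}(\rho_{r}:\rho_{s}) = H(\mathsf{S}_{s}) - H(\mathsf{E}_{r},\mathsf{E}_{r+1},\dots,\mathsf{E}_{s-1}),
\end{equation*}
because $\rho_{r}$ is purified by $\mathsf{R}\otimes\mathsf{E}_{1}\otimes\cdots\otimes\mathsf{E}_{r-1}$ and, by purity of the time-$s$ snapshot, the second entropy in the definition~\eqref{coherentinf} equals the entropy of the complementary environment block $\mathsf{E}_{r},\dots,\mathsf{E}_{s-1}$.

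Substituting this formula into~\eqref{M6a} and~\eqref{M6b}, I would observe that each of $\mathsf{S}_{4},\mathsf{S}_{5},\mathsf{S}_{6}$ appears once on each side, so the ``diagonal'' terms $H(\mathsf{S}_{s})$ cancel and both inequalities collapse to purely environmental statements. Concretely, \eqref{M6a} becomes
\begin{multline*}
H(\mathsf{E}_{1},\mathsf{E}_{2},\mathsf{E}_{3}) + H(\mathsf{E}_{2},\mathsf{E}_{3},\mathsf{E}_{4},\mathsf{E}_{5}) + H(\mathsf{E}_{3},\mathsf{E}_{4}) \\ \geq H(\mathsf{E}_{1},\mathsf{E}_{2},\mathsf{E}_{3},\mathsf{E}_{4},\mathsf{E}_{5}) + H(\mathsf{E}_{2},\mathsf{E}_{3},\mathsf{E}_{4}) + H(\mathsf{E}_{3}),
\end{multline*}
while \eqref{M6b} becomes
\begin{multline*}
H(\mathsf{E}_{1},\mathsf{E}_{2},\mathsf{E}_{3},\mathsf{E}_{4}) + H(\mathsf{E}_{2},\mathsf{E}_{3}) + H(\mathsf{E}_{3},\mathsf{E}_{4},\mathsf{E}_{5}) \\ \geq H(\mathsf{E}_{1},\mathsf{E}_{2},\mathsf{E}_{3},\mathsf{E}_{4},\mathsf{E}_{5}) + H(\mathsf{E}_{2},\mathsf{E}_{3},\mathsf{E}_{4}) + H(\mathsf{E}_{3}).
\end{multline*}

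Each of these follows from two applications of strong subadditivity~\cite{lieb1973proof}. For the first, I would apply SSA with blocks $\mathsf{E}_{1}\,|\,\mathsf{E}_{2}\mathsf{E}_{3}\,|\,\mathsf{E}_{4}\mathsf{E}_{5}$, giving $H(\mathsf{E}_{1},\mathsf{E}_{2},\mathsf{E}_{3})+H(\mathsf{E}_{2},\mathsf{E}_{3},\mathsf{E}_{4},\mathsf{E}_{5})\geq H(\mathsf{E}_{1},\dots,\mathsf{E}_{5})+H(\mathsf{E}_{2},\mathsf{E}_{3})$, together with SSA on the blocks $\mathsf{E}_{2}\,|\,\mathsf{E}_{3}\,|\,\mathsf{E}_{4}$, giving $H(\mathsf{E}_{2},\mathsf{E}_{3})+H(\mathsf{E}_{3},\mathsf{E}_{4})\geq H(\mathsf{E}_{2},\mathsf{E}_{3},\mathsf{E}_{4})+H(\mathsf{E}_{3})$; adding these and cancelling $H(\mathsf{E}_{2},\mathsf{E}_{3})$ yields the claim. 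For the second inequality the latter SSA step is identical, while the former is replaced by SSA on the blocks $\mathsf{E}_{1}\mathsf{E}_{2}\,|\,\mathsf{E}_{3}\mathsf{E}_{4}\,|\,\mathsf{E}_{5}$, giving $H(\mathsf{E}_{1},\dots,\mathsf{E}_{4})+H(\mathsf{E}_{3},\mathsf{E}_{4},\mathsf{E}_{5})\geq H(\mathsf{E}_{1},\dots,\mathsf{E}_{5})+H(\mathsf{E}_{3},\mathsf{E}_{4})$; adding and cancelling $H(\mathsf{E}_{3},\mathsf{E}_{4})$ finishes the proof.

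I expect the main obstacle to be bookkeeping rather than any deep estimate: correctly identifying the reference system purifying each $\rho_{r}$ and verifying, via purity of the intermediate snapshots, that the diagonal entropies $H(\mathsf{S}_{s})$ cancel and that each surviving term is the entropy of a contiguous block of environments. Once the closed-form expression for $I_{c}(\rho_{r}:\rho_{s})$ is established, both inequalities reduce to a short, essentially mechanical combination of strong subadditivity that directly generalizes the single SSA step used for $n=2$ in Theorem~\ref{M4thm}. The only genuinely creative ingredient is spotting the right telescoping SSA partitions, and it is precisely the combinatorics of these partitions that I would expect to be the bottleneck when attempting the general-$n$ statement of Conjecture~\ref{conj2}.
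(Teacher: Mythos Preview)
Your proposal is correct and follows essentially the same strategy as the paper: dilate each $\Lambda_{i}$, express every $I_{c}(\rho_{r}:\rho_{s})$ as $H(\mathsf{S}_{s})-H(\mathsf{E}_{r},\dots,\mathsf{E}_{s-1})$ so that the $H(\mathsf{S}_{s})$ terms cancel, and then reduce each inequality to a sum of two strong-subadditivity instances on the environment block. The only difference is cosmetic: for \eqref{M6a} the paper uses the pair $I(\mathsf{E}_{1}\!:\!\mathsf{E}_{5}|\mathsf{E}_{2}\mathsf{E}_{3}\mathsf{E}_{4})\ge 0$ and $I(\mathsf{E}_{1}\mathsf{E}_{2}\!:\!\mathsf{E}_{4}|\mathsf{E}_{3})\ge 0$ (telescoping through $H(\mathsf{E}_{1},\dots,\mathsf{E}_{4})$), whereas you telescope through $H(\mathsf{E}_{2},\mathsf{E}_{3})$; for \eqref{M6b} the paper again starts from $I(\mathsf{E}_{1}\!:\!\mathsf{E}_{5}|\mathsf{E}_{2}\mathsf{E}_{3}\mathsf{E}_{4})\ge 0$ and telescopes through $H(\mathsf{E}_{2},\mathsf{E}_{3},\mathsf{E}_{4},\mathsf{E}_{5})$, while you telescope through $H(\mathsf{E}_{3},\mathsf{E}_{4})$. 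Both routes are two SSA steps of the same flavor and yield the identical environmental inequality, so there is no substantive divergence.
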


\begin{proof}
Let the quantum channels $\Lambda_{i} \colon \mathsf{L}(\mathsf{S}_{i}) \rightarrow \mathsf{L}(\mathsf{S}_{i+1})$ -- with $i=1,\cdots,5$ -- have isometric representation given by $V_{i}:\mathsf{S}_{i} \rightarrow \mathsf{S}_{i+1} \otimes \mathsf{E}_{i}$. The proof are given by the strong subadditivity inequalities relating the environmental systems $\mathsf{E}_i$ such that added together imply the desired Monogamy inequality.

In order to prove Eq.~(\ref{M6a}) add the strong subadditivity inequalities
\begin{eqnarray}
I(\mathsf{E}_{1}:\mathsf{E}_{5}|\mathsf{E}_{2},\mathsf{E}_{3},\mathsf{E}_{4}) &\geq& 0; \\
I(\mathsf{E}_{1},\mathsf{E}_{2}:\mathsf{E}_{4}|\mathsf{E}_{3}) &\geq& 0.
\end{eqnarray}

Now, to prove Eq.~(\ref{M6b}) consider 
\begin{eqnarray}
I(\mathsf{E}_{1}:\mathsf{E}_{5}|\mathsf{E}_{2},\mathsf{E}_{3},\mathsf{E}_{4}) &\geq& 0; \\
I(\mathsf{E}_{1}:\mathsf{E}_{4},\mathsf{E}_{5}|\mathsf{E}_{3}) &\geq& 0.
\end{eqnarray}
\end{proof}
Now we consider how the Markov monogamy inequalities for six-time-step processes can be equivalently stated in terms of conditional quantum mutual information.
\begin{proposition} \label{propositionSix}
The following sentences are equivalent:
\begin{enumerate}[(A)]
\item Theorem \ref{Monogamy6} holds;
\item For any quantum state $\rho \in  \mathsf{L} (\mathsf{R} \otimes \mathsf{E}_{1} \otimes \mathsf{E}_{2} \otimes \mathsf{S}_{4})$, and for any quantum channels $\Lambda_{4} \colon \mathsf{L} (\mathsf{S}_{4}) \rightarrow  \mathsf{L} (\mathsf{S}_{5})$ and $\Lambda_{5} \colon \mathsf{L} (\mathsf{S}_{5}) \rightarrow  \mathsf{L} (\mathsf{S}_{6})$, it holds that
\begin{eqnarray}
I( \mathsf{R} : \mathsf{S}_{4} | \mathsf{E}_{1} , \mathsf{E}_{2}) \geq I( \mathsf{R} , \mathsf{E}_{1} : \mathsf{S}_{5}| \mathsf{E}_{2}) + I( \mathsf{R} : \mathsf{S}_{6}| \mathsf{E}_{1}); \\
I( \mathsf{R} , \mathsf{E}_{1} : \mathsf{S}_{4} |  \mathsf{E}_{2}) + I( \mathsf{R} : \mathsf{S}_{5}| \mathsf{E}_{1}) \geq  I( \mathsf{R} : \mathsf{S}_{6}| \mathsf{E}_{1} , \mathsf{E}_{2}).
\end{eqnarray}
\end{enumerate}
\end{proposition}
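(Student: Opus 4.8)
The plan is to follow the template of Propositions~\ref{Prop1} and~\ref{propositionFour}: prove the two implications $(B)\Rightarrow(A)$ and $(A)\Rightarrow(B)$ separately, with the first a direct instantiation and the second a realization argument resting on Lemma~\ref{Lemma1}. Throughout, the bridge between the two formulations is the entropy rewriting already used in the proof of Theorem~\ref{Monogamy6}, where each coherent information $I_c(\rho_r:\rho_s)$ is expanded as a difference of von Neumann entropies and the purity relations (the six-step analogues of $H(\mathsf{R},\mathsf{S}_3)=H(\mathsf{E}_1,\mathsf{E}_2)$) are applied.

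For $(B)\Rightarrow(A)$, I would take an arbitrary six-time-step Markov process and dilate its first three channels $\Lambda_1,\Lambda_2,\Lambda_3$ by isometries $V_i:\mathsf{S}_i\to\mathsf{S}_{i+1}\otimes\mathsf{E}_i$, so that after three steps the register sits in a pure state on $\mathsf{R}\otimes\mathsf{E}_1\otimes\mathsf{E}_2\otimes\mathsf{E}_3\otimes\mathsf{S}_4$. Choosing this as the input state $\rho$ of statement $(B)$, with the extra environment $\mathsf{E}_3$ grouped into the reference register, and taking $\Lambda_4,\Lambda_5$ to be the remaining two channels, the two conditional-mutual-information inequalities of $(B)$ become relations among the same environments $\mathsf{E}_i$. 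Rewriting each conditional mutual information through the entropy and purity identities collapses $(B)$ onto the coherent-information inequalities~\eqref{M6a} and~\eqref{M6b}, giving $(B)\Rightarrow(A)$ directly.

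The substantive direction is $(A)\Rightarrow(B)$. Here I am handed an arbitrary state $\rho$ on $\mathsf{R}\otimes\mathsf{E}_1\otimes\mathsf{E}_2\otimes\mathsf{S}_4$ together with $\Lambda_4,\Lambda_5$, and I must exhibit a genuine six-time-step Markov process for which Theorem~\ref{Monogamy6} reproduces the two inequalities. The strategy is to realize $\mathsf{E}_1,\mathsf{E}_2$ (and an auxiliary $\mathsf{E}_3$ that purifies $\rho$) as the Stinespring environments of three successive channels $\Lambda_1,\Lambda_2,\Lambda_3$. Concretely I would purify $\rho$, then peel off the registers $\mathsf{E}_3,\mathsf{E}_2,\mathsf{E}_1$ one at a time, at each stage invoking Lemma~\ref{Lemma1} to produce a channel whose environment is exactly that register — the same marginalize/purify/dilate move used twice in Proposition~\ref{propositionFour}, now iterated three times. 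Prepending $\Lambda_1,\Lambda_2,\Lambda_3$ to $\Lambda_4,\Lambda_5$ yields a six-step Markov chain whose coherent-information monogamy relations, after the entropy rewriting, are precisely the inequalities of $(B)$, so that $(A)$ applied to this chain delivers $(B)$.

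The main obstacle is the bookkeeping of this realization. Each use of Lemma~\ref{Lemma1} only guarantees a channel matching a prescribed marginal, so I must verify that the three channels it produces genuinely compose into a Markov chain in which $\Lambda_i$ acts only on $\mathsf{S}_i$ and emits $\mathsf{E}_i$, and that the purity and independence conditions needed to identify the conditional mutual informations with coherent informations (the role played by the requirement ``$\psi_i$ purifies $\rho_i$'' in the earlier proofs) hold after each peeling step. Arranging the reference $\mathsf{R}$ and the auxiliary environment $\mathsf{E}_3$ to play their correct roles simultaneously — so that \emph{both} inequalities of $(B)$, and not merely one, emerge from the two strong-subadditivity inputs feeding~\eqref{M6a} and~\eqref{M6b} — is where the care lies; the remaining entropy algebra is routine and already exercised in Theorem~\ref{Monogamy6} and Proposition~\ref{propositionFour}.
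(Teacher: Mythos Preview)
Your proposal is correct and follows precisely the approach the paper indicates: the paper omits the proof of Proposition~\ref{propositionSix} entirely, stating only that it ``is similar to the one presented for Proposition~\ref{propositionFour}'', and you have reproduced exactly that template --- $(B)\Rightarrow(A)$ by direct instantiation and $(A)\Rightarrow(B)$ by the marginalize/purify/dilate realization via Lemma~\ref{Lemma1}, now iterated three times rather than twice. The only new content relative to the four-step case is the extra peeling layer and the bookkeeping you flag, which is indeed where the care lies.
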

The proof of Proposition~\ref{propositionSix} is similar to the one presented for Proposition ~\ref{propositionFour}, so we do not include it here.

\subsection{Quantum Markov monogamy inequalities for $n=4$}\label{M8}

The following Theorems deal with the Markov monogamy conditions for eight-time-step Markov processes.

\begin{theorem} \label{Monogamy8} It holds that for any Markov process 
\begin{equation*}
    \rho_1 \myeqA \rho_2 \myeqB \cdots \myeqseven \rho_8, 
\end{equation*}
\begin{widetext}
\begin{eqnarray}
    I_{c}(\rho_{1}:\rho_{8})+I_{c}(\rho_{2}:\rho_{7})+I_{c}(\rho_{3}:\rho_{6})+I_{c}(\rho_{4}:\rho_{5}) 
    \geq
    I_{c}(\rho_{1}:\rho_{5})+I_{c}(\rho_{2}:\rho_{8})+I_{c}(\rho_{3}:\rho_{7})+I_{c}(\rho_{4}:\rho_{6}) \label{M8a}; \\
    I_{c}(\rho_{1}:\rho_{8})+I_{c}(\rho_{2}:\rho_{7})+I_{c}(\rho_{3}:\rho_{6})+I_{c}(\rho_{4}:\rho_{5}) 
    \geq
    I_{c}(\rho_{1}:\rho_{7})+I_{c}(\rho_{2}:\rho_{5})+I_{c}(\rho_{3}:\rho_{8})+I_{c}(\rho_{4}:\rho_{6}) \label{M8b}; \\
    I_{c}(\rho_{1}:\rho_{8})+I_{c}(\rho_{2}:\rho_{7})+I_{c}(\rho_{3}:\rho_{6})+I_{c}(\rho_{4}:\rho_{5}) 
    \geq
    I_{c}(\rho_{1}:\rho_{6})+I_{c}(\rho_{2}:\rho_{8})+I_{c}(\rho_{3}:\rho_{5})+I_{c}(\rho_{4}:\rho_{7}) \label{M8c}; \\
    I_{c}(\rho_{1}:\rho_{8})+I_{c}(\rho_{2}:\rho_{7})+I_{c}(\rho_{3}:\rho_{6})+I_{c}(\rho_{4}:\rho_{5})     	\geq
    I_{c}(\rho_{1}:\rho_{5})+I_{c}(\rho_{2}:\rho_{6})+I_{c}(\rho_{3}:\rho_{8})+I_{c}(\rho_{4}:\rho_{7}) \label{M8d}; \\
    I_{c}(\rho_{1}:\rho_{8})+I_{c}(\rho_{2}:\rho_{7})+I_{c}(\rho_{3}:\rho_{6})+I_{c}(\rho_{4}:\rho_{5}) 
    \geq
    I_{c}(\rho_{1}:\rho_{7})+I_{c}(\rho_{2}:\rho_{6})+I_{c}(\rho_{3}:\rho_{5})+I_{c}(\rho_{4}:\rho_{8}) \label{M8e}; \\
    I_{c}(\rho_{1}:\rho_{8})+I_{c}(\rho_{2}:\rho_{7})+I_{c}(\rho_{3}:\rho_{6})+I_{c}(\rho_{4}:\rho_{5}) 
    \geq
    I_{c}(\rho_{1}:\rho_{6})+I_{c}(\rho_{2}:\rho_{5})+I_{c}(\rho_{3}:\rho_{7})+I_{c}(\rho_{4}:\rho_{8}) \label{M8f}; \\
    I_{c}(\rho_{1}:\rho_{8})+I_{c}(\rho_{2}:\rho_{7})+I_{c}(\rho_{3}:\rho_{6})+I_{c}(\rho_{4}:\rho_{5}) 
    \geq
    I_{c}(\rho_{1}:\rho_{5})+I_{c}(\rho_{2}:\rho_{6})+I_{c}(\rho_{3}:\rho_{7})+I_{c}(\rho_{4}:\rho_{8}) \label{M8g}.
\end{eqnarray}
\end{widetext}
\end{theorem}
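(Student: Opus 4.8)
The plan is to follow exactly the route of Theorems~\ref{M4thm} and~\ref{Monogamy6}: dilate every channel, rewrite each coherent information as a difference of von Neumann entropies, and reduce all seven inequalities to sums of strong-subadditivity (SSA) constraints on the environment systems. First I would fix an isometric dilation $V_i \colon \mathsf{S}_i \rightarrow \mathsf{S}_{i+1} \otimes \mathsf{E}_i$ of each $\Lambda_i$ (for $i=1,\dots,7$) and purify $\rho_1$ with a reference $\mathsf{R}$, so that after applying $V_1,\dots,V_7$ the global state on $\mathsf{R}\,\mathsf{E}_1\cdots\mathsf{E}_7\,\mathsf{S}_8$ is pure. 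The structural step, generalising Eqs.~(\ref{B1})--(\ref{C4}), is the uniform identity
\begin{equation}
I_{c}(\rho_{r}:\rho_{s}) = H(\mathsf{S}_s) - H(\mathsf{E}_r,\mathsf{E}_{r+1},\dots,\mathsf{E}_{s-1}),
\end{equation}
valid for all $1\le r<s\le 8$. This holds because $\rho_r$ is purified by $\mathsf{R}\,\mathsf{E}_1\cdots\mathsf{E}_{r-1}$, applying the composite channel $\bigcirc_{i=r}^{s-1}\Lambda_i$ traces out $\mathsf{E}_r,\dots,\mathsf{E}_{s-1}$ and leaves a state on $\mathsf{R}\,\mathsf{E}_1\cdots\mathsf{E}_{r-1}\,\mathsf{S}_s$, and purity of $\mathsf{R}\,\mathsf{E}_1\cdots\mathsf{E}_{s-1}\,\mathsf{S}_s$ converts $H(\mathsf{R},\mathsf{E}_1,\dots,\mathsf{E}_{r-1},\mathsf{S}_s)$ into the entropy of its complement $H(\mathsf{E}_r,\dots,\mathsf{E}_{s-1})$.

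Writing $h(a,b) \coloneqq H(\mathsf{E}_a,\dots,\mathsf{E}_b)$ for a block of environments, the left-hand side of every inequality~(\ref{M8a})--(\ref{M8g}) contributes the fixed nested family $h(1,7),h(2,6),h(3,5),h(4,4)$, while the right-hand side contributes the four blocks dictated by the bijection $f$. Since $f$ permutes the output times, the endpoint multiset of the right-hand blocks is always $\{5,6,7,8\}$, matching that of the left-hand blocks; hence all $H(\mathsf{S}_s)$ terms cancel and each inequality collapses to the purely entropic claim that the sum of the right-hand blocks dominates the sum of the nested left-hand blocks.

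The engine is submodularity of the von Neumann entropy: for $a\le b\le c\le d$,
\begin{equation}
h(a,c) + h(b,d) \ge h(a,d) + h(b,c),
\end{equation}
which is precisely the SSA instance $I(\mathsf{E}_a,\dots,\mathsf{E}_{b-1}:\mathsf{E}_{c+1},\dots,\mathsf{E}_d|\mathsf{E}_b,\dots,\mathsf{E}_c)\ge 0$. I would then exhibit, for each of the seven bijections, a short telescoping chain of such inequalities whose sum is the reduced inequality. For instance, Eq.~(\ref{M8g}) is the sum of $I(\mathsf{E}_1,\mathsf{E}_2,\mathsf{E}_3:\mathsf{E}_5,\mathsf{E}_6,\mathsf{E}_7|\mathsf{E}_4)\ge0$ and $I(\mathsf{E}_2:\mathsf{E}_6|\mathsf{E}_3,\mathsf{E}_4,\mathsf{E}_5)\ge0$, while Eq.~(\ref{M8a}) is the sum of $I(\mathsf{E}_1:\mathsf{E}_5,\mathsf{E}_6,\mathsf{E}_7|\mathsf{E}_2,\mathsf{E}_3,\mathsf{E}_4)\ge0$, $I(\mathsf{E}_2:\mathsf{E}_5,\mathsf{E}_6|\mathsf{E}_3,\mathsf{E}_4)\ge0$ and $I(\mathsf{E}_3:\mathsf{E}_5|\mathsf{E}_4)\ge0$, the auxiliary blocks $h(2,4)$ and $h(3,4)$ cancelling telescopically. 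The remaining five cases admit analogous chains of at most three terms.

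Conceptually this is the standard uncrossing argument for a submodular set function: the right-hand interval family is a crossing configuration that can be uncrossed, one overlapping pair at a time, into the maximally nested left-hand family, each step being an SSA inequality. This guarantees that a valid decomposition always exists, so the real work — and the main obstacle — is purely the bookkeeping of writing down the explicit chain for each of the seven permutations and verifying that the intermediate blocks cancel. Because the chains are short and entirely mechanical, I expect this to be routine rather than deep; the only genuinely conceptual ingredient is the reduction identity of the first paragraph.
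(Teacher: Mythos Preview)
Your proposal is correct and follows essentially the same route as the paper: purify, dilate each $\Lambda_i$, convert every $I_c(\rho_r:\rho_s)$ into environment entropies, and then exhibit a sum of strong-subadditivity instances on the $\mathsf{E}_i$'s that yields each inequality. The paper simply lists the SSA instances case by case; your submodularity/uncrossing framing is a cleaner explanation of \emph{why} such a decomposition must exist, and your specific chains are sometimes shorter (e.g.\ two SSA terms for Eq.~(\ref{M8g}) versus the paper's four), but the underlying argument is the same.
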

\begin{proof}
To prove Eq.~(\ref{M8a}) add
\begin{eqnarray}
I(\mathsf{E}_{1}:\mathsf{E}_{7}|\mathsf{E}_{2},\mathsf{E}_{3},\mathsf{E}_{4},\mathsf{E}_{5},\mathsf{E}_{6}) &\geq& 0; \\
I(\mathsf{E}_{1},\mathsf{E}_{2}:\mathsf{E}_{6}|\mathsf{E}_{3},\mathsf{E}_{4},\mathsf{E}_{5}) &\geq& 0; \\
I(\mathsf{E}_{1},\mathsf{E}_{2},\mathsf{E}_{3}:\mathsf{E}_{5}|\mathsf{E}_{4}) &\geq& 0.
\end{eqnarray}

To prove Eq.~(\ref{M8b}) add
\begin{eqnarray}
I(\mathsf{E}_{1}:\mathsf{E}_{7}|\mathsf{E}_{2},\mathsf{E}_{3},\mathsf{E}_{4},\mathsf{E}_{5},\mathsf{E}_{6}) &\geq& 0; \\
I(\mathsf{E}_{2}:\mathsf{E}_{6},\mathsf{E}_{7}|\mathsf{E}_{3},\mathsf{E}_{4},\mathsf{E}_{5}) &\geq& 0; \\
I(\mathsf{E}_{2},\mathsf{E}_{3}:\mathsf{E}_{5}|\mathsf{E}_{4}) &\geq& 0.
\end{eqnarray}

To prove Eq.~(\ref{M8c}) add
\begin{eqnarray}
I(\mathsf{E}_{1}:\mathsf{E}_{7}|\mathsf{E}_{2},\mathsf{E}_{3},\mathsf{E}_{4},\mathsf{E}_{5},\mathsf{E}_{6}) &\geq& 0; \\
I(\mathsf{E}_{1},\mathsf{E}_{2}:\mathsf{E}_{6}|\mathsf{E}_{3},\mathsf{E}_{4},\mathsf{E}_{5}) &\geq& 0; \\
I(\mathsf{E}_{3}:\mathsf{E}_{5},\mathsf{E}_{6}|\mathsf{E}_{4}) &\geq& 0.
\end{eqnarray}

To prove Eq.~(\ref{M8d}) add
\begin{eqnarray}
I(\mathsf{E}_{1}:\mathsf{E}_{7}|\mathsf{E}_{2},\mathsf{E}_{3},\mathsf{E}_{4},\mathsf{E}_{5},\mathsf{E}_{6}) &\geq& 0; \\
I(\mathsf{E}_{2}:\mathsf{E}_{6},\mathsf{E}_{7}|\mathsf{E}_{3},\mathsf{E}_{4},\mathsf{E}_{5}) &\geq& 0; \\
I(\mathsf{E}_{1},\mathsf{E}_{2},\mathsf{E}_{3}:\mathsf{E}_{5},\mathsf{E}_{6}|\mathsf{E}_{4}) &\geq& 0.
\end{eqnarray}

To prove Eq.~(\ref{M8e}) add
\begin{eqnarray}
I(\mathsf{E}_{1}:\mathsf{E}_{7}|\mathsf{E}_{2},\mathsf{E}_{3},\mathsf{E}_{4},\mathsf{E}_{5},\mathsf{E}_{6}) &\geq& 0; \\
I(\mathsf{E}_{2}:\mathsf{E}_{6},\mathsf{E}_{7}|\mathsf{E}_{3},\mathsf{E}_{4},\mathsf{E}_{5}) &\geq& 0; \\
I(\mathsf{E}_{3}:\mathsf{E}_{5},\mathsf{E}_{6},\mathsf{E}_{7}|\mathsf{E}_{4}) &\geq& 0.
\end{eqnarray}

To prove Eq.~(\ref{M8f}) add
\begin{eqnarray}
I(\mathsf{E}_{1}:\mathsf{E}_{7}|\mathsf{E}_{2},\mathsf{E}_{3},\mathsf{E}_{4},\mathsf{E}_{5},\mathsf{E}_{6}) &\geq& 0; \\
I(\mathsf{E}_{1},\mathsf{E}_{2}:\mathsf{E}_{6}|\mathsf{E}_{3},\mathsf{E}_{4},\mathsf{E}_{5}) &\geq& 0; \\
I(\mathsf{E}_{2},\mathsf{E}_{3}:\mathsf{E}_{5},\mathsf{E}_{6},\mathsf{E}_{7}|\mathsf{E}_{4}) &\geq& 0.
\end{eqnarray}

To prove Eq.~(\ref{M8g}) add
\begin{eqnarray}
I(\mathsf{E}_{1}:\mathsf{E}_{7}|\mathsf{E}_{2},\mathsf{E}_{3},\mathsf{E}_{4},\mathsf{E}_{5},\mathsf{E}_{6}) &\geq& 0; \\
I(\mathsf{E}_{1},\mathsf{E}_{2},\mathsf{E}_{3}:\mathsf{E}_{5},\mathsf{E}_{6}|\mathsf{E}_{4}) &\geq& 0; \\
I(\mathsf{E}_{2}:\mathsf{E}_{6},\mathsf{E}_{7}|\mathsf{E}_{3},\mathsf{E}_{4},\mathsf{E}_{5}) &\geq& 0; \\
I(\mathsf{E}_3:\mathsf{E}_7|\mathsf{E}_4,\mathsf{E}_5,\mathsf{E}_6) &\geq& 0.
\end{eqnarray}
\end{proof}
It follows the Markov monogamy theorems for eight-time-step processes in term of conditional quantum mutual information. Again, the proof of Proposition~\ref{propositionEight} below follows the same steps as Proposition~\ref{propositionFour} and is left absent here.
\begin{proposition} \label{propositionEight}
The following sentences are equivalent:
\begin{enumerate}[(A)]
\item Theorem \ref{Monogamy8} holds;
\item For any quantum state $\rho \in  \mathsf{L} (\mathsf{R} \otimes \mathsf{E}_{1} \otimes \mathsf{E}_{2} \otimes \mathsf{E}_{3} \otimes \mathsf{S}_{5})$, and for any quantum channels $\Lambda_{5} \colon \mathsf{L} (\mathsf{S}_{5}) \rightarrow  \mathsf{L} (\mathsf{S}_{6})$, $\Lambda_{6} \colon \mathsf{L} (\mathsf{S}_{6}) \rightarrow  \mathsf{L} (\mathsf{S}_{7})$ and $\Lambda_{7} \colon \mathsf{L} (\mathsf{S}_{7}) \rightarrow  \mathsf{L} (\mathsf{S}_{8})$, it holds that
\begin{widetext}
\begin{eqnarray}
I(\mathsf{R}:\mathsf{S}_{5}|\mathsf{E}_{1},\mathsf{E}_{2},\mathsf{E}_{3}) \geq I(\mathsf{R},\mathsf{E}_{1},\mathsf{E}_{2}:\mathsf{S}_{6}|\mathsf{E}_{3})+I(\mathsf{R},\mathsf{E}_{1}:\mathsf{S}_{7}|\mathsf{E}_{2})+I(\mathsf{R}:\mathsf{S}_{8}|\mathsf{E}_{1});\\
I(\mathsf{R},\mathsf{E}_{1}:\mathsf{S}_{5}|\mathsf{E}_{2},\mathsf{E}_{3}) + I(\mathsf{R}:\mathsf{S}_{7}|\mathsf{E}_{1}) \geq I(\mathsf{R},\mathsf{E}_{1},\mathsf{E}_{2}:\mathsf{S}_{8}|\mathsf{E}_{3})+I(\mathsf{R}:\mathsf{S}_{8}|\mathsf{E}_{1},\mathsf{E}_{2});\\
I(\mathsf{R},\mathsf{E}_{1},\mathsf{E}_{2}:\mathsf{S}_{5}|\mathsf{E}_{3}) + I(\mathsf{R}:\mathsf{S}_{6}|\mathsf{E}_{1},\mathsf{E}_{2}) \geq I(\mathsf{R},\mathsf{E}_{1}:\mathsf{S}_{7}|\mathsf{E}_{2},\mathsf{E}_{3})+I(\mathsf{R}:\mathsf{S}_{8}|\mathsf{E}_{1});\\
I(\mathsf{R}:\mathsf{S}_{5}|\mathsf{E}_{1},\mathsf{E}_{2},\mathsf{E}_{3}) + I(\mathsf{R},\mathsf{E}_{1}:\mathsf{S}_{6}|\mathsf{E}_{2}) \geq I(\mathsf{R},\mathsf{E}_{1}:\mathsf{S}_{7}|\mathsf{E}_{2},\mathsf{E}_{3})+I(\mathsf{R}:\mathsf{S}_{8}|\mathsf{E}_{1},\mathsf{E}_{2});\\
I(\mathsf{R},\mathsf{E}_{1},\mathsf{E}_{2}:\mathsf{S}_{5}|\mathsf{E}_{3}) + I(\mathsf{R},\mathsf{E}_{1}:\mathsf{S}_{6}|\mathsf{E}_{2}) + I(\mathsf{R}:\mathsf{S}_{7}|\mathsf{E}_{1}) \geq I(\mathsf{R}:\mathsf{S}_{8}|\mathsf{E}_{1},\mathsf{E}_{2},\mathsf{E}_{3});\\
I(\mathsf{R},\mathsf{E}_{1}:\mathsf{S}_{5}|\mathsf{E}_{2},\mathsf{E}_{3}) + I(\mathsf{R}:\mathsf{S}_{6}|\mathsf{E}_{1},\mathsf{E}_{2}) \geq I(\mathsf{R},\mathsf{E}_{1}:\mathsf{S}_{7}|\mathsf{E}_{2}) \geq I(\mathsf{R}:\mathsf{S}_{8}|\mathsf{E}_{1},\mathsf{E}_{2},\mathsf{E}_{3});\\
I(\mathsf{R}:\mathsf{S}_{5}|\mathsf{E}_{1},\mathsf{E}_{2},\mathsf{E}_{3}) + I(\mathsf{R},\mathsf{E}_{1}:\mathsf{S}_{6}|\mathsf{E}_{2}) \geq I(\mathsf{R},\mathsf{E}_{1}:\mathsf{S}_{7}|\mathsf{E}_{2}) + I(\mathsf{R}:\mathsf{S}_{8}|\mathsf{E}_{1},\mathsf{E}_{2},\mathsf{E}_{3}).
\end{eqnarray}
\end{widetext}
\end{enumerate}
\end{proposition}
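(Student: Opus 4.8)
The plan is to mirror the proof of Proposition~\ref{propositionFour} essentially verbatim, only with the bookkeeping inflated to three pre-processing environments instead of one. Throughout I dilate each channel $\Lambda_{i}$ ($i=1,\dots,7$) to an isometry $V_{i}\colon \mathsf{S}_{i}\to\mathsf{S}_{i+1}\otimes\mathsf{E}_{i}$ and purify $\rho_{1}$ by a reference $\mathsf{R}$, so that after the first four isometries the global state on $\mathsf{R}\otimes\mathsf{E}_{1}\otimes\mathsf{E}_{2}\otimes\mathsf{E}_{3}\otimes\mathsf{E}_{4}\otimes\mathsf{S}_{5}$ is pure; tracing out $\mathsf{E}_{4}$, the environment of the central channel $\Lambda_{4}$, yields the state $\rho$ on $\mathsf{R}\otimes\mathsf{E}_{1}\otimes\mathsf{E}_{2}\otimes\mathsf{E}_{3}\otimes\mathsf{S}_{5}$ that figures in part~(B). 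The first thing to record is purely algebraic: every coherent-information term in Theorem~\ref{Monogamy8} has the form $I_{c}(\rho_{r}:\rho_{s})=H(\mathsf{S}_{s})-H(\mathsf{R},\mathsf{E}_{1},\dots,\mathsf{E}_{r-1},\mathsf{S}_{s})$ with $r\le 4$ and $s\ge 5$, hence is a function of $\rho$ and the post-processing channels $\Lambda_{5},\Lambda_{6},\Lambda_{7}$ alone.

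First I would dispatch $(B)\Rightarrow(A)$, the easy direction. Here $\rho$ is the concrete state produced above, and one simply feeds it into the seven conditional-mutual-information inequalities of part~(B). Expanding each conditional mutual information through its definition and eliminating the purely environmental entropies by the purity relations of the dilated state --- the eight-time-step analogues of Eqs.~(\ref{C1}--\ref{C4}), e.g. $H(\mathsf{R},\mathsf{E}_{1},\mathsf{E}_{2},\mathsf{E}_{3},\mathsf{S}_{5})=H(\mathsf{E}_{4})$ --- each of the seven lines collapses term by term onto one of (\ref{M8a}--\ref{M8g}). This matching is routine but must be verified once for each of the seven inequalities; it is the most calculation-heavy step, yet it involves no idea beyond the entropy identities already used for $n=2$.

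The substantive direction is $(A)\Rightarrow(B)$, where I must show that an \emph{arbitrary} state $\rho$ on $\mathsf{R}\otimes\mathsf{E}_{1}\otimes\mathsf{E}_{2}\otimes\mathsf{E}_{3}\otimes\mathsf{S}_{5}$ together with arbitrary channels $\Lambda_{5},\Lambda_{6},\Lambda_{7}$ can be realised as an eight-time-step Markov process, so that Theorem~\ref{Monogamy8} applies. The realisation reverse-engineers the pre-processing half of the chain: purify $\rho$ to a pure state on $\mathsf{R}\otimes\mathsf{E}_{1}\otimes\mathsf{E}_{2}\otimes\mathsf{E}_{3}\otimes\mathsf{E}_{4}\otimes\mathsf{S}_{5}$, fix a nested family of purifications $\chi^{(k)}$ of the marginals $\chi_{\mathsf{R},\mathsf{E}_{1},\dots,\mathsf{E}_{k}}$, and peel the environments off one at a time. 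At stage $k$ the partially built state $\chi^{(k-1)}$ and the target $\chi^{(k)}$ share the marginal on $\mathsf{R},\mathsf{E}_{1},\dots,\mathsf{E}_{k-1}$, so Lemma~\ref{Lemma1} supplies a channel $\mathsf{S}_{k}\to\mathsf{S}_{k+1}\otimes\mathsf{E}_{k}$ relating them; since it carries one pure state to another it acts as an isometry on the relevant support and is therefore the dilation of a genuine channel $\Lambda_{k}$. Invoking Lemma~\ref{Lemma1} once for each of $\mathsf{E}_{1},\mathsf{E}_{2},\mathsf{E}_{3}$ and the central $\mathsf{E}_{4}$ reconstructs $\Lambda_{1},\dots,\Lambda_{4}$, exactly as the lemma was used twice in Proposition~\ref{propositionFour}. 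With the process in hand, Theorem~\ref{Monogamy8} yields (\ref{M8a}--\ref{M8g}) and the same entropy dictionary transports them back into the seven inequalities of part~(B).

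I expect the main obstacle to be precisely this iterated realisation: one must order the peeling so that the marginal-matching hypothesis of Lemma~\ref{Lemma1} holds at every stage (an already-spawned environment is never touched again), and keep the purifying dimensions large enough that each intermediate purification $\chi^{(k)}$ exists. This is conceptually identical to the $n=2$ construction but is bookkeeping-intensive, and it is the only genuinely non-mechanical step; everything else is the propagation of strong subadditivity through the entropy identities. Because the realisation is performed once and simultaneously validates all seven inequalities, no separate argument is needed per line of Theorem~\ref{Monogamy8}, and the equivalence follows as a single package.
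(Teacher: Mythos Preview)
Your proposal is correct and follows precisely the route the paper itself indicates: it explicitly states that the proof of Proposition~\ref{propositionEight} ``follows the same steps as Proposition~\ref{propositionFour} and is left absent here,'' and your plan---trivial $(B)\Rightarrow(A)$ via the purity identities, and $(A)\Rightarrow(B)$ by iterating Lemma~\ref{Lemma1} to reverse-engineer the pre-processing channels $\Lambda_{1},\dots,\Lambda_{4}$---is exactly that generalisation. The only caveat is that the paper does not spell out the details, so your more explicit bookkeeping (nested purifications, one invocation of Lemma~\ref{Lemma1} per environment) is in fact more complete than what the paper provides.
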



\begin{widetext}
\subsection{Choi state of a CPTP map as an action of a CP map on the adjoint space}\label{cpslide}

Below we show how the Choi state of a CPTP map is identical to the action of a unital CP on the adjoint space.
\begin{align}
    \mathrm{A}\otimes \mathrm{id}(\Psi^+) 
    =& \sum_{ijk} A_k \otimes \openone \left| ii \right> \!\left< jj \right| A_k^\dag \otimes \openone \\
     =& \sum_{ijk}\left(\sum_{mn} a^{(k)}_{nm} \left|n\right>\!\left< m \right| \otimes \openone \right)
     \left| ii \right>\!\left< jj \right| 
     \left( \sum_{rs}  a^{(k)*}_{rs} \left|s\right>\!\left< r \right| \otimes \openone  \right) \notag\\
     =& \sum_{ijk}\left(\openone \otimes \sum_{mn} a^{(k)}_{nm} \left|m\right>\!\left<n \right|  \right)
     \left| ii \right>\!\left< jj \right| 
     \left( \openone \otimes \sum_{rs}  a^{(k)*}_{rs} \left|r\right>\!\left<s \right|  \right) \notag\\
     =& \sum_{ijk} \left(\openone \otimes A_k^\dag \left| ii \right> \!\left< jj \right| A_k \otimes \openone \right)^T
    =: \mathrm{id} \otimes \mathrm{A}^\dag (\Psi^+) 
\end{align}
The adjoint channel will be unital if the $\mathrm{A}$ is trace preserving. Both channels are CP.
\end{widetext}



\bibliography{refs}

\end{document}